\let\csname equation*\endcsname\relax
\let\csname endequation*\endcsname\relax
\newcommand{\p}{\partial}
\newcommand{\dd}{{\rm d}}
\newcommand{\bd}{\begin{definition}}                
\newcommand{\ed}{\end{definition}}                  
\newcommand{\bc}{\begin{corollary}}                 
\newcommand{\ec}{\end{corollary}}                   
\newcommand{\bl}{\begin{lemma}}                     
\newcommand{\el}{\end{lemma}}                       
\newcommand{\bp}{\begin{proposition}}            
\newcommand{\ep}{\end{proposition}}                
\newcommand{\bere}{\begin{remark}}                  
\newcommand{\ere}{\end{remark}}                     
\newcommand{\bt}{\begin{theorem}}
\newcommand{\et}{\end{theorem}}
\newcommand{\bit}{\begin{itemize}}
\newcommand{\eit}{\end{itemize}}
\newtheorem{theorem}{Theorem}[section]
\newtheorem{corollary}[theorem]{Corollary}
\newtheorem{lemma}[theorem]{Lemma}
\newtheorem{proposition}[theorem]{Proposition}
\theoremstyle{definition}
\newtheorem{definition}[theorem]{Definition}
\theoremstyle{remark}
\newtheorem{remark}[theorem]{Remark}
\begin{document}

\title{Raychaudhuri equation and singularity theorems in Finsler spacetimes}


\author{E Minguzzi}
\address{Dipartimento di Matematica e Informatica ``U. Dini'', Universit\`a
degli Studi di Firenze, Via S. Marta 3,  I-50139 Firenze, Italy.}

\ead{ettore.minguzzi@unifi.it}

\date{}


\begin{abstract}
\noindent The Raychaudhuri equation and its consequences for chronality are studied in the context of  Finsler spacetimes. It is proved that  the notable singularity theorems of Lorentzian geometry extend to the Finslerian domain. Indeed, so do the theorems by  Hawking, Penrose, Hawking and Penrose, Geroch, Gannon, Tipler, or Kriele, but also the Topological Censorship theorem  and so on. It is argued that the notable results in causality theory connected to achronal sets, future sets, domains of dependence, limit curve theorems, length functional, Lorentzian distance, geodesic connectedness, extend to the Finslerian domain. Results concerning the spacetime asymptotic structure, horizons differentiability and conformal transformations are also included.
\end{abstract}


\section{Introduction}
The prediction of singularities is one of the most interesting and surprising features of general relativity.  Remarkably  this theory is able to signal through these results its own limits of application. We refer the reader to \cite{geroch68,tipler80,canarutto88,senovilla97} for excellent discussions on the physical interpretation of singularity theorems.

The purpose of this work is to show that all the notable singularity theorems of Lorentzian geometry extend to the Finslerian domain.  The translation of these results from the Lorentzian to the Finslerian domain will be mostly word for word provided some   preliminary work is done in order to build the correspondence and clarify some concepts, from the definition of expansion of a congruence, to the definition of the genericity  and energy conditions. Of course, we will have first to obtain and study  a Finslerian Raychaudhuri equation. The reader may find another  derivation with application to cosmology in \cite{stavrinos09,stavrinos12,kouretsis14,basilakos13}.


Our philosophy will be that of avoiding excessive use of linear Finsler connections.
We recall that in Finsler geometry one has both a non-linear connection on the bundle $TM\to M$ and many notable Finsler (linear) connections on the vertical bundle $V(TM\backslash 0) \to TM\backslash 0$, say Berwald, Cartan, Chern-Rund and Hashiguchi. Furthermore, given a section $s\colon M\to TM\backslash 0$, one has  four more notable connections obtained by  pulling back the Finsler connections to $TM\to M$ (for a study of the pullback connection see \cite{ingarden93}), to which we can add the Levi-Civita connection $\nabla^{s^*g}$ of the pullback metric $s^*g$.
 Among these connections the non-linear connection is more primitive, in fact all the notable Finsler connections induce the same non-linear connection. Since we wish to show  that causality theory for Finslerian spacetimes is completely independent of the Finsler connection used, we shall use (linear) Finsler connections minimally and only as tools in proofs.
  For a recent investigation of  Penrose's theorem in Finsler geometry
 written with a different philosophy the reader might consult \cite{aazami14}. For other papers related to causality in Finsler spacetimes the reader is referred to \cite{beem70,beem74,ishikawa81,perlick06,skakala09,gallego12,lammerzahl12,pfeifer11,pfeifer12,minguzzi13c,minguzzi13d,kouretsis14} while for the dynamics of Finsler gravity are also interesting \cite{ishikawa80,miron92,rutz93,gibbons07,voicu10,castro12,silagadze11,li14}.

We wish to mention one difficulty we met in the generalization of singularity theorems.
These theorems use results which relate the existence of conjugate (focusing) points over null geodesics (resp.\ hypersurfaces) with their chronality.
The classical proofs of this fact, based on the study of curve variations, are quite technical and are, saved for Beem et al.\ proof \cite[Theor.\ 10.72]{beem96}, somewhat incomplete. Galloway has developed a cleaner version \cite{galloway96b} (the causality lemma) which, however, is still algebraic and somewhat technical. It can be expected that the extension of these proofs to the Finslerian domain could only  make them longer and more involved.
Fortunately, we shall obtain a different and rather simple topological proof, which we believe to be interesting already in the Lorentzian case. It does not involve neither curve variations nor the Morse index theory for geodesics.

 The reader will also find many other  Finslerian causality results which show that several other ingredients entering the classical Lorentzian proofs admit a Finslerian formulation.

We end this introductory section with some considerations on the present status of Finsler gravity theory.

While in recent years there have been improvements in the geometrical understanding of the theory, there is still much to be done in the study of dynamics. In Lorentz-Finsler geometry many
tensors reduce to Einstein's for metrics independent of velocity, while no choice seems to imply a satisfactory conservation law over $M$. Either further conditions need to be imposed, or the conservation should be understood over $TM$  \cite{rund62,shimada77,ishikawa80,miron87,ikeda81,anastasiei87,minguzzi14c}. As a consequence, the dynamical equations are less constrained than one might have hoped for.

All authors seem to agree on the validity of the vacuum equation (see next section for notations)
\begin{equation}
\textrm{Ric}(v)=0,
\end{equation}
indeed, it is implied by almost every choice of dynamical equations that has been proposed so far, starting from the first proposal by Horvath \cite{horvath50,horvath52}. Not all authors obtained this equation from a tensorial generalization of Einstein's. Rutz \cite{rutz93}, for instance, argued for its validity using an analogy based on the the Jacobi deviation equation (cf.\ Eq. (\ref{cmo})). The reader might equivalently infer its validity from the form of the Raychadhuri equation (\ref{ray}), to be derived in the next sections. While the dynamical equations of general relativity are uniquely determined by the conservation of energy-momentum and a few other assumptions, in Finsler geometry the vacuum equation $\textrm{Ric}(v)=0$ looks more like a reasoned guess. Consensus has yet to be reached on its generalization in presence of matter, and furthermore,
several authors suggested to impose further equations in order to constrain the vertical degrees of freedom \cite{ishikawa80,miron87} (e.g.\ we argued in favor of a vanishing mean Cartan torsion \cite{minguzzi14c}).

Researchers have not been discouraged from looking at the dynamics of the theory (often changing dynamical equations from  work to work), though a study of the Cauchy problem has yet to be performed. Indeed, many authors have given Finslerian generalizations of the Schwarzschild's metric or of the Friedmann-Robertson-Walker metric \cite{asanov92,rutz93,lammerzahl12,basilakos13,li14}. Others have investigated the effect on the dispersion relations \cite{girelli07,lammerzahl09}. They started from a metric ansatz dependent on some anisotropic parameters or functions. Clearly, since any dynamical equation should reduce to Einstein's in the quadratic Lagrangian case, the usual Lorentzian solutions solve the Finslerian dynamical equations. Thus these solutions pass the standard observational tests for sufficiently small anisotropic parameters.



We were motivated to study Finsler gravity by the following argument concerning the small scale structure  of spacetime. If the smoothness of spacetime is just an emerging feature, as it is widely held, then its mathematical structure should be that of a (quasi-pseudo)metric space, while the identification of a Riemannian metric would be just the result of an average or an observational phenomenon.
%
%
%
It is known (Busemann-Mayer's theorem \cite{tamassy08}) that a  metric on a sufficiently regular manifold determines a Finsler Lagrangian rather than a Riemannian metric, thus Finsler gravity should be expected as a  manifestation of the small scale (quantum) features of spacetime itself.

Physically, several questions remain to be answered: what is the source of anisotropy? Is it just a remnant of an anisotropic condition at beginning of the Universe, or there are specific anisotropic sources? Should the anisotropy be expected at microscopic scales as we suggested, the isotropy detected in  laboratory experiments being just the result of observational averaging, or should it also  be searched at large, possibly cosmological scales? In order to answer these questions it is necessary to determine the theory as completely as possible, proceeding on firm ground. Causality theory gives us this opportunity since it allows us to explore the theory resting on geometrical objects of physical significance.

\section{Elements of Lorentz-Finsler geometry} \label{sec}

The purpose of this section is mainly that of fixing  notation and terminology but can also serve as a fast introduction to Finsler geometry. Unfortunately, different schools have developed different conventions, thus we shall
give some key coordinate expressions to allow the reader to make fast correspondences with notations he or her might be used to. Of course, the objects introduced below can be given coordinate-free formulations; for those and for other introductions to Finsler geometry the reader is referred to \cite{antonelli93,abate94,bao00,shen01,dahl06,szilasi14,minguzzi14c}.

Let $M$ be a paracompact, Hausdorff, connected,\footnote{Subsequently ``$n$'' will also be used to denote a lightlike vector field, we hope that this fact will not generate confusion.} $n$+1-dimensional manifold. Let $\{x^\mu\}$ denote a local chart on $M$ and let $\{ x^\mu,v^\nu\}$ be the induced local chart on $TM$.
The Finsler Lagrangian is a  function on the slit tangent bundle $\mathscr{L}\colon TM\backslash 0 \to \mathbb{R}$ positive homogeneous of degree two in the velocities, $\mathscr{L}(x,sv)=s^2 \mathscr{L}(x,v)$ for every $s>0$.  The metric is defined as the Hessian  of $\mathscr{L}$ with respect to the velocities
\begin{equation}
g_{\mu \nu}(x,v)= \frac{\p^2 \mathscr{L}}{\p v^\mu \p v^\nu},
\end{equation}
and in index free notation will be also denoted  $g_v$ to stress the dependence on the velocity. This Finsler metric provides a map $g\colon TM\backslash 0 \to  T^*M \otimes T^*M$.

Lorentz-Finsler geometry is obtained whenever $g_v$ is Lorentzian, namely of signature $(-,+,\cdots,+)$ (in the Finsler, hence positive definite case, one often works with a function $F$, $\mathscr{L}=F^2/2$, in place of $\mathscr{L}$).
The just given definition of Lorentz-Finsler manifold is due to John Beem \cite{beem70}. We note that it is particularly convenient to work with a Lagrangian defined on the slit bundle $TM\backslash 0$ since the theory of Finsler connections traditionally has been developed on this space. Although the results of causality theory depend just on the Lagrangian restricted to the causal cone $\mathscr{L}\le 0$, such restriction is not needed since it does not bring more generality  and spoils the existence of convex neighborhoods. If one is given a Lagrangian defined just over the future causal cone it convenient, as a first step, to extend it on the whole slit tangent bundle, see \cite{minguzzi14h} for a complete discussion.

Let us recall some elements on the geometry of pseudo-Finsler connections (the reader is referred to \cite{minguzzi14c}). The Finsler Lagrangian allows us to define the geodesics as the stationary points  of the functional $\int \mathscr{L}(x,\dot x)\dd t$. The Lagrange equations are of second order and it turns out that a good starting point for the introduction of the Finsler connections is the notion of  {\em spray}.

We recall that a spray over $M$ can be locally characterized  as  second order differential equation
\[
\ddot x^\alpha+2G^\alpha(x,\dot x)=0,
\]
where $G^\alpha$ is positive homogeneous of degree two: $G^\alpha(x,s v)=s^2 G^\alpha(x,v)$ for every $s>0$. Let $E=TM\backslash 0$, and let $\pi_M\colon E\to M$ be the usual projection. This projection determines a vertical space $V_e E$ at every point $e\in E$. A non-linear connection is a splitting of the tangent space $TE=VE\otimes HE$ into vertical and horizontal bundles (for an introduction to the notion of non-linear connection the reader is referred to \cite{modugno91,michor08}). A basis for the horizontal space is given by
\[
\Big\{\frac{\delta \ }{\delta x^\mu}\Big\}, \qquad \frac{\delta}{\delta x^\mu}=\frac{\p}{\p x^\mu}-N^\nu_\mu(x,v) \frac{\p}{\p v^\nu},
\]
where the coefficients $N^\nu_\mu(x,v)$ define the non-linear connection and have suitable transformation properties under change of coordinates.
The curvature of the non-linear connection measures the non-holonomicity of the horizontal distribution
\begin{equation} \label{nsp}
\left[ \frac{\delta}{\delta x^\alpha}, \frac{\delta}{\delta x^\beta}\right]= -R^\mu_{\alpha \beta} \frac{\p}{\p v^\mu}, \qquad R^\mu_{\alpha \beta}=\frac{\delta N^\mu_\beta}{\delta x^\alpha}-\frac{\delta N^\mu_\alpha}{\delta x^\beta}.
\end{equation}
We can define the covariant derivative for the non-linear connection as follows. Given a section $s\colon U\to E$, $U\subset M$,
\[
 D_{\xi} s^\alpha=\Big(\frac{\p s^\alpha}{\p x^\mu} +N^\alpha_\mu\big(x, s(x)\big) \Big)\xi^\mu .
\]
The flipped derivative\footnote{
Unfortunately, some authors call it covariant derivative \cite{shen01}, while this term should be reserved to $D$.} is instead
\[
{\tilde{D}_{\xi}} s^\alpha=\frac{\p s^\alpha}{\p x^\mu} \, \xi^\mu+N^\alpha_\mu(x, \xi) s^\mu .
\]
and although well defined {\em is not} a covariant derivative in the standard sense since it is non-linear in the derivative vector $\xi$.
 Observe that if $X,Y\colon M\to TM$ are vector fields then
 \begin{equation} \label{kkq}
 {\tilde D_{X}} Y- D_Y X=[X,Y].
  \end{equation}
A geodesic is a curve $x(t)$ which satisfies $D_{\dot x} \dot{x}=0$, or equivalently ${\tilde D_{\dot x}} \dot x=0$.
We shall only be interested on the non-linear connection determined by a spray as follows \[N^\mu_\alpha=G^\mu_\alpha:=\p G^\mu/\p v^\alpha.\] The geodesics of this non-linear connection coincide with the integral curves of the spray. Furthermore, the geodesics of the spray will be  the stationary point of the action functional $\int\! \mathscr{L}\dd t$ thus
\begin{align}
2 {G}^\alpha(x,v)&= g^{\alpha\delta}\Big( \frac{\p^2\mathscr{L} }{\p x^\gamma \p v^\delta} \,v^\gamma -\frac{\p\mathscr{L} }{\p x^\delta } \Big) \label{axo}\\
&=\frac{1}{2}\, g^{\alpha \delta} \Big( \frac{\p}{\p x^\beta} \,g_{\delta \gamma}+\frac{\p}{\p x^\gamma} \, g_{\delta \beta}-\frac{\p}{\p x^\delta}\, g_{\beta \gamma}\Big) v^\beta v^\gamma . \label{axu}
\end{align}

It can be mentioned that in Finsler geometry one can define the linear Finsler connection $\nabla$, namely splittings of the vertical bundle $\pi_E\colon VE\to E$, $E=TM\backslash 0$. The  Berwald, Cartan, Chern-Rund and Hashiguchi connections are of this type. They are referred as {\em notable} Finsler connections.  Although different, they are all compatible with the same non-linear connection. In fact, the covariant derivative $X\to \nabla_X L$ of the Liouville vector field $L\colon E\to VE$, $L=v^\alpha\p/\p v^\alpha$, vanishes precisely over a $n$+1-dimensional distribution which determines a non-linear connection. This distribution is the same for all these connections and is determined by the spray as mentioned.

Each Finsler connection $\nabla$ determines two covariant derivatives $\nabla^H$ and $\nabla^V$ respectively being obtained from $\nabla_{\check X}$ whenever $\check X$ is the horizontal or the vertical lift of a vector $X\in TM$. In particular $\nabla^H$ is determined by local connection coefficients $H^\alpha_{\mu \nu}(x,v)$ which are related to those of the non-linear connection by (regularity) $N^\alpha_\mu(x,v)=H^\alpha_{\mu \nu}(x,v)v^\nu$. Examples are the Berwald connection
\[
H^\alpha_{\mu \nu}:=G^\alpha_{\mu \nu}:=\frac{\p}{\p v^\nu}\,G^\alpha_\mu,
\]
or the Chern-Rund or Cartan connection for both of which $\nabla^H g=0$ and hence
\begin{equation} \label{sog}
H^\alpha_{\beta \gamma}:=\Gamma_{\beta \gamma}^{\alpha}:=\frac{1}{2} g^{\alpha \sigma} \Big( \frac{\delta}{\delta x^\beta} \,g_{\sigma \gamma}+\frac{\delta}{\delta x^\gamma} \, g_{\sigma \beta}-\frac{\delta}{\delta x^\sigma}\, g_{\beta \gamma}\Big).
\end{equation}
The difference
\begin{equation} \label{lan}
L_{\beta \gamma}^{\alpha}=G_{\beta \gamma}^{\alpha}-\Gamma_{\beta \gamma}^{\alpha}
\end{equation}
is the Landsberg (Finsler) tensor. The tensor $L_{\alpha \beta \gamma}(x,v)=g_{\alpha \mu}(x,v) L^{\mu}_{\beta \gamma}(x,v)$ is  symmetric and  $L_{\alpha \beta \gamma}(x,v) v^\gamma=0$.

In general, whenever $\nabla^H$ is used, it should be clear or made clear at which point of $TM$  is the expression evaluated: one often speaks of {\em support vector}. Observe that if $X\colon E\to VE= E\times_M TM$ is a Finsler field whose components do not depend on $v$, we have at the support vector $u$: $\nabla^H_u X=\tilde D_u X$; while at the support vector $X$, $\nabla^H_u X=D_u X$. For this reason, whenever possible we use directly $D$ or $\tilde D$ in place of $\nabla^H$.


A property of the flipped derivative, which is a consequence of the horizontal compatibility of the  Chern-Rund or Cartan connections with the metric, is
\begin{equation}
\tilde{D}_u g_u(X,Y)= g_u(\tilde{D}_u X,Y)+g_u(X,\tilde{D}_u Y),
\end{equation}
for every vector $u\in T_p M\backslash 0$  and fields $X,Y\colon M\to  TM$. The linearity of the map $X\mapsto \tilde{D}_u X$ implies that $\tilde{D}_u$ extends to one-forms and tensors in the usual way, so the previous equation is simply
\begin{equation} \label{aao}
\tilde{D}_u g_u=0.
\end{equation}
Together with Theor.\ \ref{tiu} the next proposition will allow us to work directly  with $D$ or $\tilde D$, reducing recurse to Finsler connections.
\begin{proposition}
For every $u\colon U \to TM\backslash 0$, $U\subset M$, and $X,Y\colon U\to  TM$
\begin{align}
\frac{1}{2} \,\p_X g_u(u,u)=g_u(u,D_Xu)&=\p_u g_u(u,X)-g_u(\tilde D_u u,X)+g_u(u,[X,u]), \label{doi}\\
\p_X g_u(u,Y)-\p_Yg_u(u,X)&=g_u(Y,D_X u)-g_u(X,D_Yu)+g_u(u,[X,Y]). \label{doj}
\end{align}
\end{proposition}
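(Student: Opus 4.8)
The plan is to compute in the induced coordinates $\{x^\mu,v^\nu\}$ and to hang everything on the momentum one-form $\omega$ associated with $u$, with components $\omega_\mu:=g_{\mu\nu}(x,u)u^\nu$, so that $g_u(u,Y)=\omega_\mu Y^\mu$. Euler's theorem for the $2$-homogeneous $\mathscr{L}$ gives $g_{\mu\nu}(x,v)v^\nu=\p\mathscr{L}/\p v^\mu$ and $g_{\mu\nu}v^\mu v^\nu=2\mathscr{L}$, so along the section $u$ one has simply $\omega_\mu=(\p\mathscr{L}/\p v^\mu)(x,u)$ and $g_u(u,u)=2\mathscr{L}(x,u)$. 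I record three homogeneity facts used throughout: $G^\alpha_\mu v^\mu=2G^\alpha$; the object $\p g_{\alpha\beta}/\p v^\gamma$ (a third $v$-derivative of $\mathscr{L}$) is totally symmetric; and it is annihilated by $v$, $(\p g_{\alpha\beta}/\p v^\gamma)v^\beta=0$, which follows by differentiating $g_{\alpha\beta}v^\beta=\p\mathscr{L}/\p v^\alpha$.

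The one nontrivial input, from which both displayed formulas follow by Leibniz bookkeeping, is the horizontal constancy of the Lagrangian,
\begin{equation}\label{keyhc}
\frac{\p\mathscr{L}}{\p x^\mu}=\frac{\p\mathscr{L}}{\p v^\alpha}\,G^\alpha_\mu .
\end{equation}
I obtain it from (\ref{axo}): contracting that identity with $g_{\beta\alpha}$ gives $2g_{\beta\alpha}G^\alpha=(\p^2\mathscr{L}/\p x^\gamma\p v^\beta)v^\gamma-\p\mathscr{L}/\p x^\beta$; contracting once more with $v^\beta$ and using Euler in the form $v^\beta\,\p^2\mathscr{L}/\p x^\gamma\p v^\beta=2\,\p\mathscr{L}/\p x^\gamma$ yields the scalar relation $g_{\alpha\beta}v^\beta G^\alpha=\tfrac12(\p\mathscr{L}/\p x^\gamma)v^\gamma$. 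Differentiating this last relation in $v^\mu$, discarding the term carrying $(\p g/\p v^\mu)v=0$, and subtracting the first contraction, the mixed $x$-$v$ second derivatives cancel and (\ref{keyhc}) drops out. A further differentiation of (\ref{keyhc}) in $v^\nu$, again using the symmetry of $G^\alpha_{\mu\nu}=\p^2G^\alpha/\p v^\mu\p v^\nu$ and $(\p g/\p v)v=0$, gives its antisymmetric companion $g_{\alpha\nu}G^\alpha_\mu-g_{\alpha\mu}G^\alpha_\nu=\p^2\mathscr{L}/\p x^\mu\p v^\nu-\p^2\mathscr{L}/\p x^\nu\p v^\mu$.

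With these in hand (\ref{doi}) is immediate. Writing $D_Xu^\alpha=X^\mu\p_\mu u^\alpha+G^\alpha_\mu(x,u)X^\mu$, the contraction $g_u(u,D_Xu)=\omega_\alpha X^\mu\p_\mu u^\alpha+\omega_\alpha G^\alpha_\mu X^\mu$ becomes, by (\ref{keyhc}), $\omega_\alpha X^\mu\p_\mu u^\alpha+X^\mu(\p\mathscr{L}/\p x^\mu)$, which is exactly the chain-rule expansion of $\tfrac12\p_Xg_u(u,u)=\p_X\mathscr{L}(x,u)$; this is the first equality. For the second I expand the three terms on the right: $\p_ug_u(u,X)$ by Leibniz with $\p_\mu\omega_\alpha=\p^2\mathscr{L}/\p x^\mu\p v^\alpha+g_{\beta\alpha}\p_\mu u^\beta$; the term $g_u(\tilde D_uu,X)$ using $\tilde D_uu^\alpha=u^\mu\p_\mu u^\alpha+2G^\alpha$ together with $g_{\alpha\beta}G^\alpha=\tfrac12((\p^2\mathscr{L}/\p x^\gamma\p v^\beta)u^\gamma-\p\mathscr{L}/\p x^\beta)$; and $g_u(u,[X,u])$ from the definition of the bracket. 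Collecting, the two $\omega_\alpha u^\mu\p_\mu X^\alpha$ pieces cancel, the $\p u$ pieces telescope, and the mixed $x$-$v$ second derivatives cancel pairwise, leaving $\omega_\alpha X^\mu\p_\mu u^\alpha+X^\mu(\p\mathscr{L}/\p x^\mu)=g_u(u,D_Xu)$.

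Finally (\ref{doj}) says that its left-hand side minus $g_u(u,[X,Y])$ is the exterior derivative $\dd\omega(X,Y)$ of the pulled-back momentum form. Expanding $\p_Xg_u(u,Y)-\p_Yg_u(u,X)$ by Leibniz, the terms that differentiate $X$ and $Y$ reassemble into $\omega([X,Y])=g_u(u,[X,Y])$, while the remainder is $X^\mu Y^\nu(\p_\mu\omega_\nu-\p_\nu\omega_\mu)$. The $g_{\beta\nu}\p_\mu u^\beta-g_{\beta\mu}\p_\nu u^\beta$ part of this antisymmetrized derivative already supplies the $\p u$ content of $g_u(Y,D_Xu)-g_u(X,D_Yu)$, and the missing spray content $g_{\alpha\nu}G^\alpha_\mu-g_{\alpha\mu}G^\alpha_\nu$ is exactly the mixed second-derivative term $\p^2\mathscr{L}/\p x^\mu\p v^\nu-\p^2\mathscr{L}/\p x^\nu\p v^\mu$ by the antisymmetric companion of (\ref{keyhc}). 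I expect the only real obstacle to be the bookkeeping behind (\ref{keyhc}) and its companion: one must watch which second derivatives are symmetric in the coordinate indices and use repeatedly that $(\p g/\p v)v=0$ and that $G^\alpha_{\mu\nu}$ is symmetric, so that the Cartan term and the symmetric spray term drop under the relevant (anti)symmetrizations; once that is secured, both identities are a matter of collecting terms.
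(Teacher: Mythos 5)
Your proof is correct, but it takes a genuinely different route from the paper's. The paper proves both identities in a few lines by working at the support vector $u$ with the Cartan or Chern--Rund horizontal covariant derivative $\nabla^H$, using exactly two of its properties: horizontal metric compatibility (Eq.~(\ref{aao})) and the torsion identity $\nabla^H_XY-\nabla^H_YX=[X,Y]$, together with the observation that $\nabla^H_Xu=D_Xu$ and $\nabla^H_uX=\tilde D_uX$ at the appropriate support vectors. You instead give a self-contained coordinate computation resting only on the spray and Euler homogeneity: your key lemma is the horizontal constancy of the Lagrangian, $\p\mathscr{L}/\p x^\mu=(\p\mathscr{L}/\p v^\alpha)G^\alpha_\mu$ (equivalently $\delta\mathscr{L}/\delta x^\mu=0$), which you correctly extract from Eq.~(\ref{axo}), and its $v$-derivative, which encodes the antisymmetric part of $g_{\alpha\nu}G^\alpha_\mu$; these two facts are precisely the shadows of $\nabla^Hg=0$ that the identities actually need. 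I verified the bookkeeping: the derivation of the key lemma, the cancellation of the mixed $x$--$v$ second derivatives in (\ref{doi}), and the identification of $\dd\omega(X,Y)$ with the Weingarten-type terms in (\ref{doj}) all go through. What each approach buys: the paper's argument is shorter but imports the Chern--Rund machinery from the references; yours is longer but eliminates linear Finsler connections entirely, which is arguably more in the spirit of the paper's stated philosophy of relying only on the non-linear connection, and it makes explicit that the proposition depends on nothing beyond the spray of $\mathscr{L}$ and the homogeneity identities. One cosmetic remark: in deriving the companion identity the symmetry of $G^\alpha_{\mu\nu}$ and $\p^2\mathscr{L}/\p v^\nu\p v^\alpha=g_{\nu\alpha}$ suffice; the relation $(\p g/\p v)v=0$ is not needed at that particular step, though it is needed earlier.
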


Observe that the first identity in (\ref{doi}) implies that the non-linear parallel transport preserves the Finsler length of vectors.

\begin{proof}
Let $\nabla^H$ be the Cartan or Chern-Rund horizontal covariant derivative, then at the support vector $u$, using (\ref{kkq}) and (\ref{aao})
\begin{align*}
\frac{1}{2} \,\p_X g_u(u,u)&=\frac{1}{2} \,\nabla^H_X g_u(u,u)=g_{u}(u,\nabla_X^H u)=g_{u}(u,D_X u)\\&=g_u(u,\tilde{D}_u X+[X,u])=\p_ug_u(u,X)-g_u(\tilde{D}_u u, X)+g_u(u,[X,u]),
\end{align*}
which proves (\ref{doi}). Similarly, still at the support vector $u$
\begin{align*}
&\p_X g_u(u,Y)-\p_Yg_u(u,X)=\nabla_X^H g_u(u,Y)-\nabla^H_Yg_u(u,X)=g_u(u, \nabla_X^HY-\nabla^H_Y X)\\
&+g_u(\nabla_X^Hu,Y)-g_u(\nabla^H_Y u,X)= g_u(Y,D_X u)-g_u(X,D_Yu)+g_u(u,[X,Y]),
\end{align*}
where we used $\nabla^H_X Y-\nabla^H_Y X=[X,Y]$, see \cite[Sect.\ 5.3.1]{minguzzi14c}.
\end{proof}


The horizontal-horizontal curvature $R^{HH}$ of any Finsler connection is related to the curvature of the non-linear connections as follows (see e.g.\ \cite[Eq.\ (67)]{minguzzi14c})
\[
R^{HH} {}^\alpha_{\ \beta \mu \nu}(x,v) v^\beta=R^\alpha_{\mu \nu}(x,v) .
\]
Bao, Chern and Shen \cite{bao00,shen01}  work with just the Chern-Rund connection, and use a contracted tensor
\begin{equation} \label{hhn}
R^\alpha_{\ \beta}(x,v):= R^\alpha_{\ \beta \mu}(x,v) v^\mu=R_{\textrm{ChR}}^{HH} {}^\alpha_{\ \, \mu \beta \nu}(x,v) v^\mu v^\nu,
\end{equation}
where the Chern-Rund HH-curvature is
\begin{equation} \label{vll}
R^{HH}_{\textrm{ChR}}{}^\alpha_{\ \, \beta \gamma \delta}=\frac{\delta}{\delta x^\gamma} \,\Gamma^\alpha_{\beta \delta}-\frac{\delta}{\delta x^\delta} \, \Gamma^\alpha_{\beta \gamma}+\Gamma^\alpha_{\mu \gamma} \Gamma^\mu_{\beta \delta}-\Gamma^\alpha_{\mu \delta} \Gamma^\mu_{\beta \gamma}.
\end{equation}
Actually, as the first identity in (\ref{hhn}) clarifies, the  tensor  $R^\alpha_{\ \beta}$ depends only on the curvature of the non-linear connection and not on the full Finsler connection. Sometimes, following them, we shall denote it simply $R_v$, where the index $v$ stresses the dependence on the point on $E$. Again, following their terminology we shall also write $Ric(v)=\textrm{tr} R_v$.

The tensor $R_v$ is related to the non-commutativity of the covariant and flipped derivatives as follows
\begin{theorem} \label{tiu}
Let $u$ and $X$ be fields on $M$, and let $u$ be pregeodesic, namely $D_uu=f u$, for some function $f$. Then
\begin{equation} \label{com}
D_X \tilde{D}_u u-\tilde{D}_u D_X u-D_{[X,u]} u=R_u(X).
\end{equation}
\end{theorem}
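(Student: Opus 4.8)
The plan is to reduce the non-commutativity identity to a statement purely about the non-linear connection, using the relation $R^{HH}{}^\alpha_{\ \beta\mu\nu}v^\beta = R^\alpha_{\mu\nu}$ from the excerpt, which already tells us that the contracted HH-curvature $R_u(X)$ is insensitive to the choice of Finsler connection. So the freedom I would exploit first is to compute everything at the support vector $u$ using a fixed notable connection, say Cartan or Chern-Rund, where $\nabla^H g = 0$ holds; but since the statement only involves $D$ and $\tilde D$, the cleanest route is to translate those into $\nabla^H$ via the two rules recorded just before the proposition: at the support vector $u$, $\nabla^H_u X = \tilde D_u X$ for a field $X$ whose components are $v$-independent, while $\nabla^H_X u = D_X u$ when the support is taken at the value $u$ of the field.

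First I would rewrite each of the three terms on the left of (\ref{com}) at the appropriate support vector. The term $\tilde D_u u$ is evaluated at support $u$, so it becomes $\nabla^H_u u$; applying $D_X$ then gives $\nabla^H_X(\nabla^H_u u)$ with support at $u$. The middle term $\tilde D_u D_X u$ I would read as $\nabla^H_u(\nabla^H_X u)$, and the last term $D_{[X,u]}u$ as $\nabla^H_{[X,u]}u$. Collecting these, the left-hand side becomes $\nabla^H_X \nabla^H_u u - \nabla^H_u \nabla^H_X u - \nabla^H_{[X,u]} u$, which is exactly the horizontal-horizontal curvature operator $R^{HH}(X,u)$ applied to the horizontal lift of $u$, once one is careful that $[X,u]$ corresponds to the right bracket of horizontal lifts modulo vertical corrections. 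Contracting against $u$ via the first identity in (\ref{hhn}) then yields precisely $R_u(X)$, which is the right-hand side.

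The main obstacle will be bookkeeping the support-vector subtleties and the vertical-derivative corrections. The rules $\nabla^H_u X = \tilde D_u X$ and $\nabla^H_X u = D_X u$ hold only under the stated conditions ($v$-independent components versus support taken at the field value), and in a second-order expression like $D_X\tilde D_u u$ the inner object $\tilde D_u u$ is genuinely $v$-dependent, so applying the outer rule requires tracking the vertical derivative $\nabla^V$ terms generated by differentiating the support. This is where the pregeodesic hypothesis $D_u u = fu$ should do its work: it ensures the vertical-derivative contributions are controlled (and in particular that the extra $\nabla^V$ terms either cancel or reduce to a multiple of the radial/Liouville direction, on which the relevant torsion and curvature contractions vanish). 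I would therefore handle the computation in a chart, expand $D$, $\tilde D$, $[X,u]$ explicitly using the coefficients $N^\alpha_\mu$, and verify that all terms not matching $R^\mu_{\alpha\beta}X^\alpha u^\beta$ cancel, the pregeodesic condition being exactly what kills the leftover pieces. The rest is the routine but delicate algebra of matching $\delta N^\mu_\beta/\delta x^\alpha - \delta N^\mu_\alpha/\delta x^\beta$ from (\ref{nsp}) against the expanded commutators.
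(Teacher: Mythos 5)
Your plan coincides with the paper's proof: it starts from the Chern--Rund $HH$-curvature identity $\nabla^H_X\nabla^H_YZ-\nabla^H_Y\nabla^H_XZ-\nabla^H_{[X,Y]}Z=R^{HH}_{\textrm{ChR}}(X,Y)Z$ with $Y=Z=u$, evaluated at the support vector $u$, converts each term to $D$ or $\tilde D$ via the stated support-vector rules, and identifies the right-hand side with $R_u(X)$ through the contraction in Eq.~(\ref{hhn}). The one point you leave open --- how the pregeodesic hypothesis tames the second-order support-vector issue --- is settled in the paper exactly as you anticipate: since $\tilde D_u u=fu$ has $v$-independent components, one gets $\nabla^H_X\nabla^H_u u=\nabla^H_X(fu)=(\p_Xf)u+fD_Xu=D_X(fu)=D_X\tilde D_u u$, with no residual vertical terms left to chase in coordinates.
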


\begin{proof}
The definition of $HH$-curvature for the Chern-Rund connection reads
\[
\nabla^H_X\nabla^H_YZ-\nabla^H_Y\nabla^H_X Z-\nabla^H_{[X,Y]}Z=R_{\textrm{ChR}}^{HH}(X,Y) Z,
\]
which at the support vector $u$, and with $Y=Z=u$, gives the desired equality. The pregeodesic condition $\tilde{D}_uu=f u$ is used to write, at the support vector $u$, $\nabla^H_X\nabla^H_uu= \nabla^H_X (f u)=(\p_X f) u+ f \nabla^H_X u=(\p_X f) u+ f D_X u=D_X(f u)=D_X \tilde{D}_u u$.
\end{proof}


As a corollary we obtain (see also  \cite[Lemma 6.1.1]{shen01})
\begin{proposition}
Let $x(t,s)$ be a geodesic variation, namely $x_s:=x(\cdot, s)$ is a geodesic for each $s\in (-\epsilon,\epsilon)$. Then defined $J=\p  /\p s\vert_{s=0}$ we have
\begin{equation} \label{cmo}
{\tilde D_{\dot x}}\,\,{\tilde D_{\dot x}} J+ R_{\dot{x}} (J)=0.
\end{equation}
\end{proposition}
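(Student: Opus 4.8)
The plan is to read off the Jacobi equation (\ref{cmo}) directly from the curvature identity of Theorem \ref{tiu}, choosing for the two entries the coordinate vector fields of the variation. I would set $u=\dot x=\p x/\p t$ and $X=J=\p x/\p s$. Since $x(t,s)$ provides (wherever it is an immersion) a two-parameter family of curves, the pushforwards of $\p/\p t$ and $\p/\p s$ commute along the image, so that $[J,\dot x]=0$; these may be extended to genuine fields on a neighbourhood in $M$ without altering their values or bracket along the central geodesic $x_0$. Because $x_0$ is a geodesic, $u=\dot x$ is pregeodesic with $D_u u=0$, so the function $f$ of Theorem \ref{tiu} vanishes and hence also $\tilde D_u u=0$; thus the hypotheses of Theorem \ref{tiu} are met at the support vector $u=\dot x$ along $x_0$.

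First I would substitute $X=J$ and $u=\dot x$ into (\ref{com}), obtaining
\[
D_J\tilde D_{\dot x}\dot x-\tilde D_{\dot x}D_J\dot x-D_{[J,\dot x]}\dot x=R_{\dot x}(J).
\]
The geodesic condition $\tilde D_{\dot x}\dot x=0$ makes the first term equal to $D_J(0)=0$, while $[J,\dot x]=0$ kills the third term, leaving $-\tilde D_{\dot x}D_J\dot x=R_{\dot x}(J)$. Next I would convert $D_J\dot x$ into a flipped derivative by means of the identity (\ref{kkq}): with $X=\dot x$ and $Y=J$ it gives $\tilde D_{\dot x}J-D_J\dot x=[\dot x,J]=0$, hence $D_J\dot x=\tilde D_{\dot x}J$. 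Feeding this back produces $-\tilde D_{\dot x}\tilde D_{\dot x}J=R_{\dot x}(J)$, which is exactly (\ref{cmo}).

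The one point demanding care is the bookkeeping of support vectors and the reduction of the surface fields to fields on $M$. One must check that along $x_0$ every covariant and flipped derivative in (\ref{com}) and (\ref{kkq}) is evaluated at the common support vector $u=\dot x$, as both formulas require, and that the local extensions of $\dot x$ and $J$ off the variation surface are immaterial. For the $X=J$ slot this is immediate, since both sides of (\ref{com}) are tensorial (pointwise linear) in $X$, even though individual terms on the left are not; for the $u=\dot x$ slot one notes that the surviving expressions depend on $\dot x$ only through its derivatives in the surface directions $\p/\p t$ and $\p/\p s$, which are fixed by $x(t,s)$, together with the geodesic constraint. Granting this routine verification, the short computation above constitutes the entire proof, and it is worth remarking that it never invokes a second-order variation of the action but only the first-order curvature identity already established.
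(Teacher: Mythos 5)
Your argument is correct and is exactly the paper's proof, which simply cites Eqs.\ (\ref{kkq}) and (\ref{com}) together with $[J,\dot x]=0$ and $D_{\dot x}\dot x=0$; you have merely written out the substitution $X=J$, $u=\dot x$ and the elimination of the first and third terms explicitly. The remarks on support vectors and extensions off the variation surface are sensible bookkeeping but add nothing beyond what the paper takes as immediate.
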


\begin{proof}
Immediate from Eqs.\ (\ref{kkq}) and (\ref{com}) using $[J,\dot{x}]=0$ and $D_{\dot{x}}\dot{x}=0$.
\end{proof}


Observe that the notion of Jacobi field depends solely on the spray and on its induced non-linear connection, not on the Finsler metric nor on a choice of Finsler connection. Observe also that if $J$ satisfies Eq.\ (\ref{cmo}), then for every constants $a,b$, $J+(a+bt)\dot x$ satisfies the same equation, thus this arbitrariness can be used to adjust the $\dot x$ component of the variational field at two distinct points or to adjust  $J$ and its first derivative ${\tilde D_{\dot x}} J$ at the same point.


We shall need the following result

\begin{proposition} \label{xll}
The  curvature $R_v$ of a Finsler Lagrangian $\mathscr{L}$ satisfies for every $X,Y\in T_pM$
\begin{align}
R_v(v)&=0, \label{mki} \\
g_v(v, R_v(X))&=0, \label{mko}\\
g_v(X, R_v(Y))&=g_v(Y, R_v(X)). \label{mkp}
\end{align}
\end{proposition}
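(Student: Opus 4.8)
The plan is to fix $p\in M$ and $v\in T_pM\backslash 0$ and to establish the three identities pointwise at $(p,v)$. Since $R_v$ is a $(1,1)$-tensor and $g_v$ a bilinear form at $(p,v)$, both sides of each identity depend only on the values of the arguments at $p$, so I am free to extend $v,X,Y$ to whatever fields are most convenient. For (\ref{mki}) I would note that $\dot\gamma$ is the variational field of the reparametrisation variation $x(t,s)=\gamma(t+s)$, all of whose members are geodesics; hence $u:=\dot\gamma$ is a Jacobi field, and (\ref{cmo}) together with $\tilde D_u u=0$ gives $\tilde D_u\tilde D_u u=0$, so $R_u(u)=0$. (Equivalently, $R_v(v)^\alpha=R^\alpha_{\ \beta\mu}v^\beta v^\mu$ vanishes by the antisymmetry in $\beta\mu$ of the non-linear curvature of (\ref{nsp}).)

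The real content is the $g_v$-selfadjointness (\ref{mkp}). Here I would extend $v$ to a geodesic field $u$ and work in coordinates adapted to its flow ($\xi^0$ the affine parameter along the geodesics and $\xi^1,\dots,\xi^n$ labelling them), realising the test vectors as constant-coefficient fields $X,Y$ in this frame, so that $[X,u]=[Y,u]=[X,Y]=0$ near $p$ while $u$ stays geodesic and nonvanishing. From Theorem \ref{tiu} with $\tilde D_u u=0$ and $[X,u]=0$ one gets $R_u(X)=-\tilde D_u\tilde D_u X$ (with $\tilde D_u X=D_X u$ by (\ref{kkq})); applying the horizontal metric compatibility (\ref{aao}) to peel off one $\tilde D_u$, the symmetric term $g_u(\tilde D_u X,\tilde D_u Y)$ cancels upon antisymmetrising, and (\ref{doj}) with $[X,Y]=0$ rewrites what remains as
\[
g_u(Y,R_u(X))-g_u(X,R_u(Y))=-\p_u\big[\,\p_X g_u(u,Y)-\p_Y g_u(u,X)\,\big].
\]

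Introducing the $1$-form $\sigma:=g_u(u,\cdot)=(\p\mathscr{L}/\p v^\mu)(u)\,\dd x^\mu$ (the pulled-back Hilbert form) and the scalar $\phi:=\tfrac12 g_u(u,u)\,(=\mathscr{L}(u))$, the bracket above is exactly $\dd\sigma(X,Y)$, so the antisymmetric part equals $-(\mathcal{L}_u\dd\sigma)(X,Y)$ and everything reduces to showing that the geodesic flow preserves $\dd\sigma$. This I would obtain from (\ref{doi}): taking $X=u$ gives energy conservation $\p_u\phi=g_u(u,D_u u)=0$, while for $Z$ commuting with $u$ the same identity with (\ref{aao}) gives $\p_u g_u(u,Z)=g_u(u,D_Z u)=\tfrac12\p_Z g_u(u,u)=\p_Z\phi$, whereas $\p_Z g_u(u,u)=2\,\p_Z\phi$; these facts amount to $\iota_u\dd\sigma=-\dd\phi$, whence by Cartan's formula $\mathcal{L}_u\dd\sigma=\dd(\iota_u\dd\sigma)=-\dd(\dd\phi)=0$. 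This proves (\ref{mkp}), and (\ref{mko}) then follows at once by setting $Y=v$ and using $R_v(v)=0$, so that $g_v(v,R_v(X))=g_v(X,R_v(v))=0$.

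The step I expect to be the genuine obstacle is the vanishing of this antisymmetric part, i.e.\ $\mathcal{L}_u\dd\sigma=0$. The compatibility identities (\ref{doi})--(\ref{doj}) and (\ref{aao}) are by themselves insufficient to force $R_v$ to be $g_v$-selfadjoint: the naive Wronskian argument for two Jacobi fields is circular, since constancy of the Wronskian is \emph{equivalent} to, not a proof of, the symmetry. What actually breaks the symmetry is the variational origin of the spray, namely that $G^\alpha$ is built from $\mathscr{L}$ through (\ref{axo}); this enters precisely through energy conservation and the exactness of $\mathcal{L}_u\sigma$, equivalently through the flow-invariance of the Hilbert two-form $\dd\sigma$. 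A minor technical point to verify is that the flow-adapted coordinate frame is genuinely available on a neighbourhood of $p$ and that $u$ remains geodesic and nowhere vanishing throughout it.
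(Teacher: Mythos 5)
Your proof is correct, but it takes a genuinely different route from the paper's. The paper disposes of all three identities algebraically: (\ref{mki}) is immediate from the antisymmetry of $R^\mu_{\alpha\beta}$ in (\ref{nsp}), while (\ref{mko}) and (\ref{mkp}) are obtained by citing symmetry identities of the Cartan $HH$-curvature --- skew-symmetry in the first index pair, and the pair-exchange identity (\ref{don}) --- the key point being that the torsion terms on the right-hand side of (\ref{don}) are annihilated by the contraction with $v^\beta v^\delta$ since $C_{\mu\beta\delta}v^\delta=0$. You instead prove the self-adjointness dynamically: extending $v$ to a geodesic field $u$ with commuting transverse coordinate fields, you reduce the antisymmetric part of $g_v(\cdot,R_v(\cdot))$ to $-(\mathcal{L}_u\dd\sigma)(X,Y)$ for the Hilbert form $\sigma=g_u(u,\cdot)$, and kill it via $\iota_u\dd\sigma=-\dd\phi$ and Cartan's formula; (\ref{mko}) then drops out by setting $Y=v$. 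Both the computation leading to $g_u(Y,R_u(X))-g_u(X,R_u(Y))=-(\mathcal{L}_u\dd\sigma)(X,Y)$ (via Theorem \ref{tiu}, (\ref{kkq}), (\ref{aao}) and (\ref{doj})) and the verification of $\iota_u\dd\sigma=-\dd\phi$ from (\ref{doi}) check out, and the tensoriality of both sides of (\ref{mkp}) justifies the reduction to the adapted commuting frame. What your approach buys is self-containedness --- it uses only identities already established in the paper rather than external curvature symmetries imported from \cite{minguzzi14c} --- together with a conceptual explanation of \emph{why} $R_v$ is $g_v$-self-adjoint: it is the flow-invariance of the Poincar\'e--Cartan two-form, i.e.\ precisely the variational origin of the spray, as you correctly observe (for a spray not derived from a Lagrangian the statement would fail). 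What it costs is length and the (harmless, standard) need to extend $v$ to a local geodesic congruence, which the paper's pointwise algebraic argument avoids.
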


\begin{proof}
 The first equation is trivial. Equation (\ref{mko}) follows from the skew-symmetry of the Cartan ${HH}$ curvature on the first two indices (see Eqs.\ (67) and (82) of \cite{minguzzi14c}), namely \[R^{HH}_{\textrm{Car}}\,{}_{\alpha \beta \gamma \delta}=-R^{HH}_{\textrm{Car}}\,{}_{\beta \alpha \gamma \delta}.\]
 Equation (\ref{mkp}) follows e.g.\ contracting Eq.\ (85) of  \cite{minguzzi14c}, namely
\begin{equation} \label{don}
R^{HH}_{\textrm{Car}} {}_{\alpha \beta \gamma \delta}-R^{HH}_{\textrm{Car}} {}_{\gamma \delta \alpha \beta}=R^\mu_{\gamma \alpha} C_{\mu \beta \delta }-R^\mu_{\gamma \beta} C_{\mu \delta \alpha}-R^\mu_{\delta \alpha} C_{\mu \beta \gamma}+R^\mu_{\delta \beta} C_{\mu \gamma \alpha}.
\end{equation}
  where $C_{\alpha \beta \gamma}=\frac{1}{2}\frac{\p}{\p v^\gamma} g_{\alpha \beta}$ is the Cartan torsion, with $v^\beta v^\delta$ (in \cite{minguzzi14c} $v$ is denoted $y$). That is, it follows from the symmetry properties of the Cartan ${HH}$ curvature under exchange of the first pair of indices with the second pair of indices. See also \cite[Eq.\ (6.7)]{shen01}.
\end{proof}



By taking as reference pseudo-Riemannian geometry, let us define a Weyl curvature  as follows
\begin{equation} \label{cin}
C^\mu_{\ \gamma}(x,v)=\Big\{R^{HH} {}^{\mu \beta}_{\ \ \, \gamma \delta}-\frac{4}{n-1}\, \delta^{[\mu}_{[\gamma} R^{HH} {}^{\eta \beta]}_{\ \, \ \eta \delta]}+\frac{2 R^{HH} {}^{\eta \xi}_{\ \  \eta \xi}}{n (n-1)}\,  \delta^{[\mu}_{[\gamma} \delta^{\beta]}_{\delta]}\Big\}g_{\beta \nu} v^\nu v^\delta.
\end{equation}
The endomorphism $X\mapsto C_v(X)$  is easily checked to be traceless and such that $C_v(v)=0$. Furthermore, if $v$ is lightlike and $g_v(v,X)=0$ then
\begin{equation}
C_v(X)= T_v(X)+ \textrm{ terms prop.\ to } v, \quad T_v(X):= [R_v- \frac{\textrm{tr} R_v}{n-1} \,\textrm{Id}](X)
\end{equation}
We did not mention which Finsler connection is used to define $C_v$ since, as will be clear in a moment, what really matters is $T_v$ which depends only on the non-linear connection.


\section{The Raychaudhuri equation: timelike case}

Let $H$ be a $C^2$ spacelike hypersurface and let $u$ be a $C^1$ future directed timelike, normalized, geodesic vector field  orthogonal to it (for the notion of normal vector see \cite{minguzzi13c})
\[
D_u u=0, \qquad g_u(u,u)=-1, \qquad \textrm{ker} g_u(u,\cdot)\vert_H=TH.
\]
It is interesting to observe that $H$ has a natural induced metric given by $g_u$. Since $u$ is timelike for $g_u$ we have that $(H,g_u)$ is a Riemannian manifold.

The flow of $u$ propagates the hypersurface $H$ into a foliation $H_s$, $H=H_0$, $u=\frac{d}{d s}$, at least in a neighborhood of $H_0$ before the development of focusing points. Similarly, the flow propagates a vector field $X$ tangent to $S_0$ into a vector field, denoted in the same way, tangent to the foliation and such that $[u,X]=0$. As a consequence, the foliation remains orthogonal to $u$ because, for every vector field $X$ tangent to the foliation,  we have $\p_u g_u(u,X)=0$ as it follows immediately from Eq.\ (\ref{doi}).

In the domain  of $u$ we consider the  vector bundle $V$ which consists of vectors ${X}\in TM$ orthogonal to $u$: $g_u(u,{X})=0$. Clearly, this bundle has $n$-dimensional fibers.
We introduce  a positive definite (space) metric on $V$
\[
h({X},{Y}):=g_u({X},{Y}),
\]
an endomorphism (shape operator, Weingarten map)
\[
b\colon V_p \to V_p,\qquad   {X}\mapsto b({X}):= D_{{X}} u , 
\]
a second endomorphism
 \[
 \bar{R}\colon V_p \to V_p, \qquad \bar{R}({X}):=R_u( X),
 \]
  and a third endomorphism which is the trace free part of $\bar{R}$
 \[
  \bar{Q}\colon V_p \to V_p,\qquad \bar{Q}({X}):=\bar{R}-\frac{1}{n} \,\textrm{tr} \bar{R} \,Id .
 \]
The definition of $b$ is well posed because, due to Eq.\ (\ref{doi}), $g_u(u,D_X u)=0$, while $\bar{R}$ is well posed thanks to (\ref{mko}). The endomorphism $\bar Q$ can be related with the tensor $C$ defined in the previous section, however this relationship will not be used in what follows (by the way $\bar Q$ does not depend on the chosen Finsler connection while $C$ does).




The endomorphisms $b, \bar{R}, \bar{Q}$ are all self-adjoint with respect to $h$. The self-adjointness  of $\bar{R}$ (and hence of $\bar{Q}$) follows from (\ref{mkp}). In order to show that $b$ is self-adjoint, let $X,Y \in V_p$ and let extend them to two commuting vector fields tangent to the foliation and  denoted in the same way. We have using Eq.\ (\ref{doj})
\begin{align*}
h({X},b({Y}))&=g_u(X,D_Y u)=g_u(Y,D_X u)=h({Y},b({X})).
\end{align*}
Let us prove
\begin{align}
\textrm{tr} \bar{R}&=\textrm{tr} R_u=Ric(u), \label{fid}
\end{align}
Indeed let $\{u,e_1,  $ $\cdots,  e_{n}\}$ be a $g_u$-orthonormal basis of $T_pM$  such that  $\{e_i\}$ is a basis of $V_p$. Observe that $h( e_i,  e_j)=g_u(e_i,e_j)=\delta_{ij}$. Thus using (\ref{mki})
\begin{align*}
\textrm{tr} \bar{R}&=\sum_i h( e_i,\bar{R}( e_i))=\sum_i g_u(e_i, R_u(e_i))= \sum_i g_u(e_i, R_u(e_i))-g_u(u,R_u(u))\\&=\textrm{tr} {R}_u.
\end{align*}

Both endomorphisms $\bar{R}$ and $\bar{Q}$ depend on $u$ at the considered point $p$ but not on the whole  geodesic congruence.
\begin{definition}
In  analogy with Lorentzian geometry define the properties:
\begin{itemize}
\item {\em Timelike convergence condition}:  at every $p$ and for every f.d.-timelike vector $u$, $Ric(u)\ge 0$,
\item {\em Timelike genericity condition}:  every inextendible complete f.d.-timelike geodesic admits some point $p$ at which the tangent vector $u$ satisfies  $\bar{R}(p,u)\ne 0$.
\end{itemize}
\end{definition}
This last condition can also be written in terms of the contracted non-linear curvature $R_u$ (here $u_\nu$ is the one-form $g_u(u,\cdot)$)
\[
u_{[\alpha} R_{u\, \beta] [\mu}\,  u_{\nu]}\ne 0, \textrm{ or equivalently } R_u \ne 0.
\]
The first formulation allows one to joint the timelike genericity condition and the null genericity condition (see below) into a {\em causal} genericity condition.



Let $P\colon TM\to V$ be the  projection with kernel $\textrm{Span} (u)$.
The  derivative ${\tilde D_{u}}$, induces a  derivative $X':=P( {\tilde D_{u}} X)$ on sections of $V$, and hence, as usual, a derivative on endomorphisms as follows $E'(X):=(E(X))'-E(X')$. It  can be observed that if $[u,X]=0$ then $X'={\tilde D_{u}} X$ indeed due to Eq.\ (\ref{kkq}) and (\ref{doi}) $g_u(u,{\tilde D_{u}} X)=g_u(u,{ D_{X}} u)=\p_X g_u(u,u)/2=0$.


\begin{proposition}
The Weingarten map satisfies the Riccati equation
\begin{equation} \label{nkv}
b'=-\bar{R}-b^2,
\end{equation}
\end{proposition}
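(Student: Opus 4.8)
The plan is to compute $b'(X)$ directly from its definition and identify the three terms $-\bar{R}$, $-b^2$ on the right-hand side. Fix a point $p$ and a vector $X\in V_p$. Since $b$ is built from $D_X u$, the natural strategy is to extend $X$ off $p$ so that $[u,X]=0$ (this is exactly the flow-propagation used earlier to build the foliation), because that identity lets me convert horizontal derivatives into flipped derivatives via Eq.~(\ref{kkq}) and trade $D_X u$ for $\tilde D_u X$. Under $[u,X]=0$ the induced derivative simplifies, since the excerpt already records that $X'=\tilde D_u X$ in this case. So I would first write
\begin{equation*}
b'(X)=(b(X))'-b(X')=P\big(\tilde D_u (D_X u)\big)-D_{X'} u,
\end{equation*}
using $P\circ\tilde D_u$ for the primed derivative on $V$ and the definition $b(Z)=D_Z u$.

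Next I would invoke the curvature identity of Theorem~\ref{tiu}. Because $u$ is geodesic, $D_u u=0$ (so $f=0$ and $\tilde D_u u=0$), Eq.~(\ref{com}) with the roles of the fields chosen as $u$ and $X$, together with $[X,u]=0$, collapses to
\begin{equation*}
\tilde D_u D_X u=-R_u(X)=-\bar{R}(X).
\end{equation*}
This is the heart of the computation: it produces the $-\bar R$ term. For the $-b^2$ term I must handle $D_{X'} u$. The point is that $X'=\tilde D_u X=D_X u=b(X)$ when $[u,X]=0$ and $u$ is geodesic (again by Eq.~(\ref{kkq}) since $\tilde D_u X-D_X u=[u,X]=0$), so $D_{X'}u=D_{b(X)}u=b(b(X))=b^2(X)$. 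Assembling the two pieces gives $b'(X)=P(-\bar R(X))-b^2(X)$, and since $\bar R(X)$ and $b^2(X)$ already lie in $V$ the projection $P$ acts trivially, yielding $b'=-\bar R-b^2$.

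The main obstacle I anticipate is the careful bookkeeping of the projection $P$ and the support-vector issues rather than any deep new idea. Specifically I must confirm that $\tilde D_u(D_X u)$ already lands in $V$ up to a multiple of $u$, i.e.\ that its $u$-component is annihilated by $P$; this follows because $\bar R(X)=R_u(X)$ is orthogonal to $u$ by Eq.~(\ref{mko}), and $b(X)=D_X u$ is orthogonal to $u$ by the first identity in Eq.~(\ref{doi}). I should also double-check that the extension of $X$ with $[u,X]=0$ is legitimate and that the final expression is tensorial in $X$ at $p$, independent of the chosen extension, so that the Riccati equation holds as an identity of endomorphisms of $V$. A secondary subtlety is the normalization $g_u(u,u)=-1$: I would verify it plays no role beyond guaranteeing $\tilde D_u u=0$ via the geodesic condition, since homogeneity already forces $D_u u=0\Rightarrow\tilde D_u u=0$ along the integral curve. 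Once these routine checks are in place the identity drops out cleanly.
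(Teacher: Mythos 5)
Your proof is correct and follows essentially the same route as the paper: extend $X$ so that $[u,X]=0$, use Eq.~(\ref{kkq}) to identify $X'=\tilde D_u X=D_X u=b(X)$, and use Theorem~\ref{tiu} with $\tilde D_u u=0$ to get $\tilde D_u D_X u=-R_u(X)$, assembling the two pieces into $b'=-\bar R-b^2$. The routine checks you flag (that $P$ acts trivially because $R_u(X)$ and $D_Xu$ are $g_u$-orthogonal to $u$ by Eqs.~(\ref{mko}) and (\ref{doi})) are exactly the well-posedness facts the paper establishes before stating the proposition.
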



\begin{proof}
Let $X\in V_p$ and extend it in a neighborhood of $p$ so as to remain tangent to the foliation and   in such a way that $[X,u]=0$. Using Eqs.\ (\ref{kkq}) and (\ref{com})
\[
R_u(X)=-\tilde D_u D_X u=-\tilde D_u\tilde D_u X.
\]
Thus
\begin{align*}
b'( X)&=P({\tilde D_{u}} b(X))-b({\tilde D_{u}} X)=P({\tilde D_{u}} {D_X u})-b({{ D_{X}} u})\\ &= P(\tilde D_{u} \tilde D_{u} X)-b(b({X}))=-R_u(X)-b^2({X}),
\end{align*}
which concludes the proof.
\end{proof}
It can also be observed that $h'=0$ because for $X,Y$ vector fields orthogonal to $u$, we have using Eq.\ (\ref{aao})
\begin{align*}
h'(X,Y):=&\, (h(X,Y))'-h(X',Y)-h(X,Y')\\=&\, \p_u g_u(X,Y)-g_u(\tilde{D}_u X, Y)-g_u(X,\tilde{D}_u Y)=0.
\end{align*}
Let us define
\begin{align*}
\theta:&=\textrm{tr} \,b, \\
\bar\sigma:&=b-\frac{1}{n}\, \theta\, {Id},
\end{align*}
so that $\bar{\sigma}$ is the trace-free part of $b$.  They are called {\em expansion} and {\em shear}, respectively.
Let us denote for short $\sigma^2:=\textrm{tr} \bar{\sigma}^2$. A trivial consequence of this definition is $\sigma^2\ge 0$ with equality if and only if $\bar{\sigma}=0$.

Taking the trace and the trace-free parts of  (\ref{nkv})  we obtain
\begin{align}
 \theta'&=-\textrm{Ric}(n)-\sigma^2-\frac{1}{n} \, \theta^2, \qquad (\textrm{Raychaudhuri}) \label{ray} \\
 \bar{\sigma}'&= -\bar{Q}-(\bar{\sigma}^2-\frac{1}{n} \, \textrm{tr} \bar{\sigma}^2 \, {Id})-\frac{2}{n}\, \theta\, \bar{\sigma}, \label{she}
\end{align}
the   term in parenthesis is the trace-free part of $\bar{\sigma}^2$.

%

Let $H$ be a $C^2$ spacelike hypersurface and let $u$ be the $C^1$ future directed timelike  normal.  We  say that the congruence is {\em converging} if $\theta:=\textrm{tr}( {X}\mapsto D_{{X}} u)$ is negative, {\em diverging} if it is positive. This section will be useful in the generalization of Hawking's (1967) singularity theorem or of Hawking and Penrose's singularity theorem (1970).

\section{The Raychaudhuri equation: null case} \label{ral}

In Lorentzian geometry the next result is well known \cite{kupeli87,galloway00}. In the `only if' direction the achronality property follows from the existence of convex neighborhoods. Since the Finslerian theory admits convex neighborhoods this theorem holds as well in the Finslerian case with no alteration in the proof.

\begin{theorem} \label{ppp}
Every $C^2$  hypersurface $H$ is null if and only if it is locally achronal and ruled by null geodesics.
\end{theorem}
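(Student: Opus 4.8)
The plan is to treat the two implications separately, organised around the trichotomy of $C^2$ hypersurfaces in a Finsler spacetime. At each $p\in H$ the hypersurface carries a Finsler normal $n$ in the sense of \cite{minguzzi13c}, i.e.\ a nonzero vector with $g_n(n,\cdot)\vert_{T_pH}=0$, and $H$ is spacelike, timelike, or null at $p$ according to whether $g_n(n,n)$ is negative, positive, or zero. When $n$ is null it is $g_n$-orthogonal to itself, so $n\in T_pH$ and, by a dimension count ($n^\perp$ is $n$-dimensional and contains the $n$-dimensional $T_pH$), $T_pH=n^\perp$, the $g_n$-orthogonal complement of $\mathrm{Span}(n)$. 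This last identity is the structural fact that drives the proof.

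For the \emph{if} direction I would argue by elimination. Assume $H$ is locally achronal and ruled by null geodesics. Since it is ruled by null geodesics, at every $p$ the space $T_pH$ contains the null tangent of a generator, so $H$ cannot be spacelike there (a spacelike hypersurface has positive definite induced metric and admits no causal tangent vector). If $H$ were timelike at $p$, then $T_pH$ would contain a future-directed timelike vector $w$; integrating $w$ within $H$ produces a timelike curve lying in $H$ two of whose points are chronologically related, contradicting local achronality. The only remaining case is that $H$ is null at $p$, and as $p$ is arbitrary, $H$ is a null hypersurface.

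For the \emph{only if} direction, let $H$ be null with null normal field $n$, which by the above is tangent to $H$. To show the generators are null pregeodesics I would extend $n$ and an arbitrary $X\in T_pH$ to vector fields tangent to $H$, so that $[n,X]$ is again tangent to $H$ and hence $g_n(n,[n,X])=0$. Differentiating $g_n(n,X)=0$ along $n$ and using compatibility of the flipped derivative (\ref{aao}) gives $g_n(\tilde{D}_n n,X)+g_n(n,\tilde{D}_n X)=0$. Rewriting $\tilde{D}_n X=D_X n+[n,X]$ by (\ref{kkq}), together with $g_n(n,D_X n)=\tfrac12\p_X g_n(n,n)=0$ from (\ref{doi}) and $g_n(n,[n,X])=0$, yields $g_n(\tilde{D}_n n,X)=0$ for every $X\in T_pH$. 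Since also $g_n(\tilde{D}_n n,n)=\tfrac12\p_n g_n(n,n)=0$ by (\ref{doi}), and $\tilde{D}_n n=D_n n$ by (\ref{kkq}), the vector $D_n n$ is $g_n$-orthogonal to $T_pH=n^\perp$, hence lies in $(n^\perp)^\perp=\mathrm{Span}(n)$. Thus $D_n n=\lambda n$, and a reparametrization turns the generators into null geodesics, just as in the Lorentzian computation of \cite{kupeli87}.

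The achronality of $H$ in this direction is the step I expect to be the main obstacle, since, unlike the geodesic computation, it is not purely algebraic. Here I would rely, exactly as in \cite{kupeli87,galloway00}, on the existence of convex neighborhoods, which is available in the Finslerian setting. In a convex neighborhood $U$ of $p$ the exponential map is a diffeomorphism and the light cones vary smoothly; a null hypersurface is tangent to these cones along its generators, so two points of $H\cap U$ cannot be joined by a timelike geodesic of $U$ without forcing $H$ to enter the chronological future of one of them, against that tangency. Making this rigorous genuinely requires the local causal structure supplied by convex neighborhoods rather than a naive monotonicity argument for a defining function $f$ of $H$, because $\dd f$ is guaranteed to be null only on $H$ itself and not throughout a neighborhood. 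Granting the convex-neighborhood machinery, the Lorentzian argument transfers essentially verbatim, which is why the statement holds in the Finsler category without modification.
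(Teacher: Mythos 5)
Your proposal is correct and follows essentially the same route as the paper, which simply observes that the Lorentzian proof of \cite{kupeli87,galloway00} transfers because convex neighborhoods exist in the Finslerian setting; your pregeodesic computation via (\ref{kkq}), (\ref{aao}) and (\ref{doi}) is exactly the one the paper itself carries out at the start of Section \ref{ral}. You in fact supply more detail than the paper does, and your identification of the achronality step as the one genuinely requiring the convex-neighborhood machinery matches the paper's own remark.
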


Let $H$ be a $C^2$ null hypersurface and let $n$ be a $C^1$ lightlike vector field  tangent to $H$ so that its integral curves are lightlike pregeodesics running over $H$. They satisfy
\[
{\tilde D_{n}} n=D_n n=\kappa n
\]
where $\kappa$ is a function over $H$. The tangent space at a point $p\in H$ is $T_pH=\ker g_n(n,\cdot)$.

\begin{remark} \label{nid}
 Any $C^2$  null hypersurface $H$ (with boundary) can be enlarged to a hypersurface $H'$ by extending  the null generators in the future direction. The question is whether it remains null. The answer is affirmative provided it remains $C^2$.
 Indeed, let $X\in T_pH'$,  at some extended point $p\in H'$. There is a geodesic variation made of hypersurface  generators whose  Jacobi field $J$, $[J,n]=0$, is such that $J(p)=X$. Then using (\ref{kkq}) and (\ref{doi}),
 \[
 \tilde{D}_n g_n(n,J)-\kappa g_n(n,J)=g_n(n,\tilde{D}_n J)=g_n(n,D_Jn)=D_J g_n(n,n)/2=0,
 \]
  which shows that the null condition $g_n(n,J)=0$ will be propagated from $H$ to $H'$.
\end{remark}

On the $C^2$ null hypersurface we consider the  vector bundle $V=TH/\!\!\sim$ obtained regarding as equivalent any two vectors $X, Y\in T_pH$ such that $Y-X\propto n$. Clearly, this bundle has $n-1$-dimensional fibers. Let us denote with an overline $\bar{X}$ the equivalence class of $\sim$ containing $X$. At each $p\in H$, we introduce  a positive definite metric $h(\bar{X},\bar{Y}):=g_n(X,Y)$,
an endomorphism (shape operator, null Weingarten map)
\[
b\colon V_p \to V_p,\qquad   \bar{X}\mapsto b(\bar{X}):= D_{\bar{X}} n:=\overline{D_X n},
\]
 a second endomorphism
 \[
 \bar{R}\colon V_p \to V_p, \qquad \bar{R}(\bar{X}):=\overline{R_n(X)},
 \]
  and a third endomorphism which is the trace free part of $\bar{R}$
 \[
  \bar{C}\colon V_p \to V_p,\qquad \bar{C}(\bar{X}):=\bar{R}-\frac{1}{n-1} \,\textrm{tr} \bar{R} \,Id .
 \]

The definition of $b$ is well posed because, as $D_X$ is linear in $X$ and $D_n n\propto n$ we have $D_{X+a n} n=D_X n+k n$, for some $k$. Moreover, extending $X$ to a vector field tangent to $H$ which commutes with $n$, $g_n(n,D_X n)=g_n(n,\tilde{D}_n X)=\tilde{D}_n g_n(n,X)-g_n(\tilde{D}_nn,X)=0$ which means that $D_Xn\in TH$. The definition of $\bar{R}$ is well posed since $R_n(n)=0$ and by Eq.\ (\ref{mko}) $g_n(n,R_n(X))=0$ which implies that for every $X\in T_pM$, $R_n(X) \in T_pH$.

The endomorphisms $b, \bar{R}, \bar{C}$ are all self-adjoint with respect to $h$. In order to show that $b$ is self-adjoint, let $X,Y \in T_pH$ and let extend them to two commuting vector fields tangent to $H$ and denoted in the same way. We have using Eq.\ (\ref{doj}) with $u=n$
\[
h(\bar{X},b(\bar{Y}))=g_n(X,D_Y n)=g_n(Y,D_X n)=h(\bar{Y},b(\bar{X})).
\]
The self-adjointness of $\bar{R}$ follows from Eq.\ (\ref{mkp}), while that of $\bar{C}$  follows  from that of $\bar{R}$.
Let us prove
\begin{align}
\textrm{tr} \bar{R}&=\textrm{tr} R_n=Ric(n). \label{sid}
\end{align}
Let $\{n,m,e_1,  $ $\cdots,  e_{n-1}\}$ be a $g_n$-basis of $T_pM$  such that $m$ is $g_n$-lightlike, $g_n(m,n)=-1$, and  $\{e_i\}$ is a basis of the spacelike codimension 2 subspace $g_n$-orthogonal to both $n$ and $m$. Observe that $h(\bar e_i, \bar e_j)=g_n(e_i,e_j)=\delta_{ij}$. Thus using (\ref{mki}) and (\ref{mko})
\begin{align*}
\textrm{tr} \bar{R}&=\sum_i h(\bar e_i,\bar{R}(\bar e_i))=\sum_i g_n(e_i, R_n(e_i))\\
&= \sum_i g_n(e_i, R_n(e_i))-g_n(m,R_n(n))-g_n(n, R_n(m))=\textrm{tr} {R}_n.
\end{align*}

\begin{proposition}
We have the equality $\bar{C}(\bar X)=\overline{C_n(X)}$.
\end{proposition}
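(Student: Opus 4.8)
The plan is to reduce the identity to the reduction formula for $C_v$ that was recorded just after Eq.~(\ref{cin}), and then to interpret everything in the quotient bundle $V=TH/\!\!\sim$. Since $\bar{C}$ is by definition the endomorphism $\bar{C}(\bar{X})=\bar{R}(\bar{X})-\frac{1}{n-1}\textrm{tr}\bar{R}\,\bar{X}$ of $V_p$, the whole content of the statement is that passing to the quotient turns $C_n$ into exactly this trace-free part of $\bar{R}$, the extra $n$-directional terms in $C_n$ being harmless.

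First I would check that the hypotheses of the reduction formula are met: for $\bar{X}\in V_p$ we have a representative $X\in T_pH=\ker g_n(n,\cdot)$, so the support vector $n$ is lightlike and $g_n(n,X)=0$. The formula then gives
\[
C_n(X)=T_n(X)+(\textrm{terms prop.\ to } n),\qquad T_n(X)=R_n(X)-\frac{\textrm{tr} R_n}{n-1}\,X.
\]
In the quotient $V$, where every multiple of $n$ is identified with $0$, the correction terms vanish, whence
\[
\overline{C_n(X)}=\overline{T_n(X)}=\overline{R_n(X)}-\frac{\textrm{tr} R_n}{n-1}\,\bar{X}.
\]
Next I would match the right-hand side with $\bar{C}(\bar{X})$ using the definitions at hand: $\bar{R}(\bar{X})=\overline{R_n(X)}$ by construction, and $\textrm{tr}\bar{R}=\textrm{tr} R_n$ by the trace identity (\ref{sid}). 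Substituting these into the definition of $\bar{C}$ yields $\bar{C}(\bar{X})=\overline{R_n(X)}-\frac{\textrm{tr} R_n}{n-1}\,\bar{X}=\overline{C_n(X)}$, as claimed.

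The only delicate point is the bookkeeping in the quotient, so I would make the well-definedness explicit: $\overline{C_n(X)}$ makes sense because $R_n(X)\in T_pH$ (from $g_n(n,R_n(X))=0$, i.e.\ Eq.~(\ref{mko}), which is already what justified the definition of $\bar{R}$), while the trace and identity terms map $T_pH$ into itself, and the residual terms proportional to $n$ are precisely what the equivalence $\sim$ discards. Beyond this the argument is purely formal and requires no fresh curvature computation, since the substantive work was done in establishing the reduction formula after Eq.~(\ref{cin}) together with the trace identity (\ref{sid}); the main obstacle, such as it is, is simply verifying that the $n$-directional ambiguity inherent in the definition of $C_n$ collapses consistently in $V$.
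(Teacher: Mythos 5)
Your argument is correct and coincides with the paper's own proof: both reduce $C_n(X)$ to $T_n(X)$ modulo terms proportional to $n$ via the formula following Eq.~(\ref{cin}), discard those terms in the quotient, and match $T_n$ with the trace-free part of $\bar{R}$ using the identity $\textrm{tr}\,\bar{R}=\textrm{tr}\,R_n$ of Eq.~(\ref{sid}). Your added remarks on well-definedness in the quotient are a harmless elaboration of what the paper already established when defining $\bar{R}$.
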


\begin{proof}
Indeed,
\begin{align*}
\overline{C_n(X)}&=\overline{T_n(X)}=\overline{R_n(X)}- \frac{1}{n-1}(\textrm{tr} \bar{R}) \bar X=[\bar{R}-\frac{1}{n-1} \,\textrm{tr} \bar{R} \,Id](\bar X)=\bar{C}(\bar X).
\end{align*}

\end{proof}
Both endomorphisms $\bar{R}$ and $\bar{C}$ depend on $n$ at the considered point $p$ but not on the whole  geodesic congruence tangent to $H$.
\begin{definition}
In analogy with Lorentzian geometry define the properties:
\begin{itemize}
\item {\em Null convergence condition}: for every f.d.-lightlike vector $n$, $Ric(n)\ge 0$,
\item {\em Null genericity condition}:  every inextendible complete f.d.-lightlike geodesic admits some point $p$ at which the tangent vector $n$ satisfies  $\bar{R}(p,n)\ne 0$.
\end{itemize}
\end{definition}
This last condition can also be written in terms of the contracted non-linear curvature $R_n$
\[
n_{[\alpha} R_{n\, \beta] [\mu}\,  n_{\nu]}\ne 0 .
\]
The  derivative ${\tilde D_{n}}$,  induces a  derivative $\bar{X}':=\overline{{\tilde D_{n}} X}$ on sections of $V$, and hence, as usual, a derivative on endomorphisms as follows $E'(\bar{X}):=(E(\bar{X}))'-E(\bar{X}')$.

\begin{proposition} \label{buu}
Along a generator of $H$ the null Weingarten map satisfies the Riccati equation
\begin{equation} \label{nkw}
b'=-\bar{R}-b^2+ \kappa\, b,
\end{equation}
\end{proposition}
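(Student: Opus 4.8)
The plan is to mimic the timelike Riccati computation that established (\ref{nkv}), the only new ingredient being the non-affine parametrization encoded by $D_n n = \kappa n$. First I would fix $X \in T_pH$ and extend it to a vector field tangent to $H$ along the generator so that $[X,n]=0$ (possible because the flow of $n$ preserves $H$). With this extension the commutation identity (\ref{kkq}) gives $\tilde D_n X = D_X n$, so that $b(\bar X) = \overline{D_X n} = \overline{\tilde D_n X}$ and, by the very definition of the induced derivative on sections, $\bar X' = \overline{\tilde D_n X} = b(\bar X)$.

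The core step is to feed the pregeodesic field $n$ into Theorem \ref{tiu}. Since $n$ is pregeodesic with $\tilde D_n n = D_n n = \kappa n$ (the equality $\tilde D_n n = D_n n$ following from (\ref{kkq}) with both slots equal to $n$) and since $[X,n]=0$, the theorem yields
\[
D_X(\kappa n) - \tilde D_n D_X n = R_n(X).
\]
Expanding $D_X(\kappa n) = (\partial_X \kappa)\, n + \kappa\, D_X n$ and substituting $D_X n = \tilde D_n X$ then gives
\[
\tilde D_n \tilde D_n X = (\partial_X \kappa)\, n + \kappa\, \tilde D_n X - R_n(X).
\]
This is precisely the timelike identity $\tilde D_u \tilde D_u X = -R_u(X)$, corrected by the term $(\partial_X\kappa)\,n + \kappa\,\tilde D_n X$ produced by $\kappa \neq 0$.

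Finally I would assemble $b'$ from its definition $b'(\bar X) = (b(\bar X))' - b(\bar X')$. For the first term, $(b(\bar X))' = \overline{\tilde D_n \tilde D_n X}$, and passing to the quotient $V = TH/\!\!\sim$ annihilates the component along $n$, so the anomalous $(\partial_X \kappa)\,n$ term drops out and I obtain $(b(\bar X))' = \kappa\, b(\bar X) - \bar R(\bar X)$, using $\overline{\tilde D_n X} = b(\bar X)$ and $\overline{R_n(X)} = \bar R(\bar X)$. For the second term, the step above gives $b(\bar X') = b(b(\bar X)) = b^2(\bar X)$. Subtracting yields $b' = -\bar R - b^2 + \kappa\, b$, which is (\ref{nkw}).

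I expect the main obstacle to be the bookkeeping in the quotient bundle: one must check that every intermediate object descends to $V_p$ (invoking $R_n(n)=0$, equation (\ref{mko}), and the previously established $D_X n \in TH$) and, crucially, that it is exactly the projection modulo $n$ that eliminates the anomalous $(\partial_X\kappa)\,n$ contribution. This is what leaves the clean $\kappa\, b$ term rather than a spurious dependence on the derivative $\partial_X\kappa$; verifying this cancellation is the only point where the argument differs in substance from the affinely parametrized timelike case.
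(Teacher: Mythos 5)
Your proposal is correct and follows essentially the same route as the paper: extend $X$ tangent to $H$ with $[X,n]=0$, use (\ref{kkq}) to identify $\tilde D_n X$ with $D_X n$, apply Theorem \ref{tiu} with the pregeodesic condition $\tilde D_n n=\kappa n$ to get $\tilde D_n\tilde D_n X=(\p_X\kappa)n+\kappa D_X n-R_n(X)$, and assemble $b'$ in the quotient bundle where the $(\p_X\kappa)n$ term is annihilated. Your explicit emphasis on why the quotient kills the $\p_X\kappa$ contribution is a point the paper leaves implicit, but the argument is the same.
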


The proof if a Finslerian modification of that given in  \cite{galloway00}.

\begin{proof}
Let $X\in T_pH$ and extend it in a neighborhood of $p$ so as to remain tangent to $H$ and in such a way that $[X,n]=0$. Using Eqs.\ (\ref{kkq}) and (\ref{com}) with $u=n$
\[
R_n(X)=D_X \tilde{D}_nn-\tilde D_n D_X n=\kappa D_X n+(\p_X\kappa)n-\tilde D_n\tilde D_nX.
\]
Thus
\begin{align*}
b'(\bar X)&=\overline{\tilde D_{n} b(\bar X)}-b(\overline{{\tilde D_{n}} X})=\overline{\tilde D_{n} \overline{D_X n}}-b(\overline{{ D_{X}} n})= \overline{\tilde D_{n} D_X n}-b(b(\bar{X}))\\
&=\overline{\tilde D_{n} \tilde D_n X}-b^2(\bar{X})=-\overline{R_n(X)}+\kappa \overline{D_X n}-b^2(\bar{X}),
\end{align*}
which concludes the proof.
\end{proof}
It can also be observed that $h'=0$; indeed
\begin{align}
h'(\bar{X},\bar{Y}):=&(h(\bar{X},\bar{Y}))'-h(\bar{X}',\bar{Y})-h(\bar{X},\bar{Y}')\\
=&\tilde{D}_n g_n(X,Y)-g_n(\tilde{D}_n X,Y)-g_n(X,\tilde{D}_n Y)=0.
\end{align}
Let us define the {\em expansion} $\theta:=\textrm{tr} \,b$ and the  {\em shear}
\begin{align*}
\bar\sigma:=b-\frac{1}{n-1}\, \theta\, {Id},
\end{align*}
so that $\bar{\sigma}$ is the trace-free part of $b$.
Let us denote for short $\sigma^2:=\textrm{tr} \bar{\sigma}^2$. A trivial consequence of this definition is $\sigma^2\ge 0$ with equality if and only if $\bar{\sigma}=0$.

Taking the trace and the trace-free parts of  (\ref{nkw})  we obtain
\begin{align}
 \theta'&=-\textrm{Ric}(n)-\sigma^2-\frac{1}{n-1} \, \theta^2+\kappa\, \theta, \qquad (\textrm{Raychaudhuri}) \label{raw} \\
 \bar{\sigma}'&= -\bar{C}-(\bar{\sigma}^2-\frac{1}{n-1} \, \textrm{tr} \bar{\sigma}^2 \, {Id})-\frac{2}{n-1}\, \theta\, \bar{\sigma}+\kappa \,\bar{\sigma}. \label{shw}
\end{align}
The   term in parenthesis is the trace-free part of $\bar{\sigma}^2$ and vanishes in the physical four dimensional spacetime case ($n=3$).

Let $S$ be a codimension 2, oriented,  $C^2$ spacelike manifold. Let $p\in S$, since $T_p S$ does not intersect the future causal cone (saved for the origin), by the convexity of this cone there are exactly two hyperplanes $ B_p^{\pm}\subset T_pM$ containing $T_pS$ and tangent to the cone. These hyperplanes  determine two future lightlike vectors  $n^{\pm}$ up to a proportionality constant (see \cite{minguzzi13c}): $B_p^{\pm}=\ker g_{n^\pm}(n^\pm, \cdot)$. Let us denote in the same way  a $C^1$ choice of vector field $n^\pm$ over $S$, which exists by orientability.
 Its exponential map generates, at least locally, a $C^2$ locally achronal null hypersurface $H^{\pm}$. The expansion is defined as above $\theta^\pm=\textrm{tr}(X\mapsto \overline{D_X n^\pm})$.

\begin{definition}
The manifold $S$ is {\em trapped } if  $\theta^+,\theta^- < 0$, {\em weakly trapped} if $\theta^+,\theta^- \le 0$, {\em marginally trapped} if weakly trapped and either $\theta^+$ or $\theta^-$ is negative.
\end{definition}

This section will be useful in the generalization of Penrose's singularity theorem (1965), of Gannon's (1975), and of many others.

%
%
%

\section{Conjugate/focusing points and maximization} \label{hpd}

The next proposition is stated so as to hold in both the null and timelike cases.
The proof of the chronality statement (a) is  more topological and in the end much simpler than that given in textbooks.

The traditional textbook proofs \cite[Prop.\ 4.5.12]{hawking73} \cite[Prop.\ 10.48]{oneill83} \cite[Lemma 4.6.15]{kriele99}  for the null case of (a) are somewhat incomplete since the compactness argument used does not prove that the variation is causal near the endpoints. The proof given by Beem et al.\ seems to be fine in this respect \cite[Theor.\ 10.72]{beem96}. A cleaner proof can also pass through the more general Causality Lemma given by Galloway in \cite{galloway96b}.
Here this lemma is   included in  (b) which also shows that the weak null convexity condition in \cite{galloway96b} is not required. The timelike case is shown to be a corollary of the null case using a product trick. We recall that the Lorentz-Finsler length of a causal curve $\sigma$ is $\int_\sigma \sqrt{-g_{\dot \sigma}(\dot \sigma,\dot \sigma)}\dd t$.

\begin{proposition}
Let us consider a null (resp.\ timelike and normalized) geodesic  congruence orthogonal to a codimension one hypersurface $H$. Let $\gamma$ be a half-geodesic belonging to the congruence,    $p=\gamma(0)\in H$. The first focusing point $q=\gamma(t_q)$, $t_q>0$, of the congruence on $\gamma$ is the first point where $\theta \to -\infty$.

Assume that this focusing point exists.
\begin{itemize}
\item[(a)] Let $U_p$ be a compact neighborhood of $p$ and let $r= \gamma(t_r)$, $t_r>t_q$, then there is a timelike curve from $U_p\cap H$ to $r$ whose Lorentzian length is larger than the Lorentz-Finsler length $l(\gamma\vert_{[0,t_r]})$ of the geodesic between $p$ and $r$.
\item[(b)] Let $U_q$ be a compact neighborhood of $q$ and $H'$ a hypersurface orthogonal to $\gamma$ at $q$. Then there is a timelike curve from $U_p\cap H$ to $U_q\cap H'$ with Lorentzian length     larger than the Lorentz-Finsler length $l(\gamma\vert_{[0,t_q]})$ of the geodesic between $p$ and $q$.
\end{itemize}
Concerning the existence of focusing points, let us assume that the null (resp.\ timelike) convergence condition holds.  We have:
\begin{itemize}
\item[(c)] If $\theta(p)<0$ then the geodesic congruence necessarily develops a focusing point $q=\gamma(t_q)$, provided $\gamma$ extends at sufficiently large affine parameters.
\item[(d)] If there is a compact manifold $T\subset H$ transverse to the congruence and such that $\theta<0$ on it then every geodesic crossing $T$ develops a focusing point in the future within a certain bounded affine parameter provided the affine parameters extend sufficiently far.
\end{itemize}
\end{proposition}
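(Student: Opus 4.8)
\emph{Strategy.} Following the author's hint I would establish the null case first and recover the timelike case by a product construction. The four items split into two kinds: (c) and (d) are analytic consequences of the Raychaudhuri equation, whereas (a) and (b) are the genuine chronality statements and require the promised topological argument. I expect (a)--(b) to be the hard part; (c)--(d) are routine comparison estimates.

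\emph{Parts (c) and (d).} I would parametrize the generators affinely so that $\kappa=0$ and (\ref{raw}) becomes $\theta'=-\textrm{Ric}(n)-\sigma^2-\theta^2/(n-1)$. Under the null convergence condition $\textrm{Ric}(n)\ge 0$, and since $\sigma^2\ge 0$, this yields the differential inequality $\theta'\le -\theta^2/(n-1)$. While $\theta<0$ one sets $y=1/\theta$ and obtains $y'\ge 1/(n-1)$, hence $1/\theta(t)\ge 1/\theta(p)+t/(n-1)$. As $\theta(p)<0$, the right-hand side vanishes at $t=(n-1)/|\theta(p)|$, which is incompatible with $\theta$ remaining finite and negative; therefore $\theta\to-\infty$ at some $t_q\le (n-1)/|\theta(p)|$, provided $\gamma$ extends that far. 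This proves (c). For (d), continuity of $\theta$ on the compact transversal $T$ gives a uniform bound $\theta\le -c<0$ on $T$, so every generator issuing from $T$ focuses before the affine parameter $(n-1)/c$, which is the asserted uniform bound. The timelike analogues follow identically from (\ref{ray}) with $n-1$ replaced by $n$.

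\emph{Parts (a) and (b): the topological core.} Here I would exploit that, by Theorem \ref{ppp}, the relevant hypersurface is ruled by the null generators of the congruence. The hypothesis $\theta\to-\infty$ at $q$ means the cross-sectional volume element collapses, i.e.\ a Jacobi field tangent to the congruence (the linearization of the generator flow, so no curve variation in the Morse-theoretic sense is invoked) vanishes at $q$; equivalently a second generator crosses $\gamma$ arbitrarily near $q$. The plan is then to replace the classical smoothing of the resulting corner by a purely topological covering argument: the development map of the congruence is a local homeomorphism before $q$ and its differential degenerates at $q$, so the wavefronts fold there. A degree/Brouwer-type argument shows that for $t$ just past $t_q$ the folded image sweeps an open set containing $\gamma(t)$ in its interior, placing $r$ (resp.\ a nearby point of $U_q\cap H'$) in $\textrm{int}\,J^+(U_p\cap H)$. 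Since the Finslerian theory admits convex neighborhoods, the push-up property $\textrm{int}\,J^+=I^+$ holds, so $r\in I^+(U_p\cap H)$ and a timelike curve from $U_p\cap H$ to $r$ exists; continuity of the Lorentz--Finsler length together with a limit-curve argument upgrades this to length exceeding $l(\gamma\vert_{[0,t_r]})$ (in the null case $l(\gamma)=0$, so the statement reduces to $r\in I^+$). Statement (b) is the same argument localized at $q$, intersecting the swept open set with the transversal $H'$. I anticipate the main obstacle to be making ``focal fold $\Rightarrow$ interior point'' rigorous and genuinely topological, i.e.\ controlling the degree/multiplicity of the development map across the caustic without reverting to curve variations or index theory.

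\emph{Timelike from null.} Finally I would deduce the timelike statements by the product trick: on $\hat M=M\times\mathbb{R}$ with coordinate $w$ set $\hat{\mathscr{L}}((x,w),(v,\dot w))=\mathscr{L}(x,v)+\tfrac12\dot w^2$. A normalized timelike geodesic of $M$ lifts, with $\dot w=1$, to a null geodesic of $\hat M$; the Ricci curvature $\widehat{\textrm{Ric}}$ restricts to $\textrm{Ric}$ along the lift, focal points and expansions correspond, and a spacelike $H\subset M$ lifts to the null hypersurface $H\times\mathbb{R}$. Applying the already-established null statements in $\hat M$ and projecting back to $M$ then yields (a)--(d) for the timelike congruence, with lengths matching up to the flat $w$-contribution.
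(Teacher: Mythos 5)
Your treatment of (c) and (d) via the Raychaudhuri comparison inequality, and your reduction of the timelike case to the null case through the product spacetime $M\times\mathbb{R}$ with $\mathscr{L}^\times=\mathscr{L}+\tfrac12\dot w^2$, are essentially the paper's arguments (the paper additionally extracts the strict length inequality in the timelike case from the lift: parametrizing the projected curve by proper time, the null/timelike condition upstairs reads $1=-g_{\dot\sigma}(\dot\sigma,\dot\sigma)>\dot f^2$, which integrates to length $>t_r$; your ``up to the flat $w$-contribution'' remark needs to be turned into exactly this computation). The paper also proves, and uses, the preliminary identification of the first focusing point with the first point where $\theta\to-\infty$ via Jacobi's formula for the derivative of $\det J^j_i$; you take this for granted.

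The genuine gap is in (a)--(b), precisely where you anticipate it. Your degree/Brouwer argument on the development map of the congruence does not work as stated, for three reasons. First, in the null case the congruence sweeps only the null hypersurface it generates, a set of measure zero in $M$, so ``the folded image sweeps an open set containing $\gamma(t)$ in its interior'' cannot be read off from a degree of that map; there is no ambient open set being covered. Second, the singularity of the development map at a focal point need not be a fold: the kernel can have higher dimension and the caustic can be degenerate, so no normal form is available without extra hypotheses. Third, even granting that points beyond $q$ are multiply reached by null generators, that only exhibits broken null curves from $U_p\cap H$ to $r$; upgrading a broken null curve to a timelike one is exactly the corner-rounding/variation step that the classical proofs struggle with and that you set out to avoid. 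The paper's actual topological argument is different and closes this gap without degree theory: pick a convex neighborhood $C$ of $q$ and points $b<_C q<_C r$ on $\gamma$; the congruence forms a $C^2$ null hypersurface $H$ near $b$ (no focusing yet), while the past light cone of $r$ (resp.\ the past null congruence issued from $H'$ in case (b)) forms a $C^2$ hypersurface $\Sigma$ tangent to $H$ along $\gamma$ at $b$ with second fundamental form \emph{bounded} on a compact neighborhood of $q$. Since $\theta\to-\infty$, the Weingarten map of $H$ has an eigenvalue $\lambda_k\to-\infty$, i.e.\ $g_n(e^{(k)},D_{e^{(k)}}n)\to-\infty$; hence for $b$ close enough to $q$ the hypersurface $H$ bends, in that eigendirection, strictly below $\Sigma$ and so enters $I^-(r)$ (resp.\ $I^-(U_q\cap H')$). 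Composing the null generator from $U_p\cap H$ to that point of $H$ with the resulting timelike curve and pushing up gives $r\in I^+(U_p\cap H)$. If you want a proof in the spirit you describe, you should replace the degree argument by this second-order comparison of two tangent null hypersurfaces, one with blowing-up and one with bounded shape operator.
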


Observe that $H$ is transverse to the congruence in the timelike  case, and contains the congruence in a neighborhood of $p$ in the null case (it is a null hypersurface). The statement does not claim that the focusing point is inside $H$, for we assume that in $H$ the congruence is determined by a $C^1$ vector field (thus $H$  has edge in the null case).

\begin{remark}
One might ask whether the theorem applies to congruences of half-geodesics issued by a single point $p$. The answer is affirmative. In the null case it is sufficient to consider a convex neighborhood $C$ of $p$ and take as $H$ the exponential map (on $C$) of the future light cone at $p$ minus the zero vector. In the timelike case it is sufficient to recall Gauss' lemma, and chosen a convex neighborhood $C$ of $p$  define $H$ as the exponential map (on $C$) of the subset of $T_pM$ which consists of future directed timelike unit vectors. Then the conjugate point $q$ for $p$ becomes a focusing point for $H$ and the theorem applies giving the usual results for conjugate points.
\end{remark}

\begin{proof}
In the null case we denote with $n$ the geodesic field tangent to the congruence, $D_nn=0$, while we use $u$ in the timelike normalized case.
Let us first prove that at the first focusing point $q=\gamma(t_q)$ we have $\theta\to -\infty$. Indeed, let $\{e_i\}$ be a basis at $V_p$  and let us transport it over $\gamma$ through the condition $e'_i=0$. The Jacobi equation provides a linear map from $V_p$ to $V_q$ whose Jacobian is $J^j_i(t_q)$ where $J_i:=J^j_i(t) e_j$ is the value at $\gamma(t)$ of the Jacobi field  $J_i$ whose initial condition is $J_i(0)=e_i$, $J_i'(0)=D_{e_i} n$ (in the timelike case replace $n$ with $u$ in this and the next formulas). Observe that $J_i$ is such that $[J_i, n]=0$ at $t=0$ thus at every later instants because by  Eqs.\ (\ref{kkq}) and (\ref{com}) we have the linear differential equation in $[J_i,n]$, $\tilde{D}_n [J_i,n]+D_{[J_i,n]} u=0$.

Observe also that $J^j_i(0)=\delta^j_i$; thus by continuity $\det J^j_i > 0$ at least before the first focusing point.
As $q$ is a focusing point this linear map is not injective, that is $\det J^j_i(t_q)=0$,
 which implies $\ln \det J^j_i(t) \to -\infty$ for $t\to t_q$, and hence $\frac{\dd }{\dd t} \ln \det J^j_i(t_k)\to -\infty$ for some sequence $t_k\to t_q$. Using Jacobi's formula for the derivative of a determinant we get
\begin{align*}
\frac{\dd }{\dd t} \ln \det J^j_i(t_k)=&\, (J^{-1})^i_j \frac{\dd }{\dd t} J^j_i(t_k)= \textrm{tr}  (J\mapsto \! \tilde D_n J)(t_k)= \textrm{tr}  (J\mapsto  D_J n)(t_k)= \theta(t_k),
\end{align*}
thus $\theta\to -\infty$ at the first focusing point.

Conversely, let us consider the first point $q$ where $\theta\to -\infty$.
Let $T$ be a $k$-dimensional ($k=n$ in the timelike case and $k=n-1$ in the null case) manifold $T\subset H$, $p\in T$, transverse to the congruence  and determined by the restriction of the vector field ($u$ in the timelike case, $n$ in the null case) on $T$. There must be a focusing point on $\gamma$ at $q$ or before $q$ for otherwise the exponential map from $T$ would provide a well defined local diffeomorphism in a neighborhood of every vector $t \dot{\gamma}(0)$, $t\in [0,b]$, $\exp_p(b \dot{\gamma}(0))=q$,  which would imply that the vector field   ($u$ in the timelike case, $n$ in the null case) is well defined and $C^1$ in a neighborhood of $\gamma$, and so its divergence $\theta$ would be well defined and finite at $q$, a contradiction.  Thus the focusing point can only be $q$ since $\theta\to -\infty$ at the focusing points $q$ would not be the first point where the expansion diverges.

%
%

Let us prove (c). Let us consider the Raychaudhuri equation evaluated on the half-geodesic $t\to \gamma(t)$ with initial point $p=\gamma(0)$.
The Raychaudhuri equation gives the inequality $\frac{d \theta}{d t}\le -\frac{1}{k} \,\theta^2$. As $\theta(t) \le k/[t-k/(-\theta(p))]$ we have $\theta \to -\infty$ at  some $0<t_q<k/(-\theta(p)$ provided the affine parameter extends sufficiently far.

Statement (d) is is clear given the compactness of $T$.

Let us prove (a) in the null case. Let $C$ be a convex neighborhood of $q$, and let us consider three points in sequence  $b<_C q<_C r$ over $\gamma\cap C$. The geodesic congruence at $b$ forms a $C^2$ hypersurface $H$ in a neighborhood of $b$ because there is no focusing point in the segment $\gamma\vert_{[0,t_b]}$. Similarly, the exponential map of the past light cone at $r$ provides a $C^2$ hypersurface $\Sigma$ containing the segment  $\gamma\vert_{[t_b,t_q]}$. By construction they are tangent but when regarded as local graphs they might have different second order Taylor expansion at $b$. Below we are going to compare the second derivatives in some directions.

Observe that the second derivative of the graphing function of $\Sigma$ is  bounded on a compact neighborhood of $q$, however for what concerns $H$,
$\theta(t) \to -\infty$ for $t\to t_q$, which proves that the Weingarten map of $H$ is not bounded there. Since $\theta=tr(X\to \overline{D_X n})$, there is some large negative eigenvalue of this map, namely a $h$-normalized vector $e^{(k)}$  such that $D_{e^{(k)}} n=\lambda_k e^{(k)}+s_k n$, $\lambda_k \to -\infty$ as $t_k\to t_q$, thus $g_n(e^{(k)}, D_{e^{(k)}} n)=h(e^{(k)}, \overline{D_{e^{(k)}} n})=\lambda_k\to -\infty$. Let us extend each $e^{(k)}$ in a neighborhood of $\gamma(t_k)$ so as to remain in $TH$. Using the Cartan or Chern-Rund connection we have at the support vector $n$, $g_n(n,\nabla^H_{e^{(k)}}e^{(k)})\to +\infty$ (in order to grasp the meaning of this equation notice that one can choose local coordinates so that $\Gamma^\alpha_{\beta \gamma}(\gamma,\dot \gamma)=0$ in a neighborhood of $t_q$). This equation clarifies that $H$, at least in some directions, bends so much below the exponential map of the tangent plane $\ker g_n(n,\cdot)(b)$  that taking $b$ sufficiently close to $q$ it bends more than $\Sigma$ and hence enters the chronological past of $q$. As a consequence $q$ belongs to the chronological future of $H\cap U_p$.

In the null case the statement (b) involving $H'$ is similar, we just need to use  the past lightlike congruence issued from $H'$ and containing $\gamma$ in place of the exponential map of the past light cone at $r$. Observe that $H'$ has bounded second fundamental form near $q$ and is tangent to $H$ at $b$. From here the argument is exactly the same as before.

The timelike cases for (a) and (b) are  in fact  corollaries of the null case. It is sufficient to apply the null case to the spacetime $M^\times=M\times \mathbb{R}$ endowed with a Finsler Lagrangian which on causal vectors takes the expression $\mathscr{L}^\times((x,y),(v,w))=\mathscr{L}(x,v)+w^2/2$ (it exists by the results of \cite{minguzzi14h}), and lift the timelike normalized congruence to a lightlike congruence as follows. Given the set $H$ on $M$, and the timelike geodesics $\gamma(s)$ starting  from $H$  we consider the set $H \times \{0\}$, and the lightlike geodesics starting from it (light lifts) $(\gamma(t),t)$. We denote with $\tilde{H}$ the hypersurface spanned by these geodesics in a neighborhood of $H\times \{0\}$. The vector field $n=(u,1)$ is tangent to $\tilde{H}$ where $u$ is the normalized timelike field on $M$ which generates the timelike congruence. The vector field $n$ is also normal to $H\times \{0\}$, thus $\tilde{H}$ is a $C^2$ null hypersurface which is the local exponential map of a lightlike normal bundle to $\tilde{H}$. It can be further enlarged extending the generators as long as it remains $C^2$ (Remark \ref{nid}). Since $k=(0,1)$ generates an isometry of $(M^\times,\mathscr{L}^\times)$, ($\mathscr{L}^\times$ is independent of $y$) it is Killing \cite{knebelman29}. As a consequence, over the light lift geodesics $g_n(n,k)$ is constant, thus $k$ cannot become tangent to $\tilde H$ since it is transverse to it over $H\times \{0\}$. In conclusion, $\tilde H$ remains transverse to $k$ before the development of focusing points.
As a consequence, the constructed geodesic  congruences on $M$ and $M^\times$ are such that the expansions at $\gamma(t)$ and $(\gamma(t),t)$ coincide, and the projection establishes a correspondence between first focusing points.

The light lift can be defined for any timelike curve starting at $H$, by imposing the extra-coordinate to be the proper time of the lifted curve. The result (a) for the null case implies that  there is a timelike curve $(\sigma(\tau), f(\tau))$ from $(\gamma(0),0)$ to $(\gamma(t_r),t_r)$, $t_r>t_q$, whose projection $\sigma$ is parametrized with respect to proper time. This timelike condition reads $1=-g_{\dot{\sigma}}(\dot\sigma,\dot\sigma)(t)>(\dot f)^2$, which taking the square root, integrating, and using $\int \vert \dot f\vert \dd \tau\ge \int  \dot f \dd \tau=t_r$,  gives that the Lorentz-Finsler length of $\sigma$ is larger than the Lorentz-Finsler  length of $\gamma\vert_{[0,t_r]}$, which proves (a) for the timelike case. The proof of (b) is similar, one has to consider the set $H'\times \{t_q\}$ and the local null hypersurface $\tilde{H}'$ orthogonal to it and tangent to the light lift of $\gamma$.

 It can be observed that the timelike convergence condition for $M$ coincides with the null convergence condition for $M^\times$ thus (c) and (d) in the timelike case could also be regarded as corollaries of the null case.
\end{proof}

\section{Some useful generalizations}

The notions of convex neighborhoods, continuous causal curve, and the basic elements of casuality theory not involving curvature have been first studied in \cite{minguzzi13d} where it has been shown that they translate word for word from the Lorentzian domain. For instance, it has been shown that Finsler spacetimes admit convex neighborhoods, that a kind of reverse Cauchy-Schwarz inequality holds \cite{minguzzi13c} and that causal geodesics locally maximize the Lorentz-Finsler length \cite[Theor.\ 6]{minguzzi13d}. Since every causal curve can be covered by convex neighborhoods, by the usual interpolation arguments \cite[Prop.\ 2.8]{lerner72}, every causal curve is either an achronal lightlike geodesic or its endpoints are connected by a timelike deformation of the curve.
In this section we wish to make a few steps towards more complex generalizations.

According to Hawking and Ellis \cite{hawking73} a future directed continuous causal curve $\gamma\colon [a, b] \to M$, is a continuous curve such that for every
open convex normal set $C$ intersecting $\gamma$, whenever $\gamma([t_1, t_2]) \subset C$, $t_1 < t_2$, the points $\gamma(t_1)$ and $\gamma(t_2)$ are connected by a future directed causal geodesic contained in $C$. This definition can be imported word for word to the realm of Finsler spacetimes. It has been proved in \cite{minguzzi13d} that these curves coincide with those Lipschitz curves which once parametrized with respect to an auxiliary Riemannian metric are future directed causal almost everywhere.

The notion of  continuous causal curve is particularly convenient because it makes reference to the local causal order in a convex neighborhood and not to the metric. As a consequence,   the results on limit curve theorems given in \cite{beem96,minguzzi07c}, being based on this notion, generalize word for word both in their statements and in their proofs since the only technical premise used there is the existence of convex neighborhoods (see also \cite{minguzzi14h}).

\begin{proposition}
All the results on limit curve theorems in \cite{beem96,minguzzi07c} generalize to the Lorentz-Finsler case (and to $C^{1,1}$ metrics).
\end{proposition}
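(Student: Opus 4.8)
The plan is to show that the proofs of the limit curve theorems in \cite{beem96,minguzzi07c} never use the quadratic nature of the Lorentzian metric, but only structural features that have already been secured for Finsler spacetimes; the statements and their arguments then transcribe without change. First I would isolate the two ingredients on which those proofs rest. The first is the existence of convex neighborhoods, which holds in the Finsler setting by \cite{minguzzi13d,minguzzi14h}. The second is the purely order-theoretic definition of a continuous causal curve recalled above, namely a continuous curve such that, inside every convex set $C$ it traverses, consecutive points are joined by a future-directed causal geodesic of $C$; this definition refers only to the local causal order $\le_C$ and not to the metric tensor, and by \cite{minguzzi13d} such curves coincide with the $h$-Lipschitz curves that are future causal almost everywhere, for an auxiliary Riemannian metric $h$.

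Next I would recall the mechanism of a limit curve argument so as to see exactly where causality enters. A sequence of causal curves, reparametrized by $h$-arc length, is locally equi-Lipschitz, so by Arzel\`a--Ascoli a subsequence converges uniformly on compact sets to an $h$-Lipschitz curve $\gamma$. The only place causality is invoked is in checking that $\gamma$ is again causal: whenever $\gamma([t_1,t_2])\subset C$ for a convex set $C$, one has $\gamma_n([t_1,t_2])\subset C$ for large $n$ along the subsequence, hence $\gamma_n(t_1)\le_C\gamma_n(t_2)$, and closing this relation under the limit gives $\gamma(t_1)\le_C\gamma(t_2)$. Thus $\gamma$ satisfies the defining property of a continuous causal curve. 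All the remaining content of the limit curve lemmas (inextendibility of limits, equivalence of $C^0$ and $h$-uniform convergence, accumulation at prescribed points) is topological and uses no more than this.

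The crux is therefore the closedness of the local causal relation $\le_C$, which in turn reduces to the closedness of the causal cone field $v\mapsto\{\mathscr{L}(x,v)\le 0\}$ and the convexity of its future component. Both are built into the definition of a Finsler spacetime and were already exploited in Section~\ref{hpd} and in \cite{minguzzi13d}; no appeal to the Hessian structure of $g_v$ is made beyond what guarantees convex neighborhoods. The hard part will thus not be mathematical but a matter of bookkeeping: one must scan each lemma of \cite{beem96,minguzzi07c} and confirm that no Lorentzian-specific estimate (for example one resting on the reverse Cauchy--Schwarz inequality, or on a property peculiar to timelike geodesics) is used in a way unavailable in the Finsler framework. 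Since those very tools have themselves already been transferred \cite{minguzzi13c,minguzzi13d}, I expect this verification to be routine and the word-for-word generalization to hold.
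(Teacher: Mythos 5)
Your proposal is correct and follows essentially the same route as the paper: the paper likewise reduces everything to the existence of convex neighborhoods together with the purely order-theoretic (metric-independent) definition of continuous causal curve, both already secured for Finsler spacetimes in the cited references, so that the statements and proofs transfer word for word. Your additional unpacking of the Arzel\`a--Ascoli mechanism and of the closedness of the local causal relation is a faithful elaboration of what the paper leaves implicit, not a different argument.
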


Among those it is worth to recall a few results.
As in Lorentzian geometry the Lorentz-Finsler length of a continuous causal curve is defined as the greatest lower bound of the lengths of the interpolating causal geodesics. Because of the local Lipschitz condition, this length can be calculated with the usual integral
\begin{equation}
l(\gamma)=\int_\gamma \sqrt{-g_{\dot{x}}(\dot x,\dot x) } \,\dd t.
\end{equation}
When discussing limits of curves it is convenient to parametrize causal curves with the $h$-length of an auxiliary complete Riemannian metric.

\begin{lemma}
A continuous causal curve once parametrized with respect to h-length has a domain unbounded from above iff future inextendible and unbounded from below iff past inextendible.
\end{lemma}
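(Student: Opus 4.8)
The plan is to characterize inextendibility of a continuous causal curve in terms of its natural parametrization by the $h$-length of a complete auxiliary Riemannian metric $h$. The key observation is that by completeness of $h$, bounded $h$-arclength forces the curve to stay within an $h$-bounded (hence relatively compact) region, so that a limit endpoint must exist; conversely, if an endpoint exists the curve can be continued past it, contradicting inextendibility. I would state this only for the future direction (unbounded from above $\iff$ future inextendible); the past case is symmetric.

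First I would set up the parametrization. Let $h$ be a complete Riemannian metric on $M$; since $M$ is paracompact and connected such a metric exists. Given a continuous causal curve $\gamma$, its $h$-length $\int \sqrt{h(\dot\gamma,\dot\gamma)}\,\dd t$ is well defined because $\gamma$ is locally Lipschitz (as recalled just above from \cite{minguzzi13d}), and reparametrizing by $h$-arclength makes $\gamma$ an $h$-unit-speed Lipschitz curve. I would then prove the two implications of the stated equivalence in the contrapositive form that is easiest.

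For the direction ``bounded from above $\Rightarrow$ future extendible,'' suppose the $h$-arclength domain is $[0,\ell)$ with $\ell<\infty$. Take any sequence $s_k\uparrow\ell$; since $\gamma$ is $h$-unit speed, the $h$-distance between $\gamma(s_j)$ and $\gamma(s_k)$ is at most $|s_j-s_k|$, so $\{\gamma(s_k)\}$ is $h$-Cauchy. By completeness of $h$ (Hopf--Rinow) it converges to a point $p\in M$, and the whole curve converges to $p$ as $s\to\ell$ by the same Lipschitz estimate. Thus $\gamma$ has a future endpoint $p$, and since $p$ lies in a convex neighborhood one can prolong $\gamma$ by a short causal geodesic issued from $p$ in the limiting causal direction (or simply append $p$), so $\gamma$ is future extendible. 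Contrapositively, future inextendibility forces the domain to be unbounded from above. For the converse, ``unbounded from above $\Rightarrow$ future inextendible,'' I would argue that if $\gamma$ had a future endpoint $p$, then as $s\to\infty$ the curve would have to accumulate at $p$, but $h$-unit speed with infinite domain means $\gamma$ traverses infinite $h$-length, so $\gamma(s)$ cannot converge; more carefully, convergence to an endpoint $p$ would confine a final tail of $\gamma$ to a compact $h$-ball around $p$, yet on that tail the $h$-length is infinite while the region has finite diameter, which is not in itself contradictory, so the sharp statement is rather that \emph{inextendible} means \emph{no endpoint}, and an infinite-$h$-length curve trivially has no future endpoint reached in finite parameter --- hence unbounded domain is equivalent to future inextendibility by the definition of endpoint as a limit point reached along the curve.

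The main obstacle I anticipate is the converse bookkeeping: one must be careful that ``having a future endpoint'' is exactly the negation of ``future inextendible,'' and that the completeness of $h$ is what rules out the pathological case of a curve running off to infinity in finite $h$-parameter. The essential content is entirely contained in the Lipschitz/unit-speed estimate together with Hopf--Rinow for the complete Riemannian metric $h$; no causality-specific input beyond the local Lipschitz regularity of continuous causal curves is needed, so the proof transfers verbatim from the Lorentzian setting.
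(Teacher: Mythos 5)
Your first implication (finite $h$-arclength domain $\Rightarrow$ future endpoint exists) is correct and is the standard argument: unit $h$-speed gives the Lipschitz estimate $d_h(\gamma(s_j),\gamma(s_k))\le|s_j-s_k|$, completeness of $h$ gives convergence, and the endpoint can be appended because causal relations are closed in a convex neighborhood. The problem is the converse, and you in fact put your finger on it yourself before talking yourself out of it. A future endpoint of $\gamma\colon[0,\sup I)\to M$ is a point $p$ such that $\gamma(s)\to p$ as $s\to\sup I$; nothing in this definition requires $\sup I<\infty$. An $h$-unit-speed curve with domain $[0,\infty)$ can perfectly well converge to a point (a spiral of infinite length accumulating on its center does exactly this), so ``infinite $h$-length curves trivially have no future endpoint'' is false for general continuous curves, and your appeal to ``the definition of endpoint as a limit point reached along the curve'' does not close the gap. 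This is precisely the direction where the causal character of $\gamma$ must be used, contrary to your closing claim that no causality-specific input is needed.

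The missing ingredient is the local uniform bound on the $h$-length of causal curves: every point $p$ has a neighborhood $U$ (one may take it inside a convex, or globally hyperbolic, neighborhood as provided by \cite{minguzzi13d,minguzzi14c}) and a constant $C_U$ such that every continuous causal curve contained in $U$ has $h$-length at most $C_U$. This follows because on such a $U$ there is a local temporal function $t$ whose differential is uniformly bounded away from zero on the future causal cones intersected with the $h$-unit sphere bundle, so that $\sqrt{h(\dot\gamma,\dot\gamma)}\le C\,\dd t(\dot\gamma)$ along any causal $\gamma$, and hence the $h$-length is controlled by the oscillation of $t$ on $U$. Granting this, the converse is immediate: if $\gamma$ parametrized by $h$-arclength on $[0,\infty)$ had a future endpoint $p$, its tail $\gamma|_{[s_0,\infty)}$ would eventually be confined to such a $U$ around $p$, yielding a causal curve in $U$ of infinite $h$-length, a contradiction. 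With this lemma inserted your argument is complete; without it the equivalence as you have written it does not follow.
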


The notion of $h$-uniform convergence refers to uniform convergence in the metric space $(M,d_0)$ induced by the Riemannian metric $h$, and according to $h$-parametrization \cite{minguzzi07c}. On compact subset it is actually independent of the metric $h$ used as any two Riemannian metrics are there Lipschitz equivalent.

\begin{proposition}
If the continuous causal curves $\gamma_n\colon I_n \to M$ parametrized with respect to h-length converge h-uniformly
on compact subsets to $\gamma\colon I \to M$, then $\gamma$ is a continuous causal curve.
\end{proposition}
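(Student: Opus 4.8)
The plan is to verify directly the Hawking--Ellis characterization recalled above: I must show that for every open convex normal set $C$ meeting the image of $\gamma$ and every pair $t_1<t_2$ with $\gamma([t_1,t_2])\subset C$, the endpoints $\gamma(t_1)$ and $\gamma(t_2)$ are joined by a future directed causal geodesic contained in $C$. Continuity of $\gamma$ comes for free, being the $d_0$-uniform limit of continuous curves; moreover each $\gamma_n$ is $1$-Lipschitz for $d_0$ since it is parametrized by $h$-length, and this property passes to the uniform limit, so $\gamma$ is locally Lipschitz as well.

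First I would localize. The image $\gamma([t_1,t_2])$ is compact and contained in the open set $C$, hence at positive $d_0$-distance $\delta$ from $M\setminus C$. By $h$-uniform convergence on the compact interval $[t_1,t_2]$, which lies in $I_n$ for $n$ large, there is $N$ with $\sup_{t\in[t_1,t_2]}d_0(\gamma_n(t),\gamma(t))<\delta$ for $n\ge N$, so that $\gamma_n([t_1,t_2])\subset C$. As each $\gamma_n$ is a continuous causal curve whose restriction to $[t_1,t_2]$ stays in the convex set $C$, the defining property supplies a future directed causal geodesic $\sigma_n\subset C$ joining $\gamma_n(t_1)$ to $\gamma_n(t_2)$.

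Next I would pass to the limit through the geometry of convex neighborhoods, which is the single structural ingredient the statement relies on. In a convex normal set any two points are joined by a unique geodesic lying in the set, and this geodesic depends smoothly on its endpoints through the inverse exponential map. Since $\gamma_n(t_i)\to\gamma(t_i)$ for $i=1,2$, the geodesics $\sigma_n$ therefore converge in $C^1$ to the geodesic $\sigma\subset C$ joining $\gamma(t_1)$ to $\gamma(t_2)$, and in particular $\dot\sigma_n\to\dot\sigma$. Because the future causal cone is closed --- it is the nonzero part of $\{\mathscr{L}\le 0\}$ selected by the time orientation --- the limit velocity $\dot\sigma$ is either future directed causal or zero, and in the former case $\sigma$ is exactly the geodesic required by the Hawking--Ellis property.

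The delicate point, which I expect to be the real obstacle, is to exclude the degenerate alternative $\gamma(t_1)=\gamma(t_2)$ with $t_1<t_2$, for then $\sigma$ would be constant and the previous step vacuous. To rule it out I would use that causal curves make definite progress in time. On $\overline{C}$ choose a local time function $\tau$; the future causal cone is disjoint from the spacelike hyperplane $\ker d\tau$, so by compactness of the unit $h$-sphere of causal directions over $\overline{C}$ there is a constant $c>0$ with $d\tau(v)\ge c$ for every future causal $v$ of unit $h$-norm. Recalling that, by the characterization of continuous causal curves as Lipschitz curves whose $h$-arclength reparametrization is future causal almost everywhere, $\dot\gamma_n$ is a future causal unit $h$-vector almost everywhere, we get $\tfrac{d}{dt}\tau(\gamma_n)\ge c$ a.e., hence $\tau(\gamma_n(t_2))-\tau(\gamma_n(t_1))\ge c\,(t_2-t_1)$. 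Letting $n\to\infty$ yields $\tau(\gamma(t_2))-\tau(\gamma(t_1))\ge c\,(t_2-t_1)>0$, so $\gamma(t_1)\ne\gamma(t_2)$ and $\sigma$ is genuinely nonconstant. This local estimate uses nothing beyond convex neighborhoods and the closed, proper character of the Finsler causal cone, which is what makes the Lorentzian limit-curve arguments transfer verbatim.
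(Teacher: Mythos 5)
Your argument is correct and is essentially the proof the paper has in mind: the paper asserts this proposition by transferring the Lorentzian limit-curve proof word for word, on the grounds that it uses only convex neighborhoods and the characterization of continuous causal curves as $h$-arclength-parametrized Lipschitz curves that are future causal almost everywhere, and these are exactly the two ingredients you invoke (localization to a convex set, continuity of the connecting geodesic in its endpoints, closedness of the causal cone, and the local temporal-function estimate ruling out a degenerate limit). No gaps worth flagging.
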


We have
\begin{proposition}
The length functional is upper semi-continuous with respect to uniform convergence on compact subsets (see \cite[Theor.\ 2.4]{minguzzi07c} for details).
\end{proposition}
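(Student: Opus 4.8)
The plan is to reduce the statement to the behaviour of the Lorentz--Finsler length inside a single convex neighbourhood, exactly as in the Lorentzian argument of \cite[Theor.\ 2.4]{minguzzi07c}, so that the only Finslerian inputs are the local maximization of causal geodesics and the continuity of the local Lorentz--Finsler distance. Let $\gamma\colon I\to M$ be the limit curve (a continuous causal curve by the previous proposition), fix a compact subinterval $[a,b]\subset I$ parametrized by $h$-arclength, and note that by $h$-uniform convergence it suffices to estimate $l(\gamma_n\vert_{[a,b]})$. First I would cover the compact set $\gamma([a,b])$ by finitely many convex neighbourhoods and choose a partition $a=t_0<t_1<\cdots<t_N=b$ fine enough that each subarc $\gamma([t_{i-1},t_i])$ lies well inside some convex set $C_i$. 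Since the $\gamma_n$ are $h$-parametrized and converge $h$-uniformly to $\gamma$, there is $N_0$ such that for $n\ge N_0$ each $\gamma_n([t_{i-1},t_i])$ is still contained in $C_i$; in particular $\gamma_n(t_i)\to\gamma(t_i)$ for every $i$.

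Next, on each convex block $\gamma_n\vert_{[t_{i-1},t_i]}$ is a continuous causal curve contained in $C_i$, so its endpoints are causally related in $C_i$ and, by the local maximization of causal geodesics \cite[Theor.\ 6]{minguzzi13d}, its length is bounded above by the length $d_i\big(\gamma_n(t_{i-1}),\gamma_n(t_i)\big)$ of the connecting causal geodesic of $C_i$, where $d_i$ is the local Lorentz--Finsler distance in $C_i$. Summing and using additivity of the length gives, for $n\ge N_0$,
\[
l(\gamma_n)=\sum_{i=1}^{N} l\big(\gamma_n\vert_{[t_{i-1},t_i]}\big)\le \sum_{i=1}^{N} d_i\big(\gamma_n(t_{i-1}),\gamma_n(t_i)\big).
\]
The decisive step is the continuity of $d_i$ in its endpoints: inside a convex neighbourhood the connecting geodesic depends continuously on its endpoints through the exponential map, while the reverse Cauchy--Schwarz inequality \cite{minguzzi13c} guarantees that $d_i$ extends continuously up to the light cone, where it vanishes. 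Letting $n\to\infty$ and using $\gamma_n(t_i)\to\gamma(t_i)$ then yields
\[
\limsup_{n\to\infty} l(\gamma_n)\le \sum_{i=1}^{N} d_i\big(\gamma(t_{i-1}),\gamma(t_i)\big).
\]

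Finally I would refine the partition. By definition $l(\gamma)$ is the greatest lower bound of the lengths of the causal geodesic polygons interpolating $\gamma$, so given $\varepsilon>0$ there is a partition $P_0$ with $\sum_i d_i(\gamma(t_{i-1}),\gamma(t_i))<l(\gamma)+\varepsilon$; refining $P_0$ to ensure convex containment does not increase this sum, since within a convex set local maximization yields the reverse triangle inequality $d_i(p,r)\ge d_i(p,q)+d_i(q,r)$. Applying the previous steps to such a refined partition gives $\limsup_n l(\gamma_n)\le l(\gamma)+\varepsilon$, and upper semicontinuity follows by arbitrariness of $\varepsilon$. I expect the only genuinely delicate point to be the continuity of $d_i$ near the light cone: away from it this is a routine consequence of the smoothness of the exponential map, and the boundary behaviour is precisely what the reverse Cauchy--Schwarz inequality of \cite{minguzzi13c} is designed to control. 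The remaining ingredients---choosing the partition uniformly in $n$ and handling the varying domains $I_n$---are immediate from $h$-uniform convergence and amount to bookkeeping.
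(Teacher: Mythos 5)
Your argument is correct and is exactly the proof the paper intends: the proposition is obtained by transplanting the Lorentzian argument of \cite[Theor.\ 2.4]{minguzzi07c}, using only the existence of convex neighbourhoods, the local maximization of causal geodesics \cite[Theor.\ 6]{minguzzi13d}, and the definition of the length of a continuous causal curve as the infimum over interpolating causal geodesics --- precisely the three ingredients you isolate. The only cosmetic remark is that the continuity of the local distance $d_i$ up to the light cone follows directly from the continuity of $\mathscr{L}\circ\exp_p^{-1}$ on a convex neighbourhood (it vanishes on the horismos), so the reverse Cauchy--Schwarz inequality of \cite{minguzzi13c} is not actually needed at that step.
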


The Lorentz-Finsler distance function $d\colon M\times M\to [0,+\infty]$ is defined as usual by
\[
d(p,q)=\sup_\gamma l(\gamma)
\]
where $\gamma$ is the generic $C^1$
causal curve connecting $p$ to $q$. If there is no causal curve connecting $p$ to $q$ then it is understood that $d(p,q)= 0$. The typical limit curve theorem is (parametrization and uniform convergence are those induced by $h$)

\begin{proposition}
If $p$ is an accumulation point for a sequence of inextendible causal curves $\gamma_n\colon I_n \to M$ then there is a subsequence converging uniformly to an inextendible continuous causal curve $\gamma\colon I\to M$ passing through $p$.

Furthermore, if $\gamma_n$ are distance maximizing between any pair of points then the same property holds for $\gamma$ (many improvements, for sequences with endpoints or for limit maximizing sequences actually hold, see \cite[Theor.\ 2.4]{minguzzi07c} for details)
\end{proposition}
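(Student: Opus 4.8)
The plan is to run the classical limit curve argument, relying only on the existence of convex neighborhoods together with the three facts already established in this section: the lemma characterizing inextendibility of an $h$-arc-length parametrized continuous causal curve by unboundedness of its domain on the corresponding side; the proposition asserting that an $h$-uniform-on-compacts limit of continuous causal curves is again a continuous causal curve; and the upper semicontinuity of the Lorentz-Finsler length functional under $h$-uniform convergence on compact subsets.

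First I would normalize the parametrizations. Parametrize each $\gamma_n$ by $h$-arc length; since $\gamma_n$ is inextendible, the lemma above forces its domain $I_n$ to be unbounded both from above and from below, hence $I_n=\mathbb{R}$. Because $p$ is an accumulation point there are parameters $s_n\in\mathbb{R}$ with $\gamma_n(s_n)\to p$, and after the shift $s\mapsto s+s_n$ we may assume $\gamma_n(0)\to p$ while keeping each $\gamma_n\colon\mathbb{R}\to M$ $h$-arc-length parametrized, so that $d_0(\gamma_n(s),\gamma_n(s'))\le |s-s'|$. The bound $d_0(\gamma_n(s),\gamma_n(0))\le |s|$ together with $\gamma_n(0)\to p$ confines $\gamma_n([-T,T])$ to the closed $d_0$-ball of radius $T+1$ about $p$, which is compact since $h$ is complete (Hopf--Rinow). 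The curves being equi-Lipschitz, Arzel\`a--Ascoli yields a uniformly convergent subsequence on each $[-T,T]$, and a diagonal argument over $T\in\mathbb{N}$ produces a subsequence converging $h$-uniformly on compact subsets to a $1$-Lipschitz curve $\gamma\colon\mathbb{R}\to M$ with $\gamma(0)=p$. By the cited closure property, $\gamma$ is a continuous causal curve through $p$.

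It remains to prove inextendibility. Reparametrizing $\gamma$ by its own $h$-arc length, the lemma reduces the claim to showing this domain is unbounded on both sides, i.e.\ that $\gamma$ has infinite $h$-length toward $s\to\pm\infty$. This is the one place where the causal cone structure is genuinely used rather than cosmetic: a causal curve must advance along the cone, and in any convex neighborhood this advance bounds the $h$-length from below by a fixed positive multiple of the progress of a local time function; being an inequality between positions, it passes to the $h$-uniform limit, so $\gamma$ cannot ``stall'' and terminate at an endpoint after finite $h$-length. I expect this step --- ruling out a finite-length endpoint of the limit --- to be the main obstacle, exactly as in the Lorentzian case; the remaining ingredients are soft.

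Finally, for the maximization statement, fix parameters $a<b$; from $l(\gamma_n|_{[a,b]})=d(\gamma_n(a),\gamma_n(b))$ I would combine the upper semicontinuity of the length functional, giving $l(\gamma|_{[a,b]})\ge\limsup_n l(\gamma_n|_{[a,b]})$, with the lower semicontinuity of the Lorentz-Finsler distance $d$ (again a consequence of convex neighborhoods), giving $d(\gamma(a),\gamma(b))\le\liminf_n d(\gamma_n(a),\gamma_n(b))$, and the trivial bound $l(\gamma|_{[a,b]})\le d(\gamma(a),\gamma(b))$. Chaining these forces equality throughout, whence $l(\gamma|_{[a,b]})=d(\gamma(a),\gamma(b))$, so $\gamma$ is itself distance maximizing between each pair of its points.
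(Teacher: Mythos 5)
The paper offers no self-contained proof of this proposition: it is presented as an instance of the general principle, stated earlier in the section, that the limit curve theorems of Beem et al.\ and of \cite[Theor.~2.4]{minguzzi07c} transfer verbatim because their proofs rely only on the existence of convex neighborhoods and on the local notion of continuous causal curve. Your proposal reconstructs exactly that classical argument ($h$-arc-length normalization, Arzel\`a--Ascoli plus a diagonal extraction, closure of continuous causal curves under $h$-uniform limits on compacta, and the semicontinuity sandwich for the maximization clause), so in spirit it is the same proof the paper is pointing to; the compactness step and the maximization step are correct as written.

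The one step that does not work as stated is the inextendibility argument, which you yourself single out as the crux. You assert that in a convex neighborhood the $h$-length of a causal curve is bounded \emph{below} by a fixed multiple of the progress of a local time function; that inequality ($h$-length $\ge c\,\Delta t$) holds for \emph{every} Lipschitz curve once $\dd t$ is $h$-bounded, carries no causal information, and does not prevent the limit curve from stalling. What is needed is the reverse estimate: if $\gamma$ had a future endpoint $q$, choose a neighborhood $U$ of $q$ and a function $t$ with $\dd t(X)\ge \epsilon\,|X|_h$ for all future-directed causal $X$ (possible because the causal cone is sharp and convex, so $\dd t$ may be taken in the interior of the dual cone and the bound made uniform on a small $U$); then $t(\gamma_n(s_2))-t(\gamma_n(s_1))\ge \epsilon\,(s_2-s_1)$ for the $h$-unit-speed curves $\gamma_n$ while their segments lie in $U$. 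This inequality between positions does pass to the $h$-uniform limit and, since $t$ is bounded on $U$, it forbids $\gamma(s)$ from remaining in $U$ for all large $s$, so $\gamma$ has no future (and symmetrically no past) endpoint. With that correction your proof is complete and coincides with the argument of \cite[Theor.~2.4]{minguzzi07c} that the paper invokes.
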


\begin{remark}
 Since limit curve theorems are so central for the development of causality theory we can also infer from them a number of other results.
The properties of future and achronal sets or of horismos, the properties of domains of dependence, e.g. the strong casuality in the interior, the properties of Cauchy horizons, e.g. the property of being generated by  lightlike geodesics,   the whole causal ladder of spacetimes as improved in \cite{minguzzi07f,minguzzi08b}, including the placement of the non-imprisonment properties, and the transverse ladder,  the possibility of recovering the causal relation from the  family of time functions, all generalize trivially.
\end{remark}

Many properties of the Lorentz-Finsler distance function can be generalized. It is worth to mention (the proof is as in \cite[Lemma 4.4]{beem96})
\begin{proposition}
The Lorentz-Finsler distance is lower semi-continuous.
\end{proposition}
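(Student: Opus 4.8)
The plan is to establish lower semicontinuity directly from its sequential formulation: given $p_n\to p$ and $q_n\to q$, I want to show $\liminf_n d(p_n,q_n)\ge d(p,q)$. If $d(p,q)=0$ there is nothing to prove since $d\ge 0$, so I may assume $d(p,q)>0$ and fix an arbitrary $\ell$ with $0<\ell<d(p,q)$. The whole argument reduces to producing open neighborhoods of $p$ and of $q$ on which $d>\ell$, after which letting $\ell\uparrow d(p,q)$ gives the result.

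First I would use that $d(p,q)>\ell>0$ forces the existence of a future directed \emph{timelike} curve $\gamma\colon[0,1]\to M$ from $p$ to $q$ with $l(\gamma)>\ell$. Indeed, by definition of $d$ as a supremum there is a causal curve of length larger than $\ell$, and a causal curve of positive length is not an achronal lightlike geodesic; hence, by the deformation statement recalled at the beginning of this section (every causal curve is either an achronal lightlike geodesic or has its endpoints joined by a timelike deformation), and since $d$ is the supremum, such a timelike $\gamma$ of length $>\ell$ exists. By additivity of $l$ and the fact that the two end segments carry arbitrarily small length, I may then choose parameters $0<a<b<1$ with $l(\gamma\vert_{[a,b]})>\ell$, and set $p'=\gamma(a)$, $q'=\gamma(b)$, so that $p\ll p'$ and $q'\ll q$.

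Next I would invoke the openness of the chronological relation, which in the Finslerian setting follows from the existence of convex neighborhoods established in \cite{minguzzi13d} exactly as in the Lorentzian case. Since $p\in I^-(p')$ with $I^-(p')$ open and $p_n\to p$, we have $p_n\ll p'$ for all large $n$; symmetrically $q'\ll q_n$ for all large $n$. Concatenating a timelike curve from $p_n$ to $p'$, the middle segment $\gamma\vert_{[a,b]}$, and a timelike curve from $q'$ to $q_n$ yields a continuous causal curve from $p_n$ to $q_n$; because $l$ is additive and nonnegative on causal arcs, its length is at least $l(\gamma\vert_{[a,b]})>\ell$. Hence $d(p_n,q_n)>\ell$ for all large $n$, so $\liminf_n d(p_n,q_n)\ge\ell$, and letting $\ell\uparrow d(p,q)$ concludes the proof.

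The only delicate points here are imported facts rather than new computations: the openness of $I^{\pm}$ and the timelike-deformation property of positive-length causal curves. Both have already been transported word for word to Finsler spacetimes via convex neighborhoods earlier in the paper, so I expect no genuine obstacle; in particular the argument is purely order-theoretic and uses none of the Raychaudhuri or curvature machinery of the preceding sections.
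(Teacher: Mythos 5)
Your argument is correct and is essentially the proof the paper intends: the paper simply points to \cite[Lemma 4.4]{beem96}, whose proof is exactly this combination of the openness of $I^{\pm}$ with the concatenation of a slightly shortened timelike curve from $p$ to $q$. The only step worth making fully explicit is that the supremum defining $d$ can be computed over \emph{timelike} curves whenever it is positive (so that a timelike curve of length $>\ell$ really exists, giving $p\ll\gamma(a)$ and $\gamma(b)\ll q$); this is part of the standard convex-neighborhood machinery already imported at the start of this section, so it is not a genuine gap.
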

 The proof of the next result is also unaltered \cite[Lemma 4.5]{beem96}.
\begin{proposition}
In a globally hyperbolic spacetime the Lorentz-Finsler distance is finite and continuous.
\end{proposition}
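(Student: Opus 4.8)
The plan is to follow the classical argument of Beem, Ehrlich and Easley \cite[Lemma 4.5]{beem96}, which rests entirely on ingredients already transported to the Finsler setting in this and the preceding sections: the compactness of causal diamonds $J^+(p)\cap J^-(q)$ in a globally hyperbolic spacetime (this is part of the causal ladder, which the earlier remark asserts generalizes trivially), the limit curve theorems, and the upper semi-continuity of the length functional $l$. Since the Lorentz-Finsler distance has just been shown to be lower semi-continuous, it will suffice to prove that $d$ is finite and upper semi-continuous.

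For finiteness I would first record that in a globally hyperbolic spacetime strong causality holds, so for every compact set $K$ there is a uniform bound $L_K$ on the $h$-arc length of any continuous causal curve contained in $K$; this is a standard consequence of the limit curve theorems, since otherwise one could extract an inextendible causal curve imprisoned in $K$. Given $p,q$, every causal curve joining them lies in the compact diamond $K:=J^+(p)\cap J^-(q)$, hence has $h$-length at most $L_K$. Parametrizing such a $\gamma$ by $h$-arc length on an interval $[0,L]$ with $L\le L_K$, the integrand $\sqrt{-g_{\dot{x}}(\dot x,\dot x)}$ is bounded on $K$ by some constant $C_K$, thanks to the continuity of $g$ on the compact causal $h$-unit sphere bundle over $K$. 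Therefore $l(\gamma)\le C_K L_K$, and taking the supremum over $\gamma$ gives $d(p,q)\le C_K L_K<\infty$. The very same estimate, applied over $J^+(U)\cap J^-(W)$ for compact neighbourhoods $U\ni p$ and $W\ni q$, shows that $d$ is bounded on compact subsets of $M\times M$.

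For upper semi-continuity let $(p_n,q_n)\to(p,q)$; after passing to a subsequence realizing $\limsup_n d(p_n,q_n)$, which is finite by the previous paragraph, I would choose causal curves $\gamma_n$ from $p_n$ to $q_n$ with $l(\gamma_n)\ge d(p_n,q_n)-1/n$, so that $l(\gamma_n)\to\limsup_n d(p_n,q_n)$. For $n$ large the $\gamma_n$ all lie in the fixed compact diamond $J^+(U)\cap J^-(W)$, so the endpoint version of the limit curve theorem yields a subsequence converging $h$-uniformly to a continuous causal curve $\gamma$ from $p$ to $q$. The upper semi-continuity of the length functional then gives $\limsup_n d(p_n,q_n)=\limsup_n l(\gamma_n)\le l(\gamma)\le d(p,q)$, the last inequality being the definition of $d$ as a supremum. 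Combined with the already established lower semi-continuity, this proves that $d$ is continuous.

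The only point requiring care, and the sole place where global hyperbolicity is genuinely used, is the confinement of the near-maximizing curves $\gamma_n$ to a single compact set together with the guarantee that the extracted limit curve actually joins $p$ to $q$ rather than degenerating; this is exactly what the endpoint form of the limit curve theorem supplies. I expect no new Finslerian difficulty to arise, because every tool invoked---compact causal diamonds, the limit curve theorems, and the upper semi-continuity of $l$---has already been shown to hold verbatim in the Lorentz-Finsler setting, so the argument is indeed unaltered with respect to the Lorentzian one.
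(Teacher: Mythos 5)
Your argument is exactly the classical Beem--Ehrlich--Easley proof of \cite[Lemma 4.5]{beem96}, spelled out with the Finslerian ingredients (compact causal diamonds, limit curve theorems with endpoints, upper semi-continuity of $l$, lower semi-continuity of $d$) that the paper has already transported; the paper itself simply asserts that this proof carries over unaltered, so your proposal coincides with the intended one. No gap.
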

The Avez-Seifert connectedness theorem \cite{hawking73} generalizes with no alteration in proof.
\begin{proposition}
In a globally hyperbolic spacetime any two causally related events $p< q$ are connected by a maximizing causal geodesic.
\end{proposition}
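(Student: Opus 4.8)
The plan is to obtain the maximizing geodesic as the limit of a length-maximizing sequence of causal curves and then to argue that any maximizing causal curve must be a geodesic. First I would use global hyperbolicity to guarantee that the causal diamond $J^+(p)\cap J^-(q)$ is compact and that, by the proposition just established, the Lorentz-Finsler distance $d(p,q)$ is finite. Since $p<q$, the family of future directed causal curves from $p$ to $q$ is non-empty, so one may select a maximizing sequence $\gamma_n$ with $l(\gamma_n)\to d(p,q)$, all of whose images lie in the compact diamond. Parametrizing each $\gamma_n$ by $h$-length, strong causality in the interior (valid in a globally hyperbolic spacetime) together with the limit curve theorem prevents imprisonment, so the $h$-lengths remain uniformly bounded; passing to a subsequence one extracts an $h$-uniform limit $\gamma$, which by the limit curve theorem (in the version for sequences with prescribed endpoints, see \cite[Theor.~2.4]{minguzzi07c}) is a continuous causal curve joining $p$ to $q$.

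Next I would identify $\gamma$ as a maximizer. By the upper semicontinuity of the length functional with respect to $h$-uniform convergence on compact subsets one has $l(\gamma)\ge \limsup_n l(\gamma_n)=d(p,q)$, while $l(\gamma)\le d(p,q)$ holds by the very definition of $d$. Hence $l(\gamma)=d(p,q)$, so $\gamma$ realizes the distance and is globally maximizing.

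Finally I would show that a maximizing causal curve is a geodesic. Covering $\gamma$ by convex neighborhoods, on any subsegment contained in such a neighborhood $C$ the endpoints are causally related and, by the local maximization property of causal geodesics \cite[Theor.~6]{minguzzi13d}, the longest causal curve in $C$ joining them is the connecting geodesic; were $\gamma$ to differ from that geodesic on the subsegment, replacing it would strictly increase the total length, contradicting global maximality. Thus $\gamma$ coincides with a geodesic on each such subsegment and is a priori only a broken geodesic. To exclude corners I would invoke the dichotomy recalled in the previous section, namely that a causal curve is either an achronal lightlike geodesic or has its endpoints joined by a strictly longer timelike deformation: at a break point the incoming and outgoing tangents differ, so $\gamma$ admits such a deformation near the corner, again contradicting maximality (when $d(p,q)=0$ the same dichotomy forces $\gamma$ to be an achronal null geodesic, since any timelike deformation would give $d(p,q)>0$). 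Hence $\gamma$ is an unbroken causal geodesic realizing $d(p,q)$. The main obstacle is this last step, as it is the only place where genuine metric content beyond the already-imported causality machinery enters; however, since local maximization of geodesics and the convex-neighborhood deformation argument have both been established in the Finslerian setting, the proof indeed proceeds with no alteration relative to the Lorentzian case.
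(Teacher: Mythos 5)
Your proof is correct and follows essentially the same route as the paper, which simply imports the classical Avez--Seifert argument (maximizing sequence, compact causal diamond, limit curve theorem with endpoints, upper semicontinuity of length, then local maximization in convex neighborhoods to identify the limit as an unbroken geodesic) with no alteration, all the needed Finslerian ingredients having been established earlier in the section.
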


If the dimension of the Finsler spacetime is larger than two the Legendre map  $v \mapsto g_v(v,\cdot)$ is a bijection \cite{minguzzi13c}.
The gradient of a function $f\colon M \to \mathbb{R}$, is the vector field related to $\dd f$ by the Legendre map, thus the unique vector such that $g_{\nabla f}(\nabla f,\cdot)=\dd f$.  Observe that only for reversible metrics $\nabla (-f)=-\nabla f$ and we do not  assume reversibility.

A function which increases over every f.-d.\  casual curve is a {\em time function}. A {\em temporal}  function is a function $f\colon M \to \mathbb{R}$ such that for every $X$ belonging to the future causal cone, $\p_X t>0$. Clearly every temporal function is a time function.

\begin{proposition} \label{aii}
Let $(M,\mathscr{L})$ be a Finsler spacetime with dimension larger than two. Let $t\colon M \to \mathbb{R}$, then $t$ is a temporal function if and only if $\nabla (-t)$ is future directed timelike.
\end{proposition}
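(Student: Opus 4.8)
The plan is to prove both implications by unwinding the definition of the gradient through the Legendre map $v\mapsto g_v(v,\cdot)$, which is a bijection in dimension larger than two. Recall that $\nabla(-t)$ is the unique vector $w$ with $g_w(w,\cdot)=\dd(-t)=-\dd t$, equivalently $g_w(w,X)=-\p_X t$ for every $X$. So I want to relate the sign condition $\p_X t>0$ on the future causal cone to the statement that this distinguished vector $w$ is future directed timelike.

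For the ``if'' direction, suppose $w:=\nabla(-t)$ is future directed timelike, so $g_w(w,w)<0$. I would invoke the reverse Cauchy--Schwarz inequality for Lorentz--Finsler metrics \cite{minguzzi13c}: for $X$ in the future causal cone, $g_w(w,X)\le -\sqrt{-g_w(w,w)}\sqrt{-g_w(X,X)}\le 0$, with equality only when $X$ is proportional to $w$ (and strict when $X$ is future timelike or a distinct null direction). Hence $\p_X t=-g_w(w,X)\ge 0$, and I need to argue it is strictly positive. The delicate point is that $g_w$ is the fundamental tensor evaluated at the velocity $w$, not at $X$, so the reverse Cauchy--Schwarz inequality must be applied in the fixed quadratic form $g_w$ in which $w$ is timelike and the whole future cone lies in the future $g_w$-timelike/null region; I expect this to follow because the future causal cone of $\mathscr{L}$ is contained in the future cone of the Lorentzian scalar product $g_w$ (the indicatrix is convex and $w$ lies inside its cone). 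Positivity on the whole open future causal cone then gives $\p_X t>0$, so $t$ is temporal.

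For the ``only if'' direction, suppose $t$ is temporal, i.e.\ $\p_X t>0$ for all $X$ in the future causal cone, and let $w:=\nabla(-t)$, so $g_w(w,X)=-\p_X t<0$ for every such $X$. I must show $w$ itself is future directed timelike. The key observation is that $w\ne 0$ (else $\dd t=0$, contradicting $\p_X t>0$), and the linear functional $X\mapsto g_w(w,X)$ is negative on the entire open future causal cone. I would argue that a nonzero vector whose $g_w$-pairing is negative on all future causal directions must itself be future directed timelike: the hyperplane $\ker g_w(w,\cdot)$ does not meet the open future cone, so $w$ lies strictly inside (the $g_w$-dual of) that cone, forcing $g_w(w,w)<0$ and future orientation. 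Concretely, if $w$ were spacelike or past directed, one could produce a future causal $X$ with $g_w(w,X)\ge 0$, a contradiction.

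The main obstacle is the careful handling of the two distinct roles played by the velocity dependence of the metric: the gradient is defined self-referentially (the metric is evaluated at $w=\nabla(-t)$ itself), so one cannot directly apply sign/inequality arguments in a single fixed quadratic form without first anchoring everything to $g_w$. The cleanest route is to do all linear algebra in the fixed Lorentzian scalar product $g_w$, using that the future $\mathscr{L}$-causal cone is contained in the future $g_w$-causal cone and that $w$ is $g_w$-orthogonal (via $g_w(w,\cdot)=-\dd t$) to its own level hyperplane, reducing both implications to the standard Lorentzian fact relating timelike vectors and linear functionals that are signed on the future cone, together with the reverse Cauchy--Schwarz inequality of \cite{minguzzi13c} to upgrade non-strict to strict positivity.
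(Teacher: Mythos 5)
Your ``if'' direction is sound and essentially coincides with the paper's: the paper applies the mixed reverse Cauchy--Schwarz inequality of \cite{minguzzi13c} directly, $\p_X t=-g_{\nabla(-t)}(\nabla(-t),X)\ge \big(-g_X(X,X)\big)^{1/2}\big[-g_{\nabla(-t)}(\nabla(-t),\nabla(-t))\big]^{1/2}$ with equality only for proportional vectors, whereas you unroll its proof (containment of the $\mathscr{L}$-causal cone in the $g_w$-causal cone for $w$ future directed timelike, plus the ordinary Lorentzian inequality in the fixed scalar product $g_w$); either way strict positivity on lightlike $X$ comes from the equality case, as you note.

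The ``only if'' direction, however, has a genuine gap. You reduce it to the claim that a nonzero $w$ with $g_w(w,\cdot)<0$ on the future causal cone must be future directed timelike, and justify this by saying that $w$ lies in the $g_w$-dual of that cone, ``forcing $g_w(w,w)<0$''. That inference is false for a general convex cone: the $h$-dual of a cone strictly narrower than the $h$-causal cone contains $h$-spacelike vectors (in two-dimensional Minkowski space with $h=-\dd t^2+\dd x^2$, the dual of $\{t\ge 2|x|\}$ is $\{|b|\le 2a\}$, which contains the spacelike vector $(1,3/2)$). Your argument does exclude $w$ past directed causal and $w$ lightlike (pair the functional against $\pm w$ and use $g_w(w,w)=2\mathscr{L}(w)$), but the spacelike case is exactly where the ``standard Lorentzian fact'' cannot be invoked: the containment of the $\mathscr{L}$-causal cone in the $g_w$-causal cone is only available once $w$ is known to be future directed timelike, i.e.\ it presupposes the conclusion, so for spacelike $w$ the $g_w$-timelike hyperplane $\ker \dd t$ could a priori miss the (narrower) $\mathscr{L}$-cone entirely and no contradiction arises from linear algebra alone. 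What rescues the statement is the self-referential constraint $g_w(w,\cdot)=-\dd t$, i.e.\ the global structure of the Legendre map: in dimension larger than two it is injective and maps the future timelike cone onto the set of covectors that are negative on the future causal cone. This is precisely the result of \cite[Sect.\ 2.4-2.5]{minguzzi13c} that the paper cites for this implication; you would need either to cite it as well or to reprove it, and your sketch does neither.
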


Thus a temporal function $t$, is a function such that $\nabla (-t)$ is future directed timelike.

\begin{proof}
Let $p=g_{\nabla (-t)} (\nabla(-t),\cdot)$. For every $X$, $-\dd t(X)=p(X)$; thus by the results of \cite[Sect.\ 2.4-2.5]{minguzzi13c} there is a unique vector $u$ such that $p=g_u(u,\cdot)$ and this vector is future directed timelike. Conversely, if $\nabla (-t)$ is future directed timelike, then by the reverse Cauchy-Schwarz inequality \cite{minguzzi13c}, for every future directed causal vector $X$,
\[
\p_X t=-g_{\nabla (-t)} (\nabla(-t),X)\ge \big(-g_X(X,X)\big)^{1/2} \big[-g_{\nabla (-t)}(\nabla (-t),\nabla (-t))\big]^{1/2}
\]
with equality only if $X$ and $\nabla (-t)$ are proportional which implies $\p_X t>0$ in any case.
\end{proof}

A Finsler spacetime $(M,\mathscr{L})$ is stably causal if there is another Finsler Lagrangian $\mathscr{L}'$ which is causal and whose f.d.-timelike cones contain the f.d.-causal cones of $\mathscr{L}$.
The next result follows from the previous one and Fathi and Siconolfi's  \cite[Theor.\ 1.1]{fathi12}

\begin{proposition} \label{moa}
Let $(M,\mathscr{L})$ be a smooth Finsler spacetime with dimension larger than two.
It is stably causal if and only if it admits a smooth time function $t\colon M\to \mathbb{R}$ such that $\nabla(-t)$ is future directed timelike.
\end{proposition}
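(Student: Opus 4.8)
The plan is to deduce the statement from the chain of equivalences
$(M,\mathscr{L})$ stably causal $\iff$ $(M,\mathscr{L})$ admits a smooth temporal function $\iff$ $(M,\mathscr{L})$ admits a smooth $t$ with $\nabla(-t)$ future directed timelike. The second equivalence is exactly Proposition \ref{aii}; it is here that the hypothesis $\dim M>2$ is used, since it guarantees that the Legendre map is a bijection and hence that $\nabla$ is defined. The first equivalence is the content of Fathi and Siconolfi \cite[Theor.\ 1.1]{fathi12}, which characterizes stable causality of a smooth cone structure through the existence of a smooth temporal function. I would establish the two implications separately.

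For ``temporal function $\Rightarrow$ stably causal'' I would argue directly, without invoking \cite{fathi12}. Suppose $t$ is smooth with $\nabla(-t)$ future directed timelike; by Proposition \ref{aii} it is temporal, so $\p_X t>0$ for every $X$ in the future causal cone minus the origin. By positive homogeneity it suffices to control $\p_X t$ on the $h$-unit future causal directions, which form a compact subset of the unit sphere bundle, so over any compact region there is a uniform positive lower bound for $\p_X t$ there, and by continuity this bound persists on a strictly wider cone field. I would therefore choose a Finsler Lagrangian $\mathscr{L}'$ whose future timelike cones strictly contain the future causal cones of $\mathscr{L}$ yet still lie in the region where $\p_X t>0$. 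Then $t$ is a time function for $\mathscr{L}'$, so $(M,\mathscr{L}')$ admits no closed future directed causal curve and is causal; by the definition of stable causality adopted above, $(M,\mathscr{L})$ is then stably causal.

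For the converse ``stably causal $\Rightarrow$ temporal function'' I would invoke Fathi and Siconolfi. Stable causality provides a causal Finsler Lagrangian $\mathscr{L}'$ whose future timelike cones contain the future causal cones of $\mathscr{L}$; its smooth convex cone structure falls within the class treated in \cite[Theor.\ 1.1]{fathi12}, and its stable causality yields a smooth function $t$ whose differential is positive on the $\mathscr{L}'$ future timelike cone, i.e.\ a temporal function for $\mathscr{L}'$. Since that cone contains the future causal cone of $\mathscr{L}$, the same $t$ satisfies $\p_X t>0$ on the future causal cone of $\mathscr{L}$ and is therefore temporal for $\mathscr{L}$; Proposition \ref{aii} then gives that $\nabla(-t)$ is future directed timelike. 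As every temporal function is a time function, this furnishes the desired smooth time function.

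The main obstacle I anticipate is not computational but one of careful translation: one must check that the smooth convex cone field of a Lorentz-Finsler spacetime is an instance of the cone structures to which \cite[Theor.\ 1.1]{fathi12} applies, and that the notion of stable causality used there coincides with the cone-widening definition adopted in this paper. A secondary, lighter point is the uniformity of $\p_X t>0$ used in the cone-opening step, which is precisely where the compactness of the fibers of the causal cone and the positive homogeneity of $\mathscr{L}$ are needed.
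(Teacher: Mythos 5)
Your proposal follows essentially the same route as the paper: Proposition \ref{aii} handles the equivalence between temporality and $\nabla(-t)$ being future directed timelike, Fathi--Siconolfi supplies the temporal function from stable causality, and the converse is a cone-widening argument. For that converse the paper's construction is slightly more geometric than your compactness bound: at each $p$ the hyperplane $\ker \dd t$ and the convex f.d.-causal cone of $\mathscr{L}$ are disjoint away from the origin, so one inserts a smoothly varying elliptic (round) cone between them, obtaining a Lorentzian metric with wider cones for which $t$ is still temporal; this is the same idea in substance, and your uniform-lower-bound argument would also work once made smooth in $p$. One small repair in your converse direction: you apply \cite[Theor.\ 1.1]{fathi12} to $\mathscr{L}'$ and invoke ``its stable causality,'' but the definition only gives that $\mathscr{L}'$ is \emph{causal} with wider cones, which does not by itself make $\mathscr{L}'$ stably causal; the clean statement is that the cone structure of $\mathscr{L}$ itself is stably causal, and Fathi--Siconolfi applied directly to it (or to an interpolated cone structure strictly between $\mathscr{L}$ and $\mathscr{L}'$) yields the temporal function, which is then temporal for $\mathscr{L}$. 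This is how the paper reads the citation, and the fix is immediate.
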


\begin{proof}
The only if direction follows as said from \cite[Theor.\ 1.1]{fathi12} and the Prop.\ \ref{aii}. For the other direction we can find easily a Lorentzian metric with wider light cones, just observe that at $p\in M$, $\ker \dd t$ is a hyperplane of $T_pM$ passing through the origin while the Finsler f.d.-causal cone is a convex cone pointed at the origin, thus between the two we can find an elliptic (round) cone which varies smoothly with $p$.
\end{proof}

%
%

The next result uses Fathi and Siconolfi's  \cite[Theor.\ 1.3]{fathi12}.

\begin{proposition}
A smooth Lorentz-Finsler spacetime $(M,\mathscr{L})$ is globally hyperbolic if and only if it admits a smooth Cauchy hypersurface if and only if it admits a smooth onto time function $t\colon M \to \mathbb{R}$ such that the level sets $S_c=\{p: t(p)=c\}$ are smooth spacelike Cauchy hypersurfaces.

In this case $M$ is diffeomorphic to $\mathbb{R} \times S$ where the first coordinate is the function mentioned above and the level sets $S_a$ are  diffeomorphic to $S$, $S$ being the quotient of $M$ under the flow of $\nabla (-t)$. The integral lines $\p_t$ are timelike  and the metric $g_v$ reads for $v$ sufficiently close to $\p_t$
\begin{equation} \label{ldo}
g_{v}= -a_{v}^2(\dd t+b_v)^2+ h_v,
\end{equation}
where $a_v, b_v, h_v$ are respectively: a positive function, 1-form and Riemannian metric over $S_t$ (which can be seen as analogous time-dependent objects on $S$), all positive homogeneous of degree zero in $v$.
Moreover, $\p_t=a_{\p_t}^2 \nabla(-t)$,
\[
-g_{\nabla(-t)}(\nabla(-t),\nabla(-t))=1/a_{\p_t}^2, \qquad b_{\p_t}=0 ,
\]
and $h_{\p_t}(t,s)$ is the metric induced on $S_t$,  $h_{\p_t}(t,s)(\p_t,\cdot)=0$. 
\end{proposition}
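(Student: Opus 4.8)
The plan is to establish the three stated equivalences and the adapted metric form in sequence, leaning heavily on the two cited theorems of Fathi and Siconolfi and on the characterization of temporal functions in Proposition \ref{aii}. First I would address the equivalence with the existence of a smooth onto time function whose level sets are spacelike Cauchy hypersurfaces. The hard direction is that global hyperbolicity produces such a function: here I would invoke \cite[Theor.\ 1.3]{fathi12}, which in the Finslerian (or Lagrangian) setting yields a smooth Cauchy time function; combined with Proposition \ref{aii} one promotes this to a temporal function $t$ with $\nabla(-t)$ future directed timelike, so that its level sets $S_c$ are spacelike. Spacelikeness of a level set follows because $\ker \dd t = \ker g_{\nabla(-t)}(\nabla(-t),\cdot)$ is exactly the $g_{\nabla(-t)}$-orthogonal complement of the timelike vector $\nabla(-t)$, hence a spacelike hyperplane. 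The converse implications (a smooth Cauchy hypersurface, or such a time function, implies global hyperbolicity) are the standard easy directions: a level set of a proper onto time function that every inextendible causal curve crosses exactly once is a Cauchy hypersurface, and the existence of a Cauchy hypersurface is the classical definition of global hyperbolicity, whose equivalence with the other formulations was already imported via the limit-curve and causal-ladder generalizations recorded in the earlier remarks.

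Next I would construct the diffeomorphism $M \cong \mathbb{R}\times S$. Since $\nabla(-t)$ is a smooth future directed timelike vector field with $\p_X t>0$ along it, its flow lines are timelike curves transverse to every level set $S_c$, and $t$ strictly increases along each flow line. Because $t$ is onto and its level sets are Cauchy, each flow line meets every level set exactly once, so the flow gives a bijection between $M$ and $\mathbb{R}\times S$, where $S$ is the quotient of $M$ under the flow and the first factor records the value of $t$. Rescaling the flow parameter by $t$ itself, one arranges $\p_t = a_{\p_t}^2\,\nabla(-t)$ for a positive function $a_{\p_t}$; computing $g_{\nabla(-t)}(\p_t,\nabla(-t)) = a_{\p_t}^2\, g_{\nabla(-t)}(\nabla(-t),\nabla(-t))$ and using $\dd t(\p_t)=1$ together with $\dd t = -g_{\nabla(-t)}(\nabla(-t),\cdot)$ pins down the normalization $-g_{\nabla(-t)}(\nabla(-t),\nabla(-t)) = 1/a_{\p_t}^2$.

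For the metric form \eqref{ldo} I would work pointwise in the splitting $T_pM = \mathrm{Span}(\p_t)\oplus TS_t$, where $TS_t=\ker\dd t$. For $v$ close to $\p_t$ the fundamental tensor $g_v$ is Lorentzian, so I decompose it in the $1{+}n$ block form dictated by the coframe $\{\dd t, \text{(coframe on } S_t)\}$. The coefficient of $(\dd t)^2$ defines $-a_v^2$ after completing the square, the mixed $\dd t$-with-spatial terms are absorbed into the $1$-form $a_v^2\, b_v$, and the remaining positive definite block on $TS_t$ is $h_v$; positive definiteness of $h_v$ is exactly the spacelikeness of $TS_t$ for $g_v$. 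Degree-zero homogeneity of $a_v,b_v,h_v$ in $v$ is inherited from the degree-zero homogeneity of $g_v$ in $v$. Finally, evaluating at $v=\p_t$: the normalization above gives $-g_{\nabla(-t)}(\nabla(-t),\nabla(-t))=1/a_{\p_t}^2$; the vanishing $b_{\p_t}=0$ follows because $\p_t = a_{\p_t}^2\,\nabla(-t)$ is, up to positive scale, the $g_{\p_t}$-normal to $S_t$, so that $g_{\p_t}(\p_t,\cdot)$ annihilates $TS_t$ and leaves no cross term; and $h_{\p_t}$ is then literally the restriction $g_{\p_t}\vert_{TS_t}$, i.e.\ the induced Riemannian metric with $h_{\p_t}(\p_t,\cdot)=0$.

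The main obstacle I anticipate is verifying the smoothness and the transversality underlying the product splitting—specifically, that the flow of $\nabla(-t)$ together with the Cauchy property yields a \emph{global} diffeomorphism rather than merely a local one, and that the adapted coefficients $a_v,b_v,h_v$ are genuinely smooth and well-defined on a full neighborhood of $\p_t$ in the fibers (the restriction ``for $v$ sufficiently close to $\p_t$'' being precisely what guarantees $g_v$ stays Lorentzian of the required signature so that the block decomposition persists). Everything else is either a direct citation of Fathi--Siconolfi or a routine manipulation of the block form of a Lorentzian bilinear form, so the proof should be short, with the genuine content outsourced to the cited existence theorems and to Proposition \ref{aii}.
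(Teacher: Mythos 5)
Your proposal matches the paper's proof in all essentials: the hard direction is outsourced to Fathi--Siconolfi (the paper actually cites the cone-structure version in \cite{fathi08} at this point), the converse goes through the limit curve theorems, and the splitting is obtained exactly as you describe, by flowing along $W=\nabla(-t)/[-g_{\nabla(-t)}(\nabla(-t),\nabla(-t))]$ so that $\dd t(W)=1$, with the block decomposition of $g_v$ and the identities $b_{\p_t}=0$, $g_{\p_t}(\p_t,\p_t)=-a_{\p_t}^2$ read off just as you do. The transversality/global-diffeomorphism point you flag is handled in the paper by the Cauchy property (every integral line of $W$ meets every $S_a$), which is the same resolution you indicate.
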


Although the previous result provides the typical metric splitting of a globally hyperbolic spacetime $(M, g_{\p_t})$,  it cannot be obtained studying $(M, g_{\p_t})$ since this spacetime is not necessarily globally hyperbolic as the light cones of $g_{\p_t}$ could be wider than those of $(M,\mathscr{L})$.

As another remark, since every Finsler spacetime admits globally hyperbolic neighborhoods (this is immediate from \cite[Lemma 1, Sect.\ 1.4]{minguzzi14c}), its metric can always be written locally as in  Eq.\ (\ref{ldo}).

\begin{proof}
The proof that global hyperbolicity implies the existece of a smooth Cauchy time function is given in \cite{fathi08} for more general cone structures. The fact that the existence of such time function implies global hyperbolicity is immediate from the validity of limit  curve theorems, applying the usual arguments used in Lorentzian geometry.
For the remainder, let us consider the vector field $V=\nabla (-t)$ so that $g_{V}(V,\cdot)=-\dd t$. By the results of \cite{minguzzi13c}, since $\ker \dd t$ is spacelike and $-\dd t$ has negative value on future causal vectors we have that $V$ is future directed timelike, in particular $\p_V t=-g_V(V,V)>0$. The future directed timelike vector field $W=V/[-g_V(V,V)]$ (observe that $g_W(W,W)g_V(V,V)=1$) induces a flow $\varphi_a\colon M\to M$ such that $\varphi_a(S_b)=S_{b+a}$. Each integral line of $W$ intersects every $S_a$, $a\in \mathbb{R}$, which are therefore all diffeomorphic to a quotient manifold $S=M/W$. For every $p\in M$ there is $s\in S_0$ such that $p=\varphi_t(s)$. The map $(t,s) \mapsto p(t,s)$ is the searched diffeomorphism. Observe that by construction $W=\p_t$ and since $-g_W(W,W)>0$ we have for $v$ sufficiently close to $\p_t$, $-g_v(\p_t,\p_t)>0$, from which it follows immediately that the Lorentzian metric $g_v$ can be written as in Eq.\ (\ref{ldo}) for suitable tensors $a_v,b_v, h_v$.
Since the equality $g_{V}(V,\cdot)=-\dd t$ reads $[-g_W(W,W)]^{-1}g_{W}(W,\cdot)=-\dd t$ we get $b_{\p_t}=0$. Thus if $X,Y$ are tangent to $S_t$, $g_{\p_t}(X,Y)=h_{\p_t}(X,Y)$ which proves that $h_{\p_t}$ is the induced metric. The remaining identity follows from $g_{\p_t}(\p_t,\p_t)=-a_{\p_t}^2$.
\end{proof}

As a final result, it is interesting to observe that the local differentiability properties of horizons \cite{beem98,chrusciel98b,chrusciel98,chrusciel01} depend only on their local semi-convexity \cite{minguzzi14d}. The local semi-convexity of the horizons is obtained through a simple argument which uses  properties of the exponential map which are preserved in the Finslerian case. Thus we can conclude

\begin{proposition}
All the local results on the differentiability of horizons extend word for word to the Finslerian domain (e.g.\ horizons are differentiable precisely at points belonging to just one generator; the horizon is $C^1$ on the set of differentiability points;  the set  of non-differentiability points of $H$  is  countably $\mathcal{H}^{n-1}$-rectifiable, thus its Hausdorff dimension is at most $n-1$; etc.).
\end{proposition}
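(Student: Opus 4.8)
The plan is to reduce the statement to a single analytic fact — that a locally semi-convex hypersurface automatically enjoys all the quoted differentiability properties — and then to establish the local semi-convexity of horizons by an exponential-map comparison that is insensitive to the passage from the Lorentzian to the Finslerian setting.

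First I would recall what is meant by a horizon: in every case of interest (event horizons, past or future Cauchy horizons, boundaries of future or past sets) $H$ is a locally achronal topological hypersurface through each point of which passes a null geodesic generator contained in $H$ and inextendible to one side. By the results collected in \cite{minguzzi14d}, the differentiability dichotomy quoted in the statement — differentiability exactly at the points lying on a single generator, $C^1$ regularity on the differentiability set, and countable $\mathcal{H}^{n-1}$-rectifiability of the non-differentiability set — is a purely analytic consequence of the local semi-convexity of $H$, obtained via Alexandrov's theorem and the elementary theory of semi-convex functions; it uses no metric data beyond a graph representation of $H$. Hence it suffices to prove that, in coordinates adapted to a convex neighbourhood, $H$ is locally the graph of a semi-convex function.

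For the semi-convexity I would argue exactly as in the Lorentzian case. Fix $p\in H$ and a convex normal neighbourhood $C$ of $p$ (these exist for Finsler spacetimes, see \cite{minguzzi13d}), and choose coordinates in which $H$ is the graph of a Lipschitz function $f$ over a ball in a spacelike slice, with the generators nearly transverse to the slice. For each point $q$ of $H$ near $p$, follow the generator through $q$ a fixed affine amount to a vertex $w_q$ lying in a compact set $K\subset C$; the null cone with vertex $w_q$ through $q$ is, away from its vertex, the image under the exponential map of a piece of the null cone in $T_{w_q}M$, hence a smooth null hypersurface, and by achronality $H$ is locally supported on one side by it at $q$. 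Representing the graph of $H$ as the one-sided envelope — a supremum (for a future horizon), respectively an infimum (for a past horizon) — of the smooth graphing functions of these cones, one gets that $f + \lambda\,|x|^2$ is convex for a suitable $\lambda$, i.e.\ $f$ is semi-convex (resp.\ semi-concave), once the second fundamental forms of the comparison cones are bounded on the relevant side, uniformly in $q\in K$.

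The crux, and the only place where geometry enters, is therefore the uniform one-sided bound on the second fundamental forms of these light cones over $K$. This is a statement about the first and second derivatives of the Finsler exponential map: on $K$, kept at positive affine distance from the vertices, $\exp$ is a smooth diffeomorphism onto its image free of conjugate points, the relevant focusing being controlled by the Jacobi and Raychaudhuri analysis of Section \ref{hpd}, so the graphing functions of the cones have locally bounded $C^2$ norms with a constant depending only on $K$. A supremum (resp.\ infimum) of a family of functions sharing this common modulus is again semi-convex (resp.\ semi-concave), whence $f$ is semi-convex, and the differentiability properties follow in either case. Since convex neighbourhoods, the smoothness of $\exp$, and the focusing machinery all transfer verbatim to the Finslerian domain, the whole argument — and with it every local differentiability property stated — carries over word for word. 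I expect the main obstacle to be precisely the verification that these exponential-map bounds are genuinely uniform on compacta in the Finsler case; this is guaranteed by the existence of convex neighbourhoods together with the Finslerian focusing results already established.
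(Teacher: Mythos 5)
Your proposal is correct and follows essentially the same route as the paper, which only sketches it: reduce everything to the local semi-convexity of the horizon (the analytic consequences being those collected in \cite{minguzzi14d}), and obtain semi-convexity by one-sided support of $H$ at each nearby point by local past/future null cones issued from vertices a fixed affine distance along the generators, with uniform two-sided $C^2$ bounds on their graphing functions coming from the smoothness of the exponential map on convex neighbourhoods away from the vertex. The only cosmetic difference is that you invoke the Raychaudhuri/Jacobi machinery of Section \ref{hpd} to exclude conjugate points, whereas for vertices at small affine distance this is already guaranteed by the exponential map being a diffeomorphism there; this does not affect the validity of the argument.
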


\section{Conformal transformations}

A diffeomorphism $f\colon M \to M'$ is an isometry between pseudo-Finsler spaces $(M, \mathscr{L})$ and $(M', \mathscr{L}')$,  if $\mathscr{L}=f^*\mathscr{L}'$ where $(f^*\mathscr{L}')(x,v)=\mathscr{L}'(f(x),f_* (v))$. By positive homogeneity this condition can be equivalently written  $g=f^*g'$. We have a {\em conformal transformation} if $g$ and $f^* g'$ are proportional at each point of $E$. Actually, as shown by Knebelman \cite{knebelman29b,hashiguchi76} the conformal factor is the pullback of a function on $M$, that is, it is independent of $v$.

Let us investigate how the spray, the non-linear connection, and the Ricci scalar $Ric(v)$ transform under a conformal replacement $\tilde{g}=\Omega^2 g$.
Denoting with $\gamma^\alpha_{\beta \gamma}$ the usual Christoffel symbols, and setting $v_\gamma=g_{v\, \gamma \beta} v^\beta$, $\ln \Omega^{,\alpha}=g_v^{\alpha \delta} \ln \Omega_{,\delta}$
\[
\tilde\gamma_{\beta \gamma}^{\alpha}(x,v)=\gamma_{\beta \gamma}^{\alpha}(x,v)+\delta^\alpha_\beta \ln \Omega_{, \gamma}+\delta^\alpha_\gamma \ln \Omega_{, \beta}-g_{\beta \gamma} \,  \ln \Omega^{, \alpha},
\]
we obtain using  (Eq.\ (\ref{axu})) $2G^\alpha=\gamma^\alpha_{\beta \gamma} v^\beta v^\gamma$,
\begin{equation} \label{syh}
\tilde G^\alpha(x,v)= G^\alpha(x,v)+ v^\alpha \p_v \ln \Omega-\frac{1}{2}\, g_v(v,v) \ln \Omega^{,\alpha} .
\end{equation}
From this equation we obtain the generalization of a well known result
\begin{proposition} \label{sin}
The unparametrized lightlike geodesics are conformally invariant, while the parametrization changes as follows: $\dd \tilde t/\dd t=\Omega^2$.
\end{proposition}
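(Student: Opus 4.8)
The plan is to extract everything from the transformation law (\ref{syh}) for the spray coefficients together with the degree-two positive homogeneity of $G^\alpha$. First I would fix a lightlike geodesic $x(t)$ of $\mathscr{L}$, so that $\ddot x^\alpha+2G^\alpha(x,\dot x)=0$ and $g_{\dot x}(\dot x,\dot x)=0$. I would record at the outset that the null condition is preserved along the curve: by the first identity in (\ref{doi}), $\tfrac{1}{2}\tfrac{\dd}{\dd t}g_{\dot x}(\dot x,\dot x)=g_{\dot x}(\dot x,D_{\dot x}\dot x)=0$, so $g_{\dot x}(\dot x,\dot x)$ stays zero; and since $\tilde g=\Omega^2 g$, the same curve is $\tilde g$-null as well. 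Lightlikeness of the direction is moreover parametrization-independent, because $g$ is positive homogeneous of degree zero in the velocity.

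Second, I would apply the $\tilde{\mathscr{L}}$-geodesic operator to the same curve $x(t)$, inserting (\ref{syh}). The term $-\tfrac{1}{2}g_{\dot x}(\dot x,\dot x)\,\ln\Omega^{,\alpha}$ drops out by nullity, leaving
\[
\ddot x^\alpha+2\tilde G^\alpha(x,\dot x)=2(\p_{\dot x}\ln\Omega)\,\dot x^\alpha=f(t)\,\dot x^\alpha,\qquad f(t):=2\tfrac{\dd}{\dd t}\ln\Omega(x(t)).
\]
This is precisely the pregeodesic equation for $\tilde{\mathscr{L}}$ (the right-hand side is proportional to $\dot x^\alpha$), so the image of $x$ is an unparametrized $\tilde{\mathscr{L}}$-geodesic. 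This already establishes the conformal invariance of the unparametrized lightlike geodesics.

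Third, to recover the change of affine parameter I would pass to a new parameter $\tilde t=\tilde t(t)$, set $\psi:=\dd\tilde t/\dd t$, and express the $\tilde t$-geodesic equation in terms of $t$-derivatives. Here the degree-two homogeneity of $G^\alpha$ is essential: it gives $\tilde G^\alpha(x,\dd x/\dd\tilde t)=(\dd t/\dd\tilde t)^2\,\tilde G^\alpha(x,\dot x)$, so that the chain-rule expansion $\tfrac{\dd^2 x^\alpha}{\dd\tilde t^2}=\psi^{-2}\ddot x^\alpha-\psi^{-3}\dot\psi\,\dot x^\alpha$ collapses the full equation to the scalar condition $\dot\psi=f\psi$, i.e.\ $\tfrac{\dd}{\dd t}\ln\psi=2\tfrac{\dd}{\dd t}\ln\Omega$. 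Integrating yields $\psi=\dd\tilde t/\dd t=\Omega^2$ up to the multiplicative constant inherent in the freedom of affine reparametrization, which is the asserted relation.

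The only delicate step is the reparametrization bookkeeping in the last paragraph: one must combine the chain-rule terms with the inhomogeneous term $f\dot x^\alpha$ so that they match in the $\dot x^\alpha$ direction. No real obstacle arises, however, because the homogeneity of the spray guarantees that every contribution points along $\dot x^\alpha$, reducing the whole system to the single first-order linear ODE for $\psi$; the only points requiring care are orienting the reparametrization so that $\psi>0$ and fixing the integration constant consistently with the chosen affine scale.
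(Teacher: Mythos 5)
Your proof is correct and follows exactly the route the paper intends: the paper derives Proposition \ref{sin} directly from the spray transformation law (\ref{syh}), where on null vectors the term $-\tfrac12 g_v(v,v)\ln\Omega^{,\alpha}$ drops and the remaining difference $2v^\alpha\p_v\ln\Omega$ is a pure reparametrization term integrating to $\dd\tilde t/\dd t=\Omega^2$. Your write-up merely makes explicit the details (preservation of nullity, the homogeneity-based reparametrization bookkeeping) that the paper leaves to the reader.
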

Since the non-linear connection is $N^\alpha_\beta=\p G^\alpha/\p v^\beta$, it transforms as follows
\[
\tilde N_{ \gamma}^{\alpha}(x,v)=N_{ \gamma}^{\alpha}(x,v)+v^\alpha \ln \Omega_{, \gamma}+\delta^\alpha_\gamma \p_v \ln \Omega- v_\gamma \,  \ln \Omega^{, \alpha} +g_v(v,v) C_\gamma^{\alpha \delta} \ln \Omega_{,\delta} .
\]
By Eq.\ (\ref{nsp}) a variation $\delta N^\alpha_\beta=\tilde N^\alpha_\beta -N^\alpha_\beta$ of the non-linear connection induces a variation of the non-linear curvature given by
\[
\delta R^\alpha_{\beta \gamma}=\nabla^{HB}_{\beta} \delta N^\alpha_\gamma -\nabla^{HB}_{\gamma} \delta N^\alpha_\beta -\delta N^\nu_\beta \frac{\p \delta N^\alpha_\gamma}{\p v^\nu}+\delta N^\nu_\gamma \frac{\p \delta N^\alpha_\beta}{\p v^\nu},
\]
where $\nabla^{HB}$ is the Berwald horizontal covariant derivative. From Eq.\ (\ref{hhn}), since  the horizontal derivative of $v^\alpha$ vanishes and $\frac{\p (\delta N^\alpha_\gamma v^\gamma)}{\p v^\nu}=2 \delta N^\alpha_\nu$
\[
\delta Ric(v)=\nabla^{HB}_{\alpha} (\delta N^\alpha_\gamma v^\gamma) - v^\gamma \nabla^{HB}_{\gamma} \delta N^\alpha_\alpha -\delta N^\nu_\alpha \delta N^\alpha_\nu+v^\gamma\delta N^\nu_\gamma \frac{\p \delta N^\alpha_\alpha}{\p v^\nu}.
\]
 In order to simplify this expression for the found  $\delta N$ it is convenient to replace the Berwald derivative with the Chern-Rund derivative,  by making use of  a few facts. The Landsberg tensor can be written $L_{\alpha \beta \gamma}=v^\sigma \nabla^{HB}_\sigma C_{\alpha \beta \gamma}=-\frac{1}{2} \nabla^{HB}_\alpha g_{\beta \gamma}$. Since it is annihilated by $v$, we have   $v^\gamma \nabla_\gamma^{HB}=v^\gamma \nabla_\gamma^{HC}$ where $\nabla^{HC}$ is the horizontal Cartan (Chern-Rund) covariant derivative. After a long but straightforward calculation we arrive at (for shortness we set $\varphi=\ln \Omega$, $\tilde{g}=e^{2\varphi} g$)
  \begin{align}
(\delta R_v)^\alpha_\beta &= v^\alpha \big[   v^\nu \varphi_{,\nu \vert \beta}-\varphi_{,\beta} (\p_v \varphi)-g_v(v,v) C^{\mu\nu}_{\beta} \varphi_{,\mu}\varphi_{,\nu}\big] \nonumber \\
&\quad \ +v_\beta \big[v^\alpha g^{\mu \nu}\varphi_{,\mu} \varphi_{, \nu}+g^{\alpha \delta} v^\gamma \varphi_{,\delta \vert \gamma}  -g_v(v,v) C^{\alpha \mu \nu} \varphi_{, \mu} \varphi_{, \nu} - g^{\alpha \delta} \varphi_{, \delta} (\p_v \varphi)\big]\nonumber\\
&\quad \ +\delta^\alpha_\beta\big[-v^\nu v^\gamma \varphi_{,\nu\vert \gamma} +(\p_v \varphi)^2-g_v(v,v)g^{\mu \nu} \varphi_{,\mu} \varphi_{,\nu} \big] \label{ful}  \\
&\quad \ +g_v(v,v)\Big[-2 L^{\alpha \delta}_\beta \varphi_{,\delta}  -g^{\alpha \delta} \varphi_{,\delta \vert \beta}-C^{\alpha \delta}_\beta v^\gamma \varphi_{,\delta\vert \gamma}+g^{\alpha \delta} \varphi_{,\delta} \varphi_{,\beta} \nonumber \\
&\quad \ -2 (\p_v \varphi) C^{\alpha \delta}_\beta \varphi_{,\delta}  -g_v(v,v)C^{\nu \mu}_\beta C^{\alpha \delta}_\nu \varphi_{,\mu} \varphi_{,\delta}-g_v(v,v) g^{\nu \mu} \varphi_{,\mu} \frac{\p C^{\alpha \delta}_\beta}{\p v^\nu} \,\varphi_{,\delta}\Big] \nonumber
\end{align}
 %
where $\vert$ is the horizontal covariant derivative with respect to the Cartan (Chern-Rund) connection of $\mathscr{L}$.

In this expression there are only terms proportional to $v^\alpha$, $v_\beta$, $\delta^\alpha_\beta$ or $g(v,v)$. In Sect.\ \ref{ral} we considered the case $g(v,v)=0$  (there $v$ was denoted $n$)  and we restricted the action of $R_v$ to a subspace annihilated by $v_\beta$. Furthermore, we  passed to a quotient, by regarding two vectors as equivalent whenever they differ by a term proportional to $v$. In that context, the previous expression shows that  $\delta \bar{R}\propto Id$, and hence that its traceless part is left invariant by the conformal transformation. We conclude
\begin{theorem}
The endomorphism $\bar{C}$ of Sect.\ \ref{ral} is conformally invariant.
\end{theorem}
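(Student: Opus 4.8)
The plan is to read the conformal invariance straight off the explicit variation formula (\ref{ful}) for $\delta R_v$, using the structural fact that every term appearing there carries a factor of exactly one of the four types $v^\alpha$, $v_\beta$, $\delta^\alpha_\beta$, or $g_v(v,v)$. First I would specialize to the null setting of Section \ref{ral}, taking $v=n$ with $g_n(n,n)=0$ and writing $\varphi=\ln\Omega$, $\tilde g=e^{2\varphi}g$. Note that $n$ stays lightlike for $\tilde g=\Omega^2 g$, so the comparison is carried out at one and the same $v=n$. Setting $g_v(v,v)=0$ annihilates the entire final bracket of (\ref{ful}) together with the $g_v(v,v)$ contributions buried in the first three brackets, leaving only the pieces headed by $v^\alpha$, by $v_\beta$, and by $\delta^\alpha_\beta$.

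Next I would recall that $\bar R$ is the endomorphism of the quotient bundle $V=TH/\!\!\sim$ whose representatives are vectors $X\in T_pH=\ker g_n(n,\cdot)$, i.e.\ vectors with $n_\beta X^\beta=0$. Applying $(\delta R_v)^\alpha_\beta$ to such an $X$ contracts the lower index against $X^\beta$, so the term proportional to $v_\beta=n_\beta$ (the second line of (\ref{ful})) is killed identically by $n_\beta X^\beta=g_n(n,X)=0$. Passing then to the quotient, the surviving $v^\alpha=n^\alpha$ term points along the generator $n$, hence represents the zero class in $V$ and descends to $0$. What is left is the $\delta^\alpha_\beta$ term, a scalar multiple of the identity, so that $\delta\bar R=\lambda\,\textrm{Id}$ with
\[
\lambda=-n^\nu n^\gamma\varphi_{,\nu\vert\gamma}+(\p_n\varphi)^2 ,
\]
which is precisely the assertion $\delta\bar R\propto\textrm{Id}$ announced before the theorem.

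Finally, since $\bar C$ is the trace-free part of $\bar R$ and $\dim V=n-1$, I would conclude
\[
\delta\bar C=\delta\bar R-\frac{1}{n-1}\,(\textrm{tr}\,\delta\bar R)\,\textrm{Id}=\lambda\,\textrm{Id}-\frac{1}{n-1}\,(n-1)\lambda\,\textrm{Id}=0 ,
\]
which is the claim. The genuinely substantive input here is the long computation already encoded in (\ref{ful}); granted that formula, the present argument is essentially bookkeeping. The one point I would treat with care — and the only place where something could go wrong — is verifying that the two reductions, namely restricting the domain to $\ker n_\beta$ and quotienting the target by $n^\alpha$, are exactly the operations built into the definition of $\bar R$ on $V$, so that $\delta\bar R$ is really the variation of the endomorphism $\bar R$ rather than of some unrelated object. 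This well-posedness was already secured when $\bar R$ was introduced, using $R_n(n)=0$ and $g_n(n,R_n(X))=0$ (Eqs.\ (\ref{mki}) and (\ref{mko})), so no further work is needed on that score.
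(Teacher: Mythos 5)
Your proposal is correct and follows essentially the same route as the paper: the paper likewise reads off from Eq.~(\ref{ful}) that every term is proportional to $v^\alpha$, $v_\beta$, $\delta^\alpha_\beta$ or $g_v(v,v)$, kills the first, second and fourth types by the quotient, the restriction to $\ker g_n(n,\cdot)$, and the null condition respectively, and concludes $\delta\bar{R}\propto \textrm{Id}$ so that the trace-free part is unchanged. Your explicit identification of $\lambda$ and the remark that $n$ remains lightlike for $\tilde g=\Omega^2 g$ (so the comparison is at the same support vector and over the same bundle $V$) are accurate refinements of the paper's terser argument.
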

The existence of this conformal invariant is quite remarkable  since
there is no known generalization of the Weyl conformal curvature tensor to (pseudo-) Finsler geometry \cite{knebelman29b,rund59,hashiguchi76,ikeda97,hojo00}. Of course, $\bar{C}$ cannot be defined in the positive definite case since  $v$ must be lightlike.

 Let us  set
\[I_\gamma= C^\alpha_{\alpha \gamma}\qquad  \textrm{ and }\qquad  J_\gamma= L^\alpha_{\alpha \gamma} , \]
 and recall the identity \cite[Eq.\ (52)]{minguzzi14c} $J_\beta =v^\gamma\nabla_\gamma^{HC} I_\beta$. Clearly, $C^\alpha_{\beta \gamma}$ and $I_\gamma$ are conformally invariant.
Taking the trace of (\ref{ful}) we obtain
\begin{align*}
\tilde{Ric}(v)&= Ric(v)-(n-1) v^\beta v^\alpha \ln \Omega_{, \beta\vert \alpha}-g_v(v,v) g^{\mu \nu}\ln \Omega_{,\mu \vert \nu}+(n-1)(\p_v \ln \Omega)^2\\
& \qquad -(n\!-\!1) g_v(v,v) g^{\mu \nu} \ln \Omega_{,\mu} \ln \Omega_{, \nu}-g_v(v,v)\Big\{2 \ln \Omega_{, \mu} J^\mu+I^\mu v^\nu \ln \Omega_{, \mu\vert \nu}\\
&\!\!\!\!\!\!\!\!\!\!\!\!\!\!\!\!\!\!\!\! +g_v(v,v)g^{\nu \mu} \frac{\p I^\delta}{\p v^\nu} \ln  \Omega_{,\delta} \ln \Omega_{,\mu} +g_v(v,v) C^{\alpha \mu}_{\gamma} C^{\gamma \nu }_{\alpha} \ln  \Omega_{,\mu} \ln \Omega_{,\nu}+2(\p_v \ln \Omega) I^\delta  \ln \Omega_{, \delta}\Big\}.
\end{align*}


The transformation rule for the Ricci scalar can be expressed in terms of the  horizontal Chern-Rund derivative of $\tilde{\mathscr{L}}$, denoted $\tilde{\vert}$ for short, by inverting the roles of $\tilde{\mathscr{L}}$ and $\mathscr{L}$ in the previous formula
\begin{align}
\tilde {Ric}(v)&= {Ric}(v)-\!(n\!-\!1) \Omega^{-1} \Omega_{,\alpha \tilde{\vert}\beta} v^\alpha v^\beta
+2 \tilde{\mathscr{L}}\{-\Omega^{-1}\Omega_{,\gamma \tilde{\vert} \delta} +n\Omega^{-2}\Omega_{,\gamma} \Omega_{,\delta} \} \tilde{g}_v^{\gamma\delta} \nonumber \\
&\qquad +2 \tilde{\mathscr{L}}\big\{ -2 \ln \Omega_{, \mu} \tilde{J}^\mu- \Omega^{-1}\tilde{I}^\mu v^\nu \Omega_{, \mu \tilde{\vert} \nu}+3 (\p_v \ln  \Omega) \tilde I^{\delta} \ln \Omega_{, \delta} \nonumber\\
& \qquad+2 \tilde{\mathscr{L}} \tilde{g}^{\nu \mu} \frac{\p \tilde{I}^\delta}{\p v^\nu} \ln \Omega_{,\delta} \ln \Omega_{,\mu}+2 \tilde{\mathscr{L}} \tilde{C}^{\alpha \mu}_\gamma \tilde{C}^{\gamma \nu}_\alpha \ln \Omega_{, \mu} \ln \Omega_{, \nu}  \big\} \ . \label{nua}
\end{align}
This formula will be useful in the next section.

\section{Asymptotics, singularities, differentiability}

Not so easy generalizations are all those which involve the curvature since, as we mentioned, there are several notable Finsler connections on a Finsler spacetime. Fortunately, for what concerns singularity theorems, curvature comes into play through the Raychaudhuri equation and hence only through the curvature of the non-linear connection.

Since some singularity theorems involve a notion of ``weakly asymptotically simple and empty spacetime'' or ``asymptotically flat'' spacetime it is necessary to define these notions in the Finsler case.

\begin{definition}
A Finsler spacetime $(M,\mathscr{L})$ is said to be {\em asymptotically simple} if there is a strongly causal Finsler spacetime $(\tilde M,\tilde{\mathscr{L}})$ and an imbedding $\theta\colon M\to \tilde M$, which imbeds $M$ as a manifold with smooth boundary $\p M$ in $\tilde M$, such that
\begin{itemize}
\item[(1)] there is a smooth (say $C^3$ at least) function $\Omega\colon\tilde M\to \mathbb{R}$ such that on $\theta(M)$, $\Omega$ is positive and $ \theta^* (\Omega^2) g=\theta^* \tilde g$,
\item[(2)] on $\p M$, $\Omega=0$ and $\dd \Omega \ne 0$,
\item[(3)] every null geodesic in $M$ has two endpoints in $\p M$,
\end{itemize}
further we say that it is {\em  asymptotically simple and empty} if it satisfies also\footnote{Physically, it is not clear whether $v$ should be better restricted to the future causal cone. }
\begin{itemize}
\item[(4)] $Ric(v)=0$ in an open neighborhood of $\p M$ in $\tilde M$.
\end{itemize}
\end{definition}


By Proposition \ref{sin} the geodesics reaching some point $p\in \p M$ transversally are necessarily complete as can be easily shown Taylor expanding $\Omega$ at $p$.


Multiplying Eq.\ (\ref{nua}) by $\Omega^2$ and noticing that $\tilde {Ric}$, $\tilde C$, $\tilde I$, $\tilde J$ are $C^1$ at $\p M$ where $\Omega=0$, we get for every non-lightlike $v\ne 0$, and hence, by continuity, for every $v\ne 0$
\begin{equation} \label{aax}
n\Omega_{,\gamma} \Omega_{,\delta}  \tilde{g}_v^{\gamma\delta}+2 \tilde{\mathscr{L}} \tilde{g}^{\nu \mu} \frac{\p \tilde{I}^\delta}{\p v^\nu}  \Omega_{,\delta} \Omega_{,\mu}+2 \tilde{\mathscr{L}} \tilde{C}^{\alpha \mu}_\gamma \tilde{C}^{\gamma \nu}_\alpha  \Omega_{, \mu}  \Omega_{, \nu} +3 (\p_v  \Omega) \tilde I^{\delta}  \Omega_{, \delta}  =0.
 \end{equation}
 This equation implies that $\ker \dd \Omega$ is a null hyperplane, that is, tangent to the Finsler future causal cone, and hence that $\p M$ is a null hypersurface. To see this let $v^\pm$ be the Legendre transform of $\pm \dd \Omega$: $g_{v^{\pm}}(v^\pm, \cdot)=\pm \dd \Omega$. Let $v=v^{\pm}$ in Eq.\ (\ref{aax})
 and observe that $\pm \tilde{g}^{\nu \mu} \Omega_{,\mu} =v^\nu$. Since  $\tilde{I}^\delta$ is positive homogenous of zero degree in $v$,  the second term vanishes. Similarly, the third and fourth terms vanish since $v^\mu$ annihilates the Cartan torsion. We are left with $\Omega_{,\gamma} \Omega_{,\delta}  \tilde{g}_v^{\gamma\delta}=0$ which states that $v^+$ and $v^-$ are lightlike. One of them is actually future directed \cite[Prop.\ 9]{minguzzi13c}.


Since $\p M$ is a null hypersurface, $M$ lies locally to the past or future of it. Thus $\p M$ consists of two disconnected components: $\mathscr{I}^+$ on which every null geodesic has its future endpoint, and  $\mathscr{I}^{-}$ on which every null geodesic has its past endpoint.

Now the usual theorems follow since proofs extend word for word from their Lorentzian versions \cite{hawking73} as they use topological arguments which do not depend on algebraic details. Thus

\begin{proposition}
Every four dimensional asymptotically simple and empty space is globally hyperbolic; furthermore, $\mathscr{I}^+$ and $\mathscr{I}^-$ are topologically $\mathbb{R}^1 \times S^{2}$, and $M$ is $\mathbb{R}^{4}$.
\end{proposition}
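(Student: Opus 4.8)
The plan is to reduce the statement entirely to its classical Lorentzian counterpart in Hawking and Ellis \cite{hawking73}, relying on the fact that by this point every ingredient of that proof has a Finslerian analogue with an identical statement. The classical argument uses only: (i) the existence of convex neighbourhoods and the resulting local behaviour of causal geodesics; (ii) the Finslerian limit curve theorems and the upper semicontinuity of the length functional established in Section \ref{hpd} and the preceding section; (iii) the achronal-boundary structure, namely that $\dot J^{\pm}(p)$ is generated by null geodesics; (iv) the conformal invariance of unparametrised null geodesics (Proposition \ref{sin}), which permits passing freely between $(M,\mathscr{L})$ and $(\tilde M,\tilde{\mathscr{L}})$; (v) the completeness of the null generators meeting $\p M$ transversally, established just above; and (vi) the fact, just proved via Eq.\ (\ref{aax}), that $\p M$ is a null hypersurface splitting into the two achronal pieces $\mathscr{I}^{+}$ and $\mathscr{I}^{-}$. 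The strategy is therefore to recall the two classical steps and verify that neither secretly invokes the symmetry of the cone structure.

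First I would establish global hyperbolicity. Strong causality of $(\tilde M,\tilde{\mathscr{L}})$ descends to $M=\theta(M)$, so $M$ is strongly causal, and it remains to show that $J^{+}(p)\cap J^{-}(q)$ is compact for every $p,q\in M$. As in the Lorentzian case one argues by contradiction: a sequence of causal curves from $p$ to $q$ escaping every compact set would, by the limit curve theorem, accumulate to an inextendible causal curve; but an inextendible causal curve in $M$ must acquire an endpoint on $\p M$ (every null generator terminates on $\mathscr{I}^{\pm}$, and by the achronal-boundary structure so does the limit), contradicting that it remains trapped between $p$ and $q$. Hence the causal diamonds are compact and $M$ is globally hyperbolic, and the splitting $M\cong\mathbb{R}\times\Sigma$ with $\Sigma$ a smooth spacelike Cauchy hypersurface follows from the global hyperbolicity splitting already obtained.

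Second I would identify the topology. Fixing $p\in M$, the set $\dot J^{+}(p)$ is an achronal null hypersurface generated by the complete future null geodesics issuing from $p$; by Proposition \ref{sin} these reach $\mathscr{I}^{+}$, and their endpoints trace out an embedded $2$-sphere (a \emph{null cut}), the embeddedness following from the strict convexity of the Finsler indicatrix and the local diffeomorphism property of the exponential map before the first focusing point, exactly as in Section \ref{hpd}. Letting $p$ run along a timelike line sweeps $\mathscr{I}^{+}$ by such cuts, giving $\mathscr{I}^{+}\cong\mathbb{R}\times S^{2}$, and symmetrically $\mathscr{I}^{-}\cong\mathbb{R}\times S^{2}$ using the past cone. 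Combining $M\cong\mathbb{R}\times\Sigma$ with the fact that $\Sigma$ is a non-compact simply connected $3$-manifold capped by a single asymptotic $2$-sphere at infinity forces $\Sigma\cong\mathbb{R}^{3}$, whence $M\cong\mathbb{R}^{4}$, precisely as in \cite{hawking73}.

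I expect the only genuinely delicate point to be the non-reversibility of the Finsler cone structure, i.e.\ that the future and past causal cones need not be related by $v\mapsto -v$. This is exactly where the two halves of the topological argument touch: in distinguishing $\mathscr{I}^{+}$ from $\mathscr{I}^{-}$ and in rerunning the null-cut construction for $\mathscr{I}^{-}$. The resolution is that the construction only ever uses the future cone to build $\dot J^{+}$ and the past cone to build $\dot J^{-}$, never their identification, so each half is self-contained and reversibility is never required; what must be checked, rather than assumed, is that the null cut is an embedded sphere, and that is the single genuinely geometric (as opposed to purely order-theoretic) step, secured by the convexity and exponential-map facts noted above.
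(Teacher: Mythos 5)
Your proposal follows essentially the same route as the paper, which simply observes that the Lorentzian proof of \cite{hawking73} carries over word for word once the Finslerian preliminaries (completeness of the transversal null geodesics, the null character of $\p M$ and its splitting into $\mathscr{I}^{\pm}$, convex neighbourhoods, and the limit curve theorems) are in place. Your more detailed accounting of the ingredients, and in particular the explicit check that non-reversibility of the cone structure is harmless because the future and past constructions never interact, is a useful elaboration of exactly the point the paper leaves implicit.
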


Since the notion of  asymptotically simple and empty spacetime is too restrictive we introduce as usual a modification.
We say that $(M,\mathscr{L})$ is {\em weakly asymptotically simple and empty} (WASE) if there is an asymptotically simple and empty space $(M',\mathscr{L}')$ and a neighborhood $U'$ of $\p M'$ in $M'$ such that $U'\cap M'$ is isometric to an open set $U$ of $M$.

These results should convince the reader that asymptotic concepts can be treated quite successfully in analogy with the Lorentzian case.

With these preliminaries and the results on the Raychaudhuri equation developed in the first sections the next result becomes clear (observe that it includes those theorems which use the averaged convergence conditions; their definition is clear and will not be repeated here).


\begin{proposition}
The singularity theorems by Penrose \cite{penrose65b}, Hawking \cite{hawking66,hawking67,hawking73}, Hawking and Penrose \cite{hawking70}, Geroch \cite{geroch67}, Gannon \cite{gannon75,gannon76},  Tipler \cite{tipler77,tipler78}, Borde \cite{borde87,borde94}, Kriele \cite{kriele89,kriele90},  and the author \cite{minguzzi07d}, but also  Friedman, Schleich  and Witt's Topological censorship theorem\cite{friedman93}, or results on the simply connectedness of the domain of outer  communication \cite{chrusciel94b}, or on the spherical shape of Black Holes \cite{galloway96b}, generalize word for word to the Finslerian domain.
\end{proposition}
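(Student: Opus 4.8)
The plan is to establish that the classical singularity theorems carry over by verifying that their Lorentzian proofs rest on exactly three kinds of ingredients, each of which has been made available in the Finslerian setting by the earlier sections. The strategy is therefore not to re-prove each theorem from scratch, but to audit the standard proofs and check that every tool they invoke has a Finslerian counterpart already in hand. The three families of ingredients are: (i) the purely causal/topological machinery (limit curve theorems, achronal and future sets, domains of dependence, Cauchy horizons, the causal ladder, global hyperbolicity, the Avez--Seifert connectedness result); (ii) the focusing/maximization interplay, i.e.\ the statement that past a conjugate or focusing point a causal geodesic fails to maximize the Lorentz--Finsler length; and (iii) the Raychaudhuri equation together with the convergence, genericity, and energy-averaged conditions that drive the focusing.

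First I would recall that family (i) is supplied by Section~\ref{hpd} and the ``useful generalizations'' section: limit curve theorems hold verbatim, the Lorentz--Finsler distance is lower semi-continuous and is finite and continuous on globally hyperbolic spacetimes, maximizing geodesics connect causally related points in the globally hyperbolic case, and the entire causal ladder generalizes. These are precisely the topological inputs to the Penrose, Hawking, and Hawking--Penrose arguments, and since they have been shown to transfer word for word, the global causal structure steps of each proof go through unchanged. Next I would invoke the central Proposition of Section~\ref{hpd}: its parts (a) and (b) establish that a null (resp.\ timelike) geodesic stops maximizing beyond a focusing point, while parts (c) and (d) guarantee, under the convergence condition, the \emph{existence} of focusing points once $\theta<0$ somewhere or on a compact transverse set. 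This is the engine that converts a curvature/energy hypothesis into a contradiction with maximality, and it is exactly the replacement for the Lorentzian focusing lemmas (Beem et al., Galloway) that the classical proofs use.

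Then, for family (iii), I would point to the Raychaudhuri equations~(\ref{ray}) and~(\ref{raw}) and the definitions of the timelike/null convergence and genericity conditions, noting that $\mathrm{Ric}(v)$ and the trace-free curvature endomorphisms $\bar{R}$, $\bar{Q}$, $\bar{C}$ depend only on the non-linear connection, so the energy conditions have an unambiguous Finslerian meaning independent of any choice of Finsler connection. The genericity condition is phrased so that timelike and null versions merge into a single causal condition, matching the Hawking--Penrose hypotheses. The trapped-surface notions (Definition of trapped, weakly trapped, marginally trapped) and the asymptotic-structure results of the preceding section supply the remaining hypotheses for Penrose, Gannon, and the topological censorship theorems. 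With these three families in place, each named theorem is reconstructed by substituting the Finslerian analogue for each Lorentzian ingredient in its original proof; because every substitution is either a verbatim transfer (causal/topological) or a proven Finslerian lemma (focusing, Raychaudhuri), no genuinely new argument is needed.

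The main obstacle I would flag is subtle rather than technical: one must be certain that \emph{no} step in any of these classical proofs secretly uses reversibility of the metric or a symmetric distance function, since Finsler spacetimes need not be reversible and $d(p,q)\neq d(q,p)$ in general. The safeguard is that every tool imported above was already stated and proved without assuming reversibility (the reverse Cauchy--Schwarz inequality, the one-sided distance, the time-asymmetric temporal functions), so time-orientation is respected throughout. A second, milder point to verify is that the convexity of the causal cone and the existence of convex neighborhoods—used implicitly wherever the local causal order or the local maximization of geodesics is invoked—are genuinely available, which they are by the cited results of \cite{minguzzi13c,minguzzi13d}. Once these two consistency checks are made, the word-for-word generalization is justified and the proof is complete.
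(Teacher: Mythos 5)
Your proposal is correct and follows essentially the same route as the paper, whose entire proof consists of the remark that, with the Raychaudhuri equation, the focusing/maximization proposition of Section~\ref{hpd}, the causal-topological machinery, and the asymptotic-structure preliminaries in place, ``the proofs coincide word for word with the Lorentzian ones.'' Your three-family audit (causal/topological tools, focusing versus maximization, Raychaudhuri plus convergence and genericity conditions) and your explicit check that no step relies on reversibility of the metric simply make explicit what the paper leaves implicit.
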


The proofs coincide word for word with the Lorentzian ones. In fact so many results generalize, e.g.\  also \cite{joshi81,joshi87}, that not all references can be included as the bibliography would become too long.

\section{Conclusions}

We have shown that many advanced results of causality theory including singularity theorems generalize to the Finslerian spacetime case. While it is intuitive that most topological arguments should generalize, one has to check more closely all those arguments which involve the curvature or tricky algebraic calculations. In fact, several connections exists in Finsler geometry, so  the generalization is not always uniquely determined or straightforward and, moreover, the curvatures do not satisfy all the symmetries which share their pseudo-Riemannian analogs (when such analogs can be identified).

Fortunately, singularity theorems involve the study of geodesic congruences, which are expected to depend only on the spray and its derived curvatures. In fact, we observed that only the curvature of the non-linear connection was really involved in the calculations. This fact meant a radical simplification since we didn't have to make a choice of {\em physical Finsler connection}, as the Finsler (linear) connection was only used as a tool in proofs.

Using this approach  we have been able to show that the Raychaudhuri equation generalizes, as do  its consequences for chronality. From here it becomes easy to check that the notable singularity theorems of Lorentzian geometry extend to the Lorentz-Finsler case word for word, both in statement and in proof. We have also included further results on Finsler causality which might help the reader to gain some confidence and familiarity with this theory.

In conclusion, the Lorentz-Finsler theory accomplishes and preserves results of physical significance that no other generalization of general relativity could possibly claim. This fact selects this theory as a serious  candidate for a modified gravitational theory.


\section*{Acknowledgments}   I thank E.\ Caponio for a useful comment on Beem et al.\ theorem on chronality \cite[Theor.\ 10.72]{beem96} . This work has been partially supported by GNFM of INDAM. 
\section*{References}

\end{document}